  \keywords{Session types, \and Concurrency, \and Subtyping, \and
    Algorithm.}
\DeclareMathAlphabet{\mathpzc}{OT1}{pzc}{m}{it}
 \newcommand{\MC}{-\textsc{mc}}
\newcommand{\eqdef}{\stackrel{\text{def}}{=}}
\newcommand{\img}[1]{\mathit{img}(#1)}
\newcommand{\qst}{\, \mid \,}
\newcommand{\dst}{.\ }
\newcommand{\eqdefi}{\overset{{\text{def}}}{=}}
\newcommand{\farrow}{\xmapsto{\quad}}
\newcommand{\lempty}{\epsilon}
\newcommand{\naturals}{\mathbb{N}}
\newcommand{\dual}[1]{\overline{#1}}
\newcommand{\Alpha}{\mathbb{A}}
\newcommand{\Alphas}{\Alpha^{\ast}}
\newcommand{\word}{\omega}
\newcommand{\worda}{\word}
\newcommand{\wordb}{\omega'}
\newcommand{\Act}{\mathit{Act}}
\newcommand{\Acts}{\Act^{\ast}}
\newcommand{\act}{\ell}
\newcommand{\acts}{\psi}
\newcommand{\actsa}{\acts}
\newcommand{\actsb}{\varphi}
\newcommand{\Actsnd}{\Act_!}
\newcommand{\Actrcv}{\Act_?}
\newcommand{\sends}[1]{\mathsf{snd}(#1)}
\newcommand{\receives}[1]{\mathsf{rcv}(#1)}
\newcommand{\dir}[1]{\mathit{dir}(#1)}
\newcommand{\msg}[1]{\mathit{#1}}
\newcommand{\snd}[1]{!\msg{#1}}
\newcommand{\rcv}[1]{?\msg{#1}}
\newcommand{\trans}[1]{\xrightarrow{#1}}
\newcommand{\transC}{\xrightarrow{}^{\ast}}
\newcommand{\ntrans}{\nrightarrow}
\newcommand{\simtreetrans}[1]{\xhookrightarrow{#1}}
\newcommand{\States}{Q}
\newcommand{\state}{q}
\newcommand{\statea}{p}
\newcommand{\stateb}{\state}
\newcommand{\Trans}{\delta}
\newcommand{\sndL}[1]{! #1}
\newcommand{\rcvL}[1]{? #1}
\DeclareMathOperator{\nincsymbol}{\preccurlyeq}
\DeclareMathOperator{\nnincsymbol}{\not\preccurlyeq}
 \newcommand{\minc}[2]{ #1 \nincsymbol #2}
\newcommand{\var}[1]{\mathbf{#1}}
\newcommand{\loopio}[2]{\mathsf{cycle}(#1,#2)}
\newcommand{\acctree}[2]{\mathsf{accTree}(#1,#2)}
\newcommand{\closesthole}[3]{\mathsf{minAcc}(#1,#2,#3)}
\newcommand{\node}[1]{\mathtt{#1}}
\newcommand{\mypar}[1]{\noindent{\bf #1}}
\newcommand{\simtree}[2]{\mathsf{simtree}(#1,#2)}
\newcommand{\simpair}[2]{#1 \nincsymbol #2}
\newcommand{\simtreepair}[2]{#1 \nincsymbol #2}
\newcommand{\ctx}[1]{\mathcal{#1}}
\newcommand{\echoice}[2]{\langle #1 \rangle_{#2}}
\newcommand{\schoice}[2]{\langle #1 \rangle_{#2}}
\newcommand{\echset}[2]{\langle #1 \qst #2 \rangle}
\newcommand{\schset}[2]{\langle #1 \qst #2 \rangle}
\newcommand{\inTrans}[1]{\mathsf{in}(#1)}
\newcommand{\outTrans}[1]{\mathsf{out}(#1)}
\newcommand{\intree}[1]{\mathsf{inTree}(#1)}
\newcommand{\labfun}{\mathcal{L}}
\newcommand{\ctxtQ}{\mathcal{T}_Q}
\newcommand{\ctxtQtwo}{\mathcal{T}_{Q_2}}
\newcommand{\minHeight}[1]{\mathsf{minHeight}(#1)}
\newcommand{\height}[2]{\mathsf{height}_{#2}(#1)}
\newcommand{\extract}[2]{\mathsf{extract}(#1,#2)}
\newcommand{\DEF}{\mathbf{E}}
\newcommand{\anc}{\mathsf{anc}}
\newcommand{\ancestor}[1]{\anc(#1)}
\newcommand{\subtree}[3]{\mathsf{nodes}({#1,#2,#3})}
\newcommand{\exHQ}{\msg{nd}}
\newcommand{\exLQ}{\msg{pr}}
\newcommand{\exOK}{\msg{ok}}
\newcommand{\exKO}{\msg{ko}} 
\newcommand{\client}{M_{R}}
\newcommand{\clientsup}{M_C}
\newcommand{\server}{M_S}
\tikzset{
every state/.style={minimum size=1pt,inner sep=1.5pt, initial text={}},
  mycfsm/.style={
    font=\scriptsize,
    initial where=left,
    initial distance=0.25cm,
    ->,>=stealth,auto, node distance=0.8cm and 0.8cm,
    scale=1, every node/.style={transform shape},
    baseline=(current  bounding  box.center)
  },
  ogate/.style = {
    diamond, draw, fill=white,
    minimum size=4mm,
    inner sep=0pt,
    postaction={path picture={\draw[black]
        ([yshift=\gatedistancein]path picture bounding box.south) -- ([yshift=-\gatedistancein]path picture bounding box.north)
        ([xshift=-\gatedistancein]path picture bounding box.east) -- ([xshift=\gatedistancein]path picture bounding box.west)
        ;}}, drop shadow},
  agate/.style={draw,rectangle,
    minimum size=3mm,
    inner sep=0pt,
    fill=white,
    postaction={path picture={\draw[black]
        ([yshift=\gatedistanceinand]path picture bounding box.south) --
        ([yshift=-\gatedistanceinand]path picture bounding box.north) ;}}, drop shadow},
  source/.style={draw,circle,fill=white,
    minimum size=3mm,
    inner sep=0pt, drop shadow},
  sink/.style={draw,circle,double,fill=white,
    minimum size=3mm,
    inner sep=0pt, drop shadow},
  intera/.style = {rectangle, draw=black, align=center, fill=white, rounded corners=0.1cm,
    minimum height=12,
    inner sep=2pt, drop shadow},
  line/.style = {draw,->, rounded corners=0.07cm,>=latex},
  venn/.style={preaction={fill, #1},opacity=0.6},
  cnode/.style={rectangle,draw=black,inner sep=2pt},
  ancestor/.style={densely dashed,->, shorten >=1pt},
  hookedsilentedge/.style={>=latex,right hook->,shorten <=1pt,shorten >=1pt},
  lhookedsilentedge/.style={>=latex,left hook->,shorten <=1pt,shorten >=1pt},
  silentedge/.style={>=latex,->},
  nlabel/.style={fill=white,inner sep=0pt,font=\footnotesize},
  notexplo/.style={fill=gray!10},
  echnode/.style={rectangle,draw=black,inner sep=2pt},
  schnode/.style={diamond,draw=black,inner sep=0pt},
}
\newcommand{\varX}{\var{x}}
\newcommand{\varY}{\var{y}}
\DeclareMathOperator{\mysum}{\with}
\newcommand{\inchoicetop}{\oplus} \newcommand{\outchoicetop}{\mysum}
\newcommand{\rec}[1]{\mathtt{\mu}\, \var{#1} . }
\newcommand{\inference}[3]{\infer[\ifthenelse{\equal{#1}{}}{}{\inferrule{#1}}]{#3}{#2}}
\newcommand{\coinference}[3]{\infer=[\ifthenelse{\equal{#1}{}}{}{\inferrule{#1}}]{#3}{#2}}
\newcommand{\leftcolor}[1]{{\color{blue} #1}}
\newcommand{\rightcolor}[1]{{\color{red} #1}}
\newcommand{\pairconf}[2]{\leftcolor{#1} \nincsymbol \rightcolor{#2}}
\newcommand{\treeconf}[2]{
  \begin{tikzpicture}[node distance=0cm and -0.1cm]
    \node (right) {#2};
    \node[left=of right] (symbol) {$\nincsymbol$};
    \node[left=of symbol] (left)  {\leftcolor{#1}};
  \end{tikzpicture}
}
\newcommand{\aroundtree}[1]{
  \begin{tikzpicture}
    \tikzset{edge from parent/.style=
      {draw,
        edge from parent path={(\tikzparentnode.south)
          -| (\tikzchildnode)}},
      level 1/.style={level distance=13pt},
      level 2+/.style={level distance=13pt},
}
{#1}
  \end{tikzpicture}
}
\newcommand{\treeleaf}{\rightcolor{$q_2$}}
\newcommand{\treedone}{
  \aroundtree{
    \Tree 
    [
    .\,
    [.$\exKO$ \treeleaf{} ] 
    [.$\exOK$ \treeleaf{} ] 
    ]
  }
}
\newcommand{\treedtwo}{
  \aroundtree{
    \Tree 
    [
    .\,
    [.$\exKO$     
    [.$\exKO$ \treeleaf{} ] 
    [.$\exOK$ \treeleaf{} ]  
    ] 
    [.$\exOK$    
    [.$\exKO$ \treeleaf{} ] 
    [.$\exOK$ \treeleaf{} ]  
    ] 
    ]
  }
}
\newcommand{\treedthree}{
  \aroundtree{    
    \Tree 
    [
    .\,
    [.$\exKO$     
    [.$\exKO$ 
    [.$\exKO$ \treeleaf{} ] 
    [.$\exOK$ \treeleaf{} ] 
    ] 
    [.$\exOK$ 
    [.$\exKO$ \treeleaf{} ] 
    [.$\exOK$ \treeleaf{} ] 
    ]  
    ] 
    [.$\exOK$   
    [.$\exKO$ 
    [.$\exKO$ \treeleaf{} ] 
    [.$\exOK$ \treeleaf{} ] 
    ] 
    [.$\exOK$
    [.$\exKO$ \treeleaf{} ] 
    [.$\exOK$ \treeleaf{} ] 
    ]  
    ] 
    ]
  }
}
\newcommand{\btreedone}{
  \aroundtree{
    \Tree 
    [
    .\,
    [.$a$ \qoneleaf{} ] 
    [.$b$ \qthreeleaf{} ] 
    ]
  }
}
\newcommand{\btreetwo}{
  \aroundtree{
    \Tree 
    [
    .\,
    [.$a$
    [.$a$ \qoneleaf{} ] 
    [.$b$ \qthreeleaf{} ]  
    ] 
    [.$b$ \qthreeleaf{} ] 
    ]
  }
}
\newcommand{\btreedthree}{
  \aroundtree{
    \Tree 
    [
    .\,
    [.$a$
    [.$a$
    [.$a$ \qoneleaf{} ] 
    [.$b$ \qthreeleaf{} ]  
    ] 
    [.$b$ \qthreeleaf{} ]  
    ] 
    [.$b$ \qthreeleaf{} ] 
    ]
  }
}
\newcommand{\btreedfour}{
  \aroundtree{
    \Tree 
    [
    .\,
    [.$a$
    [.$a$
    [.$a$
    [.$a$ \qoneleaf{} ] 
    [.$b$ \qthreeleaf{} ]  
    ] 
    [.$b$ \qthreeleaf{} ]  
    ] 
    [.$b$ \qthreeleaf{} ]  
    ] 
    [.$b$ \qthreeleaf{} ] 
    ]
  }
}
\newcommand{\btreedfive}{
  \aroundtree{
    \Tree 
    [
    .\,
    [.$a$
    [.$a$
    [.$a$
    [.$a$
    [.$a$ \qoneleaf{} ] 
    [.$b$ \qthreeleaf{} ]  
    ] 
    [.$b$ \qthreeleaf{} ]  
    ] 
    [.$b$ \qthreeleaf{} ]  
    ] 
    [.$b$ \qthreeleaf{} ]  
    ] 
    [.$b$ \qthreeleaf{} ] 
    ]
  }
}
\newcommand{\qoneleaf}{\rightcolor{$q_1$}}
\newcommand{\qthreeleaf}{\rightcolor{$q_3$}}
\newcommand{\pvarroot}{X_{0}}
\newcommand{\pvar}[2]{X_{q_{#1},n_{#2}}}
\newcommand{\cvar}[1]{Y_{n_{#1}}}
\begin{document}
\title[Sound Algorithm for Asynchronous Session Subtyping]
      {A Sound Algorithm for Asynchronous Session Subtyping and its Implementation}

\author[M.~Bravetti]{Mario Bravetti\rsuper{a}}
\address{\lsuper{a}University of Bologna / INRIA FoCUS Team}
\email{\{mario.bravetti,gianluigi.zavattaro\}@unibo.it}{}{}

\author[M.~Carbone]{Marco Carbone\rsuper{b}}
\address{\lsuper{b}IT University of Copenhagen}
\email{carbonem@itu.dk}{}{}

\author[J.~Lange]{Julien Lange\rsuper{c}}
\address{\lsuper{c}Royal Holloway, University of London}
\email{julien.lange@rhul.ac.uk}{}{}

\author[N.~Yoshida]{Nobuko Yoshida\rsuper{d}}
\address{\lsuper{d}Imperial College London}
\email{n.yoshida@imperial.ac.uk}

\author[G.~Zavattaro]{Gianluigi Zavattaro\rsuper{a}}

\begin{abstract}
  Session types, types for structuring communication between endpoints
  in concurrent systems, are recently being integrated into mainstream
  programming languages.
In practice, a very important notion for dealing with such types is
  that of subtyping, since it allows for typing larger classes of
  systems, where a program has not precisely the expected behavior but
  a similar one. Unfortunately, recent work has shown that subtyping
  for session types in an asynchronous setting is undecidable. To cope
  with this negative result, the only approaches we are aware of
  either restrict the syntax of session types or limit communication
  (by considering forms of bounded asynchrony).
Both approaches are too restrictive in practice, hence we proceed
  differently by presenting an algorithm for checking subtyping which
  is sound, but not complete (in some cases it terminates without
  returning a decisive verdict).
The algorithm is based on a tree representation of the coinductive
  definition of asynchronous subtyping; this tree could be infinite,
  and the algorithm checks for the presence of finite witnesses of
  infinite successful subtrees.
Furthermore, we provide a tool that implements our algorithm.
We use this tool to test our algorithm on many examples that cannot
  be managed with the previous approaches, and to provide an empirical
  evaluation of the time and space cost of the algorithm.
\end{abstract}
\maketitle

\section{Introduction}\label{sec:introduction}
Session types are behavioural types that specify the structure of
communication between the endpoints of a system or the processes of a
concurrent program.
In recent years, session types have been integrated into several
mainstream programming languages (see,
e.g.,~\cite{HuY16,Padovani17,SY2016,LindleyM16,OrchardY16,AnconaEtAl16,NHYA2018})
where they specify the pattern of interactions that each endpoint must
follow, i.e., a communication protocol.
The notion of duality is at the core of theories based on session
types, where it guarantees that each send (resp.\ receive) action is
matched by a corresponding receive (resp.\ send) action, and thus
rules out deadlocks \cite{BoerBLZ18} and orphan messages.
A two-party communication protocol specified as a pair of session
types is ``correct'' (deadlock free, etc) when these types are dual of
each other.
Unfortunately, in practice, duality is a too strict prerequisite,
since it does not provide programmers with the flexibility necessary
to build practical implementations of a given protocol.
A natural solution for relaxing this rigid constraint is to adopt a
notion of (session) subtyping which lets programmers implement
refinements of the specification (given as a session type).  In
particular, an endpoint implemented as program $P_2$ with type $M_2$
can always be safely replaced by another program $P_1$ with type $M_1$
whenever $M_1$ is a subtype of $M_2$ (written $M_1 \nincsymbol M_2$ in
this paper).

The two main known notions of subtyping for session types differ in
the type of communication they support: either synchronous
(rendez-vous) or asynchronous (over unbounded FIFO channels).
\emph{Synchronous session subtyping} checks, by means of a so-called subtyping simulation game,
that the subtype implements fewer internal choices (sends), 
and more external choices (receives), than its supertype. 
Hence checking whether two types are
related can be done efficiently (quadratic time wrt.\ the size of the
types~\cite{LangeY16}).
Synchronous session subtyping is of limited interest in modern
programming languages such as Go and Rust, which provide
\emph{asynchronous} communication over channels. 
Indeed, in an asynchronous setting, the programmer needs to be able to
make the best of the flexibility given by non-blocking send actions.
This is precisely what the \emph{asynchronous session subtyping}
offers: it widens the synchronous subtyping relation by allowing the
subtype to anticipate send (output) actions, when this does not affect
its communication partner, i.e., it will notably execute all required
receive (input) actions later.
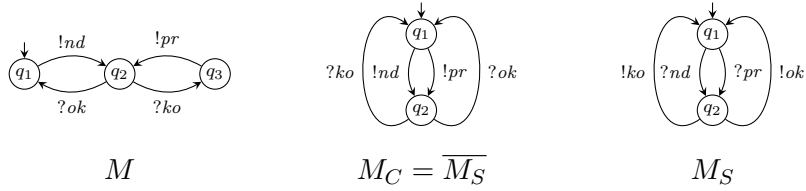
\begin{figure}[t]
  \centering
  \begin{tabular}{c@{\qquad\quad}c@{\qquad\quad}c}
    \begin{tikzpicture}[mycfsm, node distance = 1cm and 1cm
      ,scale=0.85, every node/.style={transform shape}]
      \node[state, initial, initial where=above] (s1) {$q_1$};
      \node[state, right =of s1] (s2) {$q_2$};
      \node[state, right=of s2] (s3) {$q_3$};
\path
      (s1) edge [bend left] node [above] {$\snd{\exHQ}$} (s2)
      (s2) edge [bend left] node {$\rcv{\exOK}$} (s1)
      (s2) edge [bend right] node [below] {$\rcv{\exKO}$} (s3)
      (s3) edge [bend right] node [above] {$\snd{\exLQ}$} (s2)
      ;
    \end{tikzpicture}
    &
      \begin{tikzpicture}[mycfsm, node distance = 0.7cm and 1cm
      ,scale=0.85, every node/.style={transform shape}]
        \node[state, initial, initial where=above] (s1) {$q_1$};
        \node[state, below =of s1] (s2) {$q_2$};
\path 
        (s1) edge [bend right=25] node [left] {$\snd{\exHQ}$} (s2)
        (s1) edge [bend left=25] node [right] {$\snd{\exLQ}$} (s2)
        (s2) edge [bend right=125,looseness=2] node [right] {$\rcv{\exOK}$} (s1)
        (s2) edge [bend left=125,looseness=2] node [left] {$\rcv{\exKO}$} (s1)
        ;
      \end{tikzpicture}
    &
      \begin{tikzpicture}[mycfsm, node distance = 0.7cm and 1cm
        ,scale=0.85, every node/.style={transform shape}]
        \node[state, initial, initial where=above] (s1) {$q_1$};
        \node[state, below =of s1] (s2) {$q_2$};
\path
        (s1) edge [bend right=25] node [left] {$\rcv{\exHQ}$} (s2)
        (s1) edge [bend left=25] node [right] {$\rcv{\exLQ}$} (s2)
        (s2) edge [bend right=125,looseness=2] node [right] {$\snd{\exOK}$} (s1)
        (s2) edge [bend left=125,looseness=2] node [left] {$\snd{\exKO}$} (s1)
        ;
      \end{tikzpicture}

    \\[-3pt]
    $M$  & $\clientsup = \dual{\server}$ & $\server $  
  \end{tabular}
  \caption{Hospital Service example. $M$ is the (refined) session
    type of the client, $\clientsup$ is a supertype of the client $M$, and
    $\server$ is the session type of the
    server.}\label{fig:runex-machines}
\end{figure}

\begin{figure}[t]
  \centering
  \begin{tabular}{c}
    \begin{tikzpicture}[mycfsm, node distance = 0.7cm and 1.3cm
      ,scale=0.85, every node/.style={transform shape}]
      \node[state, initial, initial where=above] (s1) {$q_1$};
      \node[state, below =of s1] (s2) {$q_2$};
      \node[state, right=of s2] (s3) {$q_3$};
\path 
      (s1) edge [bend left] node {$\snd{\exHQ}$} (s2)
      (s2) edge [bend left] node {$\rcv{\exOK}$} (s1)
      (s2) edge node [below] {$\rcv{\exKO}$} (s3)
      (s3) edge [bend right] node [above] {$\snd{\exLQ}$} (s1)
      ;
    \end{tikzpicture}
    \\[1pt]
    $\client$ \end{tabular}
  \caption{Refined Hospital Service client.  $\client$ is an
    asynchronous subtype of $\clientsup$, i.e., a refined session type of the
    Hospital Service client.}\label{fig:runex-machines2}
\end{figure}
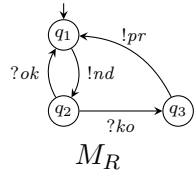

 We illustrate the salient points of the asynchronous session subtyping
with Figures~\ref{fig:runex-machines} and~\ref{fig:runex-machines2}, which
depict the hypothetical session types of the client and server
endpoints of a Hospital Service, represented as communicating machines
--- an equivalent formalism~\cite{cfsm83, DenielouY12}, see
Figure~\ref{fig:running-sessiontypes}.
Let us consider Figure~\ref{fig:runex-machines} first.
Machine $\server$ (right) is a server which can deal with two types of
requests: it can receive either a message $\exHQ$ (next patient data) or a
message $\exLQ$ (patient report).
After receiving a message of either type, the server replies with
$\exOK$ or $\exKO$, indicating whether the evaluation of received data
was successful or not, then it returns to its starting state.
Machine $\clientsup$ (middle) represents the type of the client. It is
the \emph{dual} of the server $\server$ (written $\dual{\server}$), as
required in standard two-party session types without subtyping.
A programmer may want to implement a slightly improved program which
behaves as Machine $M$ (left).
This version starts by sending $\exHQ$, then keeps sending patient
reports ($\exLQ$) until the previously sent data are deemed
satisfactory (it receives $\exOK$).
In fact, machine $M$ is a \emph{synchronous} subtype of machine
$\clientsup$, because of the covariance of outputs, i.e., $M$ is a
subtype of $\clientsup$, hence it can send fewer messages. Note that
$M$ can receive the same messages as $\clientsup$.
Machine $\client$ in Figure~\ref{fig:runex-machines2} is another
refinement of Machine $\clientsup$, but $\client$ is not a synchronous
subtype of $\clientsup$.
Instead, $\client$ is an \emph{asynchronous} subtype of
$\clientsup$.
Indeed, $\client$ is able to receive the same set of messages as
$\clientsup$, each of the sent messages are also allowed by
$\clientsup$, and the system consisting of the parallel composition of
machines $\client$ and $\server$ communicating via unbounded FIFO channels
is free from deadlocks and orphan messages.
We will use this example ($\client \nincsymbol \clientsup$) in the
rest of the paper to illustrate our theory.
Figure~\ref{fig:running-sessiontypes} gives the session types
corresponding to the machines in Figures~\ref{fig:runex-machines}
and~\ref{fig:runex-machines2}, where $\outchoicetop$ indicates an
external choice and $\inchoicetop$ indicates an internal choice.

Recently, we have proven that checking whether two types are in the
asynchronous subtyping relation is, unfortunately,
\emph{undecidable}~\cite{BravettiZ20,BCZ18,LangeY17,BravettiCZ17}.
In order to mitigate this negative result, some theoretical algorithms
have been proposed for restricted subclasses of session types.
These restrictions can be divided into two main categories:
syntactical restrictions, i.e., allowing only one type of non-unary
branching (internal or external choice), or adding bounds on the
number of pending messages in FIFO communication channels.
Both types of restrictions are problematic in practice.
Syntactic restrictions disallow protocols featuring both types of
internal/external choices, e.g., the machines $\clientsup$ and
$\server$ in Figure~\ref{fig:runex-machines} contain (non-unary)
external and internal choices.
On the other hand, applying a bound to the subtyping relation is
generally difficult because ($i$) it is generally undecidable whether
such a bound exists, ($ii$) the channel bounds used in the
implementation (if any) might not be known at compile time, and
($iii$) very simple systems, such as the one consisting of the
parallel composition of machines $\client$ and $\server$ discussed
above, require unbounded communication channels.

{\em 
The main contribution of this paper is to give a sound algorithm
for checking asynchronous session subtyping that does not impose 
syntactical restrictions nor bounds as done in previous works.}

\smallskip
\mypar{Overview of our approach.} Our approach will allow to
algorithmically check the subtyping between session types like
$\client$ and $\clientsup$.
In a nutshell, our algorithm proceeds as follows. We play the
classical subtyping simulation game with the subtype and supertype
candidates.
The game terminates when we encounter a {\em failure}, meaning that
the two types are not in the subtyping relation, or when we detect a
{\em repetitive} behaviour in the game.
In the latter case, we check whether this repetitive behaviour (which
can always be found) satisfies sufficient conditions that guarantee
that the subtyping simulation game will never encounter failures. If
the conditions are satisfied the algorithm concludes that the two
types are in the subtyping relation, otherwise no final verdict is
returned.

More precisely, session subtyping is defined following a coinductive
approach (Definition~\ref{def:subtyping}) that formalises a check on
the types that can be intuitively seen as a game.
At each step of the game, the candidate subtype proposes a challenge
(either an input or an output action to be executed) and the candidate 
supertype is expected to reply by performing a corresponding action.
The game ends in two possible ways: either both types terminate by
reaching their end state ({\em success}) or the candidate supertype is
unable to reply to the challenge ({\em failure}).
In case of failure, the two types are not in the subtyping relation,
otherwise they are.  This game is the so-called {\em subtyping simulation game}, and we
formally represent it as a {\em simulation tree}
(Definition~\ref{def:simtree}).
Hence two types are in the subtying relation if and only if their
simulation tree does not reach a failure (Theorem
\ref{thm:equivalence}).

Recall that asynchronous session subtyping allows the subtype to
anticipate output actions wrt.\ the supertype.
Hence, during the subtyping simulation game, a supertype can
reply to an output challenge by considering outputs that are not
immediately available, but are guarded by inputs.
These inputs cannot be forgotten during the game, because they could
be necessary to reply to subsequent input challenges. Thus, they are
recorded in so-called {\em input trees} (Definition
\ref{def:inputTree}).
Due to outputs inside loops, we can accumulate an unbounded
amount of inputs, thus generating input trees of unbounded depth.
For this reason, it is generally not possible to algorithmically
compute the entire simulation tree. To overcome this problem, we
propose a termination condition that intuitively says that the
computation of the simulation tree can be stopped when we reach a
point in the game that precisely corresponds to a previous point, or
differs simply because ``more'' inputs have been accumulated (Theorem
\ref{thm:termination}).

Using this termination condition, we compute a finite prefix of the
simulation tree. Given this finite tree, our algorithm proceeds as
follows: ($i$) it extracts special subtrees, called {\em candidate
  subtrees}, from the tree (Definition \ref{def:candidate}), and then
($ii$) checks whether all these subtrees satisfy certain properties
guaranteeing that, even if we have stopped the game, it would
certainly continue without reaching a failure. This is guaranteed if we have stopped the computation of the simulation tree
by reaching an already considered point, because subsequent
continuations of the game will continue repeating the exact same
steps.  In contrast, if we have stopped with ``more'' inputs, we must
have the guarantee that all possible continuations of the simulation
game cannot be negatively affected by these additional input
accumulations.
We formalise a sufficient condition on candidate subtrees (that
are named {\em witness trees} when they satisfy such a condition, see
Definition~\ref{def:witness}) that provides such a guarantee.

Concretely we use input tree equations (a sort of context-free tree
grammar, see Definition~\ref{def:equations}) to finitely represent
both the possible inputs of the candidate subtype and the inputs that
can be accumulated by the candidate supertype.
We then define a {\em compatibility} relation on input
tree equations, see Definition~\ref{def:compatibility}.
In a witness tree we impose that the input tree equations of the inputs
accumulated by the candidate supertype are compatible with those of the 
candidate subtype.
This implies that the candidate supertype will be always
ready to reply to all possible input challenges of the candidate 
subtype, simply by considering already
accumulated inputs (see our main Theorem \ref{thm:soundess}).
If all the candidate subtrees satisfy our sufficient conditions we can
conclude that the two initial session types are in the subtyping
relation, otherwise the algorithm replies with ``I don't know'' meaning
that it is not possible to conclude with a final verdict.

\begin{figure}
\[  
  \begin{array}{lcl}
    M  & = & \rec{x}
             \inchoicetop
                  \Big\{
                  \exHQ\!: \ \rec{y} \outchoicetop \big\{\
                             {ok}\!: \varX, \ \ {ko}\!:
                               \inchoicetop\{{\exLQ}\!: \varY \}
                               \ \big\}
                  \ \Big\}
    \\[2mm]
    \client & = & \rec{x}
             \inchoicetop
                  \Big\{
                  \exHQ\!: \outchoicetop \big\{\
                             {ok}\!: \varX, \ \ {ko}\!:
                               \inchoicetop\{{\exLQ}\!: \varX \}
                               \ \big\}
                  \ \Big\}
    \\[2mm]
    \clientsup & = & \rec{x}
             \inchoicetop
                  \big\{
                  \exHQ\!: \outchoicetop \{\
                     {ok}\!: \varX, \ \
                     {ko}\!: \varX
                               \ \},\quad 
                  \exLQ\!: \outchoicetop \{\
                     {ok}\!: \varX, \ \
                     {ko}\!: \varX
                               \ \}
                  \ \big\}
    \\[1mm]
    \server & = & \rec{x}
                  \outchoicetop
                  \big\{
                  {\exHQ}\!: \inchoicetop \{{ok}\!: \varX, \ {ko}\!: \varX \}, \quad
                  {\exLQ}\!: \inchoicetop \{{ok}\!: \varX, \ {ko}\!: \varX \}
                  \big\}
  \end{array}
\]
\caption{Session types corresponding to the machines in
  Figures~\ref{fig:runex-machines} and~\ref{fig:runex-machines2}.}
\label{fig:running-sessiontypes}
\end{figure}

\subsection{Structure of the paper}
The remainder of the paper is structured as follows.
\S~\ref{sec:preliminaries} reports some preliminary definitions,
namely the formalisation of session types as communicating machines
and the definition of asynchronous session subtyping.
Our approach for a sound algorithmic characterisation
of asynchronous session subtyping is presented in
\S~\ref{sec:algorithm}.
We also discuss in \S~\ref{sec:tool} a full implementation of our algorithm;
this has been used to 
test our approach on many examples that cannot be
managed with the previous approaches, and to provide an empirical
evaluation of the time and space cost of the algorithm.
Finally, the paper includes a discussion about related work in
\S~\ref{sec:related} and some concluding remarks in \S~\ref{sec:conclusions}.

This article is a full version of~\cite{BravettiCLYZ19}, with improved
presentation, refined definitions, detailed proofs and additional
examples.
Moreover, this version presents an empirical evaluation of
our algorithm: we tested the implementation of our algorithm on
automatically generated session types, see \S~\ref{sec:tool}. We have
also given an expanded discussion of related work and possible
extensions that can be addressed in the future, see
\S~\ref{sec:related} and \S~\ref{sec:conclusions}.

\section{Communicating Machines and Asynchronous Subtyping}\label{sec:preliminaries}
In this section,
we recall the definition of two-party communicating machines, that
communicate over unbounded FIFO channels (\S~\ref{sub:cfsms}), and
define asynchronous subtyping for session
types~\cite{LMCSasync,CDY2014}, which we adapt to communicating
machines, following~\cite{BCZ18} (\S~\ref{sub:asyncsub}).

\subsection{Communicating Machines}\label{sub:cfsms}
Let $\Alpha$ be a (finite) alphabet, ranged over by $a$, $b$, etc.
We let $\worda$, $\wordb$, etc.\ range over words in $\Alpha^{\ast}$.
The set of send (resp.\ receive) actions is
$\Actsnd = \{!\} \times \Alpha$, (resp.\
$\Actrcv = \{?\} \times \Alpha$).
The set of actions is $\Act = \Actsnd \cup \Actrcv$, ranged over by
$\act$, where a send action $\snd{a}$ puts message $a$ on an
(unbounded) buffer, while a receive action $\rcv{a}$ represents the
consumption of $\msg{a}$ from a buffer.
We define $\dir{\snd{a}} \eqdefi \, !$ and $\dir{\rcv{a}} \eqdefi \, ?$
and let $\actsa$ and $\actsb$ range over $\Acts$.
We write $\cdot$ for the concatenation operator on words and we write
$\epsilon$ for the empty word (overloaded for $\Alpha$ and
$\Alpha^{\ast}$).

In this work, we only consider communicating machines which correspond
to (two-party) session types. Hence, we focus on deterministic
(communicating) finite-state machines, without mixed states (i.e.,
states that can fire both send and receive actions) as
in~\cite{DenielouY12,DY13}.
\begin{defi}[Communicating Machine]\label{def:det-cfsm}
  A communicating machine $M$ is a tuple
  $(\States, \state_0, \Trans)$ where $\States$ is the (finite) set of
  states, $\state_0 \in \States$ is the initial state, and
  $\Trans \in \States \times \Act \times \States$ is a transition
  relation.
We further require that
  $\forall \state, \state' , \state'' \in \States \dst
  \forall \act, \act' \in \Act :$
\begin{enumerate}
  \item \label{it:directed}
$ (\state, \act, \state'),
    (\state, \act', \state'') \in \Trans$ implies $\dir{\act} = \dir{\act'}$,
and 
\item \label{it:deter}
    $
    (\state, \act, \state') , (\state, \act, \state'')
    \in \Trans$ implies $\state' = \state''
    $.
\end{enumerate}
We write $q \trans{\act} q'$ for $(q, \act, q') \in \delta$, omit
  unnecessary labels, and write $\transC$ for the reflexive transitive
  closure of $\trans{}$.
\end{defi}
Condition~\eqref{it:directed} requires all states to be directed, while
Condition~\eqref{it:deter} enforces determinism, i.e., all actions outgoing from a given state are pairwise distinct.

Given $M = (\States, \state_0, \Trans)$,
we say that $\state \in \States$ is \emph{final}, written $\state \ntrans$, iff
$\forall q' \in \States \dst \forall \act \in \Act \dst (q, \act, q') \notin \Trans$.
A state $\state \in \States$ is \emph{sending} (resp.\
\emph{receiving}) iff $\state$ is not final and $\forall q' \in
\States \dst \forall \act \in \Act \dst (q, \act, q') \in \Trans \dst
\dir{\act} = \, !$ (resp.\ $ \dir{\act} = \, ?$).
We use $\delta(\state,\act)$ to stand for $\state'$ such that
$(\state, \act, \state') \in \delta$.

We write $\state_0 \trans{\act_1 \cdots \act_k} \state_k$ iff there
are $\state_1, \ldots , \state_{k-1} \in \States$ such that
$\state_{i-1} \trans{\act_i} \state_i$ for $1 \leq i \leq k$.
Given a list of messages $\word = \msg{a}_1 \cdots \msg{a}_k$ ($k \geq
0$), we write $\rcvL{\word}$ for the list $\rcv{a}_1 \cdots \rcv{a}_k$
and $\sndL{\word}$ for $\snd{a}_1 \cdots \snd{a}_k$.

Given $\acts \in \Acts$ we define $\sends{\acts}$ and
$\receives{\acts}$:
\[
  \sends{\acts} =
  \begin{cases}
    \msg{a} \cdot \sends{\acts'}
    & \text{if } \acts = \snd{a} \cdot \acts'
    \\
    \sends{\acts'}
    & \text{if } \acts = \rcv{a} \cdot \acts'
    
    \\
    \epsilon & \text{if } \acts = \epsilon
  \end{cases}
  \qquad  \qquad
  \receives{\acts} =
  \begin{cases}
    \msg{a} \cdot \receives{\acts'}
    & \text{if } \acts = \rcv{a} \cdot \acts'
    \\
    \receives{\acts'}
    & \text{if } \acts = \snd{a} \cdot \acts'
    
    \\
    \epsilon & \text{if } \acts = \epsilon
  \end{cases}
\]
That is $\sends{\acts}$ (resp.\ $\receives{\acts}$) extracts the 
messages in send (resp.\ receive) actions from a sequence $\acts$.

\subsection{Asynchronous Session Subtyping}\label{sub:asyncsub}

\subsubsection{Input trees and contexts}
We define some structures and functions which we use to formalise the
subtyping relation.
In particular, we use syntactic constructs used to record the input
actions that have been anticipated by a candidate supertype, e.g.,
machine $M_2$ in Definition~\ref{def:inclusion}, as well as the local
states it may reach.
First, input trees (Definition~\ref{def:inputTree}) record input
actions in a standard tree structure.
\begin{defi}[Input Tree]\label{def:inputTree}
  An input tree is a term of the grammar:
  $$
  T\ \ ::=\ \ q \ \mid\ \echoice{a_i:T_i}{i\in I}
  $$
\end{defi}
In the sequel, we use $\ctxtQ$ to denote the input trees over states
$q \in Q$.
An input context is an input tree with ``holes'' in the place of
sub-terms.
\begin{defi}[Input Context]
  An input context is a term of $ {\mathcal A}\ \ ::=\ \ [\,]_j \ \mid\ \echoice{a_i:{\mathcal
      A}_i}{i\in I} $, where all indices $j$, denoted by
  $I(\mathcal A)$, are distinct and are associated to holes.

\end{defi}
For input trees and contexts of the form $\echoice{a_i:T_i}{i\in I}$
and $\echoice{a_i:{\mathcal A}_i}{i\in I}$, we assume that
$I \neq \emptyset$, $\forall i \neq j \in I \dst a_i \neq a_j$, and
that the order of the sub-terms is irrelevant.
When convenient, we use set-builder notation to construct input trees
or contexts, e.g., $\echset{a_i : T_i}{i \in I}$.

Given an input context ${\mathcal A}$ and an input context
${\mathcal A}_i$ for each $i$ in $I({\mathcal A})$,
we write ${\mathcal A}[{\mathcal A}_i]^{i \in I({\mathcal A})}$
for the input context obtained by replacing each hole $[\,]_i$ in
$\ctx A$ by the input context ${\mathcal A}_i$.
We write ${\mathcal A}[T_i]^{i \in I({\mathcal A})}$ for the input tree where
holes are replaced by input trees.

\subsubsection{Auxiliary functions}
In the rest of the paper we use the following auxiliary functions on
communicating machines.
Given a machine $M = (\States, \state_0, \Trans)$ and a state
$q \in \States$, we define:
\begin{itemize}
\item
  $\loopio{\star}{q} \iff \exists \word \in \Alphas, \word' \in
  \Alpha^{+}, q' \in Q \dst q \trans{\star {\word}} q' \trans{\star
    {\word'}} q'$ (with $\star \in \{!,?\}$),
\item
  $\inTrans{q}=\{a \mid \exists q'.q\trans{?a} q'\}$ and
  $\outTrans{q}=\{a \mid \exists q'.q\trans{!a} q'\}$,

\item let the \emph{partial} function $\intree{\cdot}$ be defined as:
\[
\intree{q} =
\begin{cases}
  \perp & \text{if }  \loopio{?}{q}
  \\
  q & \text{if } \inTrans{q} = \varnothing
  \\
  \echoice{a_i: \intree{\delta(q,?a_i)}}{i\in I} & \text{if }   \inTrans{q} = \{a_i \qst i \in I\} \neq  \varnothing
\end{cases}
\]
\end{itemize}
Predicate $\loopio{\star}{q}$ says that, from $q$, we
can reach a cycle with only sends (resp.\ receives), depending on whether
$\star=!$ or $\star=?$.
The function $\inTrans{q}$ (resp.\ $\outTrans{q}$) returns the
messages that can be received (resp.\ sent) from $q$.
When defined, $\intree{q}$ returns the tree containing all sequences
of messages which can be received from $q$ until a final or sending
state is reached.
Intuitively, $\intree{q}$ is undefined when $\loopio{?}{q}$ as it
would return an infinite tree.

\begin{exa}
  Given $\clientsup$ (Figure~\ref{fig:runex-machines}), we have the following:
\[
    \begin{array}{lllll}
      \inTrans{q_1} = \emptyset
      &
        \inTrans{q_2}=  \{ \exOK, \exKO \}
      \\
      \outTrans{q_1} = \{ \exHQ, \exLQ\}
      &
        \outTrans{q_2} = \emptyset
      \\
      \intree{q_1} = q_1
      &   \intree{q_2} = \echoice{\exOK : q_1 , \,
                          \exKO : q_1}{}
\end{array}
  \]
\end{exa}

\begin{exa}\label{example:newexample}
  Consider the following machine $M_1$: \\
  \begin{center}
    \begin{tikzpicture}[mycfsm, node distance = 0.4cm and 1cm
      ,scale=0.95, every node/.style={transform shape}]
      \node[state, initial, initial where=left] (s0) {$p_0$};
      \node[state, right=of s0] (s1) {$p_1$};
      \node[state, right=of s1] (s2) {$p_2$};
      \node[state, right=of s2] (s3) {$p_3$};
\path 
      (s0) edge node [above] {$\snd{b}$} (s1)
      (s1) edge node [above] {$\snd{c}$} (s2)
      (s2) edge node [above] {$\rcv{d}$} (s3)
      (s1) edge [loop below] node [below] {$\snd{a}$} (s1)
      ;
    \end{tikzpicture}
  \end{center}
  From state $p_0$ we can reach state $p_1$ with an output. The latter
  can loop into itself. Hence, we have both $\loopio{!}{p_0}$ and
  $\loopio{!}{p_1}$.
\end{exa}

\subsubsection{Asynchronous subtyping}
We present our definition of asynchronous subtyping (following the
orphan-message-free version from~\cite{CDY2014}). Our definition is a simple adaptation\footnote{In definitions for
  syntactical session types, e.g., \cite{MostrousIC15}, input contexts
  are used to accumulate inputs that precede anticipated outputs;
  here, having no specific syntax for inputs, we use input trees
  instead.}  of~\cite[Definition 2.4]{BCZ18} (given on syntactical
session types) to the setting of communicating machines.
\begin{defi}[Asynchronous Subtyping]\label{def:subtyping} \label{subtype} \label{def:inclusion}
Let $M_i = (\States_i, \state_{0_i}, \Trans_i)$ for $i \in \{1,2\}$.
  $\mathcal R$ is an asynchronous subtyping relation
  on $\States_1 \times \ctxtQtwo$ such that 
  $(p,T)\in\mathcal R$ implies:
  \begin{enumerate}
   \item \label{it:def-inc-final}
    if $\statea \ntrans$
    then $T=\stateb$ such that $\stateb \ntrans$;
\item \label{it:def-inc-tau}
    if  $\statea$ is a receiving state then 
    \begin{enumerate}
    \item  \label{it:def-inc-tau-sync}
      if $T=q$ then $q$ is a receiving state and

      \noindent
      $\forall q' \in Q_2 \ s.t.\ \ q \trans{?a} q'.\ \exists p' \in
      Q_1 \ s.t.\ p \trans{?a} p' \land (p',q')\in\mathcal R$;
    \item \label{it:def-inc-tau-async}
    if $T=\echoice{a_i:T_i}{i\in I}$
    then 
    $\forall i\in I.\
    \exists p' \in Q_1 \ s.t.\ p \trans{?a_i} p' \land  
    (p',T_i)\in\mathcal R$;
    \end{enumerate}
\item  \label{it:def-inc-alpha}
    if  $\statea$ is a sending state then 
    \begin{enumerate} 
    \item \label{it:def-inc-alpha-sync}
      if $T=q$ and $q$ is a sending state, then

      \noindent
      $\forall p' \in Q_1 \ s.t.\ \ p \trans{!a} p'.\ \exists q' \in
      Q_2 \ s.t.\ q \trans{!a} q' \land (p',q')\in\mathcal R$;
\item \label{it:def-inc-alpha-async}
    otherwise, if $T={\mathcal A}[q_i]^{i \in I}$ then
    $\neg\loopio{!}{\statea}$ and
    $\forall i\!\in\! I. \intree{q_i}\!=\!\ctx A_i[q_{i,h}]^{h\in H_i}\!$  and 
    $\forall p' \!\in\! Q_1 \ s.t.\ p \trans{!a} p'.$\\[.1cm]
\hspace*{1cm}
    $\forall i\!\in\! I. \forall h\! \in\! H_i.\ \exists q'_{i,h}\!\in\! Q_2\ s.t.\
    q_{i,h} \trans{!a} q'_{i,h} \land 
    (p',{\ctx A[\ctx A_i[q'_{i,h}]^{h\in H_i}]^{i\in I}}) 
    \in \mathcal R$.
\end{enumerate}   
  \end{enumerate}
  $M_1$ is an asynchronous subtype of $M_2$, written $\minc{M_1}{M_2}$, if there 
  is an asynchronous subtyping relation $\mathcal R$ such that $(\state_{0_1},\state_{0_2}) \in \mathcal R$.
\end{defi}

The relation $\minc{M_1}{M_2}$ checks that $M_1$ is a subtype of $M_2$
by executing $M_1$ and simulating its execution with $M_2$.
$M_1$ may fire send actions earlier than $M_2$, in which case $M_2$ is
allowed to fire these actions even if it needs to fire some receive actions first.
These receive actions are accumulated in an input context and
are expected to be subsequently matched by $M_1$.
Due to the presence of such an input context, the states reached by
$M_2$ during the computation are represented as input trees.
The definition first differentiates the type of state $\statea$:
\begin{description}\item [Final]
Case~\eqref{it:def-inc-final} says that if $M_1$ is in a final state, then
  $M_2$ is in a final state with an empty input context.
\item [Receiving]
Case~\eqref{it:def-inc-tau} says that if $M_1$ is in a receiving
  state, then either \eqref{it:def-inc-tau-sync} the input context is
  empty ($T = q$) and $M_1$ must be able to receive all messages that
  $M_2$ can receive;
or, \eqref{it:def-inc-tau-async} $M_1$ must be able to consume all
  the messages at the root of the input tree.
\item [Sending]
Case~\eqref{it:def-inc-alpha} applies when $M_1$ is in a sending
  state, there are two sub-cases.

  \noindent
  Case~\eqref{it:def-inc-alpha-sync} says that if the input context is
  empty ($T = q$) and $q$ is also a sending state, then $M_2$ must be 
  able to send all messages that $M_1$ can send.
If this sub-case above does not apply (i.e., the input context is not
  empty or $q$ is not a sending state), then
  the one below must hold.

  \noindent
  Case~\eqref{it:def-inc-alpha-async} enforces correct output
  anticipation, i.e., $M_2$ must be able to send every $\msg{a}$ that
  $M_1$ can send after some receive actions recorded in each
  $\ctx A_i[q_{i,h}]^{h\in H_i}$.
Note that whichever receiving path $M_2$ chooses, it must be able to send all possible 
output actions $\snd{a}$ of $M_1$, i.e., $\snd{a}$ should be available 
  at the end of each receiving path.
Moreover, given that there are accumulated inputs,
we require that $\loopio{!}{\statea}$ does \emph{not} hold,
  guaranteeing that subtyping preserves orphan-message freedom, i.e.,
  such accumulated receive actions will be eventually executed.
\end{description}
Observe that Case~\eqref{it:def-inc-tau} enforces a form of
contra-variance for receive actions, while
Case~\eqref{it:def-inc-alpha} enforces a form of covariance for send
actions.

\begin{exa}
  Consider $\clientsup$ and $\client$ from
  Figures~\ref{fig:runex-machines} and~\ref{fig:runex-machines2}, we
  have $\client \nincsymbol \clientsup$ (see \S~\ref{sec:algorithm}).
A fragment of the relation $\mathcal R$ from
  Definition~\ref{def:subtyping} is given in Figure~\ref{fig:simtree}.
Considering the identifier (bottom left) of each node in
  Figure~\ref{fig:simtree}, we have:
  \begin{itemize}
  \item Case~\eqref{it:def-inc-final} of
    Definition~\ref{def:subtyping} does not apply to any configuration
    in this example (there is no final node in these machines).
\item Case~\eqref{it:def-inc-tau-sync} applies to node $n_1$, i.e.,
    $q_2 \nincsymbol q_2$ (note that $q_2$ are receiving states in
    both machines).
\item Case~\eqref{it:def-inc-tau-async} applies to nodes $n_5$,
    $n_9$, and $n_{13}$; where $q_2$ of machine $\client$ is a
    receiving state and the input context is not empty.
\item Case~\eqref{it:def-inc-alpha-sync} applies to nodes $n_0$, $n_2$,
    and $n_3$, where the input context is empty and both states are
    sending states.
\item Case~\eqref{it:def-inc-alpha-async} applies to nodes $n_4$,
    $n_6$,
    $n_7$,
    $n_8$,
    $n_{10}$, $n_{11}$, $n_{12}$, $n_{14}$, $n_{15}$, and
    $n_{16}$.
Observe that this case does not require the input context to be
    non-empty (e.g., $n_4$), and 
that the condition $\neg\loopio{!}{p}$ holds for all states $p$ in
    $\client$ since there is no send-only cycle in this machine.
  \end{itemize}
\end{exa}

\begin{exa}
  For the two machines below, we have $M_1 \nnincsymbol M_2$ and $M_2 \nnincsymbol M_1$:
\begin{center}
  \begin{tabular}{l@{\quad}r@{\qquad\qquad}l@{\quad}r}
    $M_1$: & 
             \begin{tikzpicture}[mycfsm, node distance = 0.4cm and 1cm
               ,scale=0.95, every node/.style={transform shape}]
               \node[state, initial, initial where=left] (s1) {$p_1$};
               \node[state, right=of s1] (s2) {$p_2$};
               \node[state, right=of s2] (s3) {$p_3$};
\path 
               (s1) edge node [above] {$\snd{b}$} (s2)
               (s2) edge node [above] {$\rcv{c}$} (s3)
               (s1) edge [loop below] node [below] {$\snd{a}$} (s1)
               ;
             \end{tikzpicture}
    &
    $M_2$: &
             \begin{tikzpicture}[mycfsm, node distance = 0.4cm and 1cm
               ,scale=0.95, every node/.style={transform shape}]
               \node[state, initial, initial where=left] (s1) {$q_1$};
               \node[state, right=of s1] (s2) {$q_2$};
               \node[state, right=of s2] (s3) {$q_3$};
\path 
               (s1) edge node [above] {$\rcv{c}$} (s2)
               (s2) edge node [above] {$\snd{b}$} (s3)
               (s2) edge [loop below] node [below] {$\snd{a}$} (s2)
               ;
             \end{tikzpicture}
  \end{tabular}
\end{center}

For the $M_1 \nnincsymbol M_2$ case  consider the initial
configuration $(p_1, q_1)$. Since $p_1$ is a sending state, but $q_1$
is a receiving state, Case~\eqref{it:def-inc-alpha-async}
\emph{appears} to be the only applicable case of
Definition~\ref{def:subtyping}. However, we have $\loopio{!}{p_1}$
hence $(p_1, q_1) \notin \mathcal{R}$, for every asynchronous subtyping
relation $\mathcal{R}$.

For the $M_2 \nnincsymbol M_1$ case, consider the initial
configuration $(q_1, p_1)$. Since $q_1$ is a receiving state, only
Case~\ref{it:def-inc-tau} would be applicable. However, the input
context is empty and $p_1$ is a sending state, therefore neither
Case~\eqref{it:def-inc-tau-sync} nor Case~\eqref{it:def-inc-tau-async}
apply hence $(q_1, p_1) \notin \mathcal{R}$, for every asynchronous subtyping
relation $\mathcal{R}$.
\end{exa}

\section{A Sound Algorithm for Asynchronous Subtyping}\label{sec:algorithm}
Our subtyping algorithm takes two machines $M_1$ and $M_2$
then produces three possible outputs: \emph{true}, \emph{false},
or \emph{unknown}, which respectively indicate that ${M_1} \nincsymbol {M_2}$,
${M_1} \not\!\!\nincsymbol {M_2}$, or that the algorithm was unable to 
prove either of these two results.
The algorithm consists of three stages.
(1) It builds the \emph{simulation tree} of $M_1$ and $M_2$
(see Definition~\ref{def:simtree}) that represents sequences
of checks between $M_1$ and $M_2$, corresponding to the checks in the
definition of asynchronous subtyping.
Simulation trees may be infinite, but the construction terminates
whenever: either it reaches a node that cannot be expanded, it visits a node
whose label has been seen along the path from the root, or it expands
a node whose ancestors validate a termination condition that
we formalise in Theorem~\ref{thm:termination}.
The resulting tree satisfies one of the following conditions: 
(i) it contains a leaf that could not be expanded because the node
represents an unsuccessful check between $M_1$ and $M_2$ (in which case
the algorithm returns \emph{false}),
(ii) all leaves are successful final configurations, 
see Condition~\eqref{it:def-inc-final} of Definition~\ref{def:inclusion},
in which case the algorithm replies \emph{true},
or (iii) for each leaf $n$ it is possible to identify a corresponding
ancestor $\anc(n)$.  In this last case the tree and the identified
ancestors are passed onto the next stage.
(2) The algorithm divides the finite tree into several subtrees rooted at
those ancestors that do not have other ancestors above them
(see the strategy that we outline on page~\pageref{rem:divide-simtree}).
(3) The final stage analyses whether each subtree is of one of the two
following kinds.
(i) All the leaves in the subtree have the same label as their
ancestors: in this case all checks required to verify subtyping have
been performed.
(ii) The subtree is a \emph{witness subtree} (see
Definition~\ref{def:witness}), meaning that all the checks that may be
considered in any extension of the finite subtree are guaranteed to be
successful as well.
If all the identified subtrees are of one of these two kinds, the
algorithm replies \emph{true}. Otherwise, it replies \emph{unknown}.

\subsection{Generating Asynchronous Simulation Trees}\label{sub:simtree}
We first define labelled trees, of which our simulation trees are
instances; then, we give the operational rules for generating a
simulation tree from a pair of communicating machines.
\begin{defi}[Labelled Tree]
  A labelled tree is a tree\footnote{A tree is a connected directed
    graph without cycles: $\forall n \in N.\ n_0 \simtreetrans{}^* n \land \forall n,n' \in
    N.\ n \simtreetrans{}^+ n'.\ n\neq n'$.
}
$(N, n_0, \simtreetrans{}, \mathcal L, \Sigma, \Gamma)$, consisting
  of nodes $N$, root $n_0\in N$, edges
$\simtreetrans{}\ \subseteq N\times \Sigma \times N$, and node labelling
  function $\mathcal L : N \farrow \Gamma$.
\end{defi}
Hereafter, we write $n \simtreetrans{\sigma} n'$ when
$(n,\sigma, n') \in \simtreetrans{}$ and write
$n_1 \simtreetrans{\sigma_1 \cdots \sigma_{k}} n_{k+1}$ when there are
$n_1, \ldots, n_{k+1}$, such that
$n_i \simtreetrans{\sigma_i} n_{i+1}$ for all $1 \leq i \leq k$.
We write $n \simtreetrans{} n'$ when $n \simtreetrans{\sigma} n'$ for
some $\sigma$ and the label is not relevant. As usual, we write
$\simtreetrans{}^{\ast}$ for the reflexive and transitive closure of
$\simtreetrans{}$, and $\simtreetrans{}^{+}$ for its transitive
closure.
Moreover, we reason up-to tree isomorphism, i.e., two labelled trees
are equivalent if there exists a bijective node renaming that
preserves both node labelling and labelled transitions.

We can then define simulation trees, labelled trees representing all
possible configurations reachable by the simulation checked
by asynchronous session subtyping.

\begin{defi}[Simulation Tree]\label{def:simtree}
Let $M_1=(P,p_0,\delta_1)$ and $M_2=(Q,q_0,\delta_2)$ be two
communicating machines. 
The simulation tree of $M_1$ and $M_2$, written
  $\simtree{M_1}{M_2}$, is a labelled tree
  $(N, n_0, \simtreetrans{}, \labfun, \Act,
  P\times\ctxtQ)$. 
The labels $(p,T) \in (P \times \ctxtQ)$ are denoted also with $\simtreepair pT$. 
In order to define $\simtreetrans{}$
and $\labfun$, 
we first consider an $\Act$-labelled relation on  
$(P \times \ctxtQ)$,
with elements denoted with $\simtreepair pT\ \simtreetrans{\act}\ \simtreepair{p'}{T'}$,
defined as the minimal relation satisfying the following rules:
\[
    \begin{array}{c}
      \infer[\textsf{(In)}]
{\simtreepair pq\ \simtreetrans{?a}\ \simtreepair{p'}{q'}}
{p\trans{?a} p' & q\trans{?a} q' & \inTrans{p} \supseteq \inTrans{q}}      
\quad\quad
      \infer[\textsf{(Out)}]
{ \simtreepair pq\ \simtreetrans{!a}\ \simtreepair{p'}{q'} }
{p\trans{!a} p' & q\trans{!a} q' & \outTrans{p} \subseteq \outTrans{q}}
      \\[4mm]
\infer[\textsf{(InCtx)}]
{
\simtreepair{p}{ \echoice{a_i:T_i} }{i\in I} 
\ \simtreetrans{?a_k}\ 
\simtreepair{p'}{T_k}} 
{p \trans{?a_k} p' &  k\in I &
                                      \inTrans{p} \supseteq \{a_i \mid
                                      i\in I\, \}}
\\[4mm]
      \infer[\!\textsf{(OutAcc)}]
{
      \simtreepair p{\ctx A[q_j]^{j\in J}}
      \simtreetrans{!a}\simtreepair{p'}
      {\ctx A[\ctx A_j[q'_{j,h}]^{h\in H_j}]^{j\in J}}
      }
{
      \begin{array}{c@{}}
        \!
        p\trans{!a}p'
        \qquad\qquad
        \neg\loopio{!}{p}
        \\
        \forall j\in J. \big(
        \intree{q_j}\! =\! \ctx A_j[q_{j,h}]^{h\in H_j} \land
        \forall h \in H_j.
        ( \outTrans{p} \subseteq \outTrans{q_{j,h}} \land
        q_{j,h} \trans{!a} q'_{j,h}) \big)
      \end{array}
      }
    \end{array}
  \]
We now define $\simtreetrans{}$ and $\labfun$ as the transition
relation and the labelling function s.t. 
$\labfun(n_0)= \simtreepair {p_0}{q_0}$ and,
for each $n \in N$ with $\labfun(n)= \simtreepair pT$,
the following holds:
\begin{itemize}
\item
if $\simtreepair pT\ \simtreetrans{\act}\ \simtreepair{p'}{T'}$ then there 
exists a unique $n'$ s.t.
$n \simtreetrans{\act} n'$ with $\labfun(n')= \simtreepair {p'}{T'}$;
\item
if $n \simtreetrans{\act} n'$ with $\labfun(n')= \simtreepair {p'}{T'}$
then
$\simtreepair pT\ \simtreetrans{\act}\ \simtreepair{p'}{T'}$.

\end{itemize}
Notice that such a tree exists (it can be constructed inductively
starting from the root $n_0$) and it is unique (up-to tree isomorphism).
\end{defi}

Given machines $M_1$ and $M_2$, Definition~\ref{def:simtree} generates
a tree whose nodes are labelled by terms of the form
$\simtreepair p{\ctx A[q_i]^{i\in I}}$ where $p$ represents the state
of $M_1$, $\ctx A$ represents the receive actions accumulated by
$M_2$, and each $q_i$ represents the state of machine $M_2$ after each
path of accumulated receive actions from the root of $\ctx A$ to the
$i^{th}$ hole.
Note that we overload the symbol $\simtreepair{}{}$ used for
asynchronous subtyping (Definition~\ref{def:subtyping}), however the
actual meaning is always made clear by the context.
We comment each rule in detail below.

\noindent
\textbf{Rules \textsf{(In)} and \textsf{(Out)}} enforce contra-variance
of inputs and covariance of outputs, respectively, when no 
accumulated receive actions
are recorded, i.e., $\ctx A$ is a single hole.
Rule \textsf{(In)} corresponds to Case~\eqref{it:def-inc-tau-sync} of
Definition~\ref{def:inclusion}, while rule \textsf{(Out)} corresponds
to Case~\eqref{it:def-inc-alpha-sync}.

\noindent
\textbf{Rule \textsf{(InCtx)}} is applicable when the input tree $\ctx A$ is
non-empty and the state $p$ (of $M_1$) is able to perform a
receive action corresponding to any message located at the root of the
input tree (contra-variance of receive actions).
This rule corresponds to Case~\eqref{it:def-inc-tau-async} of
Definition~\ref{def:inclusion}.

\noindent
\textbf{Rule \textsf{(OutAcc)}} allows $M_2$ to execute some receive
actions before matching a send action executed by $M_1$.
This rule corresponds to Case~\eqref{it:def-inc-alpha-async} of
Definition~\ref{def:inclusion}.
Intuitively, each send action outgoing from state $p$ must also be
eventually executable from each of the states $q_j$ (in $M_2$) which
occur in the input tree ${\ctx A[q_j]^{j\in J}}$.
The possible combinations of receive actions executable from each $q_j$
before executing $\snd{a}$ is recorded in $\ctx A_j$, using
$\intree{q_j}$.
We assume that the premises of this rule only hold when all
invocations of $\intree{\cdot}$ are defined.
Each tree of accumulated receive actions is appended to its respective
branch of the input context $\ctx A$, using the notation
$ \ctx A[\ctx A_j[q'_{j,h}]^{h\in
  H_j}]^{j\in J}
$.
The premise
$\outTrans{p} \subseteq \outTrans{q_{j,h}} \land q_{j,h} \trans{!a}
q'_{j,h}$
guarantees that each $q_{j,h}$ can perform the send actions available
from $p$ (covariance of send actions).
The additional premise $\neg\loopio{!}{p}$ corresponds to that of
Case~\eqref{it:def-inc-alpha-async} of Definition~\ref{def:inclusion}.

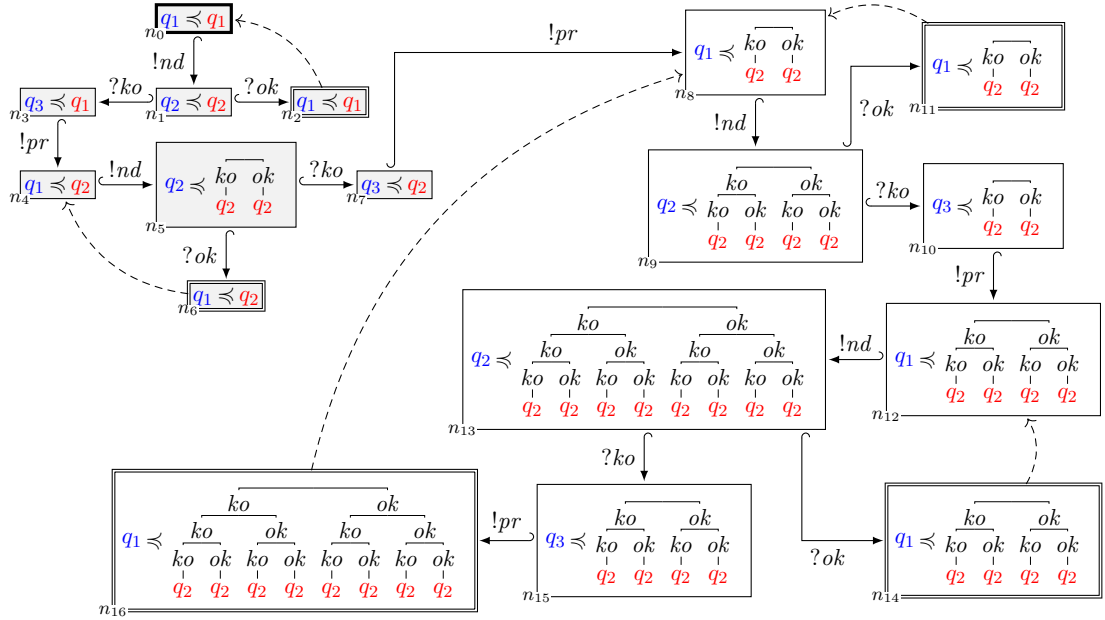
\begin{figure}[t]
\centering
  \begin{tikzpicture}
    [node distance=0.9cm and 1cm
    ,scale=0.8, every node/.style={transform shape}
    ]
  \node[cnode,notexplo,very thick] (n0) {$\pairconf{q_1}{q_1}$};
  \node[cnode,notexplo,below=of n0] (n1) {$\pairconf{q_2}{q_2}$};
  \node[cnode,notexplo,right=of n1,double] (n2) {$\pairconf{q_1}{q_1}$};
  \node[cnode,notexplo,left=of n1] (n3) {$\pairconf{q_3}{q_1}$};
  \node[cnode,notexplo,below=of n3] (n4) {$\pairconf{q_1}{q_2}$};  
  \node[cnode,notexplo,right=of n4] (n5) {\treeconf{$q_2$}{\treedone}};
  \node[cnode,notexplo,below=of n5,double] (n6) {$\pairconf{q_1}{q_2}$};
  \node[cnode,notexplo,right=of n5] (n7) {$\pairconf{q_3}{q_2}$};
  \node[cnode,right=of n7,yshift=2.2cm,xshift=3.2cm] (n8) {\treeconf{$q_1$}{\treedone}};
  \node[cnode,below=of n8] (n9) {\treeconf{$q_2$}{\treedtwo}};
  \node[cnode,right=of n9] (n10) {\treeconf{$q_3$}{\treedone}};
  \node[cnode,above=of n10,double] (n11) {\treeconf{$q_1$}{\treedone}};
  \node[cnode,below=of n10] (n12) {\treeconf{$q_1$}{\treedtwo}};
  \node[cnode,left=of n12] (n13) {\treeconf{$q_2$}{\treedthree}};
  \node[cnode,below right=of n13,double] (n14) {\treeconf{$q_1$}{\treedtwo}};
  \node[cnode,below=of n13] (n15) {\treeconf{$q_3$}{\treedtwo}};  
  \node[cnode,left=of n15,double] (n16) {\treeconf{$q_1$}{\treedthree}};
\draw[hookedsilentedge] (n0)  edge node [left] {$\snd{\exHQ}$} (n1);
  \draw[hookedsilentedge] (n1)  edge node [above] {$\rcv{\exOK}$} (n2);
  \draw[ancestor] (n2) edge [bend right] node {} (n0);
  \draw[lhookedsilentedge] (n1)  edge node [above] {$\rcv{\exKO}$} (n3);
  \draw[hookedsilentedge] (n3)  edge node [left] {$\snd{\exLQ}$} (n4);
\draw[hookedsilentedge] (n4)  edge node [above] {$\snd{\exHQ}$} (n5);
  \draw[hookedsilentedge] (n5)  edge node [left] {$\rcv{\exOK}$} (n6);
  \draw[ancestor] (n6) edge [bend left] node {} (n4);
  \draw[hookedsilentedge] (n5)  edge node [above] {$\rcv{\exKO}$} (n7);
  \draw[hookedsilentedge] (n7)  |- node [above right, near end] {$\snd{\exLQ}$} (n8);
  \draw[hookedsilentedge] (n8) edge node [left] {$\snd{\exHQ}$} (n9);
  \draw[hookedsilentedge] (n9) edge node [above] {$\rcv{\exKO}$} (n10);
  \draw[hookedsilentedge] ([xshift=-0.2cm]n9.north east) |- node [right, near start] {$\rcv{\exOK}$} (n11);
  \draw[ancestor] (n11) edge [bend right] node {} (n8);
  \draw[hookedsilentedge] (n10) edge node [left] {$\snd{\exLQ}$} (n12);
  \draw[lhookedsilentedge] (n12) edge node [above] {$\snd{\exHQ}$} (n13);
  \draw[hookedsilentedge] ([xshift=-0.4cm]n13.south east) |- node [below right] {$\rcv{\exOK}$} (n14);
  \draw[hookedsilentedge] (n13) edge node [left] {$\rcv{\exKO}$} (n15);
  \draw[ancestor] (n14) edge [bend right] node {} (n12);
  \draw[lhookedsilentedge] (n15) edge node [above] {$\snd{\exLQ}$} (n16);
  \draw[ancestor] (n16) edge [bend left] node {} (n8);
\node[nlabel] at (n0.south west) {$n_{0}$};
  \node[nlabel] at (n1.south west) {$n_{1}$};
  \node[nlabel] at (n2.south west) {$n_{2}$};
  \node[nlabel] at (n3.south west) {$n_{3}$};
  \node[nlabel] at (n4.south west) {$n_{4}$};
  \node[nlabel] at (n5.south west) {$n_{5}$};
  \node[nlabel] at (n6.south west) {$n_{6}$};
  \node[nlabel] at (n7.south west) {$n_{7}$};
  \node[nlabel] at (n8.south west) {$n_{8}$};
  \node[nlabel] at (n9.south west) {$n_{9}$};
  \node[nlabel] at (n10.south west) {$n_{10}$};
  \node[nlabel] at (n11.south west) {$n_{11}$};
  \node[nlabel] at (n12.south west) {$n_{12}$};
  \node[nlabel] at (n13.south west) {$n_{13}$};
  \node[nlabel] at (n14.south west) {$n_{14}$};
  \node[nlabel] at (n15.south west) {$n_{15}$};
  \node[nlabel] at (n16.south west) {$n_{16}$};
\end{tikzpicture}
\caption{Part of the simulation tree (solid edges only) and candidate tree for
  $\client \nincsymbol \clientsup$ (Figure~\ref{fig:runex-machines} and~\ref{fig:runex-machines2}).
The root is circled in thicker line.
The node identities are shown at the bottom left of each label.
}\label{fig:simtree}
\end{figure}

\begin{exa}
  Figure~\ref{fig:simtree} gives a graphical view of the initial part
  of the simulation tree $\simtree{\client}{\clientsup}$.
Consider the solid edges only for now, they correspond to the
  $\simtreetrans{}$-relation.
Observe that all branches of the simulation tree are infinite; some
  traverse nodes with infinitely many different labels, due to the
  unbounded growth of the input trees (e.g., the one repeatedly performing transitions $\snd{\exHQ} \cdot \rcv{\exKO} \cdot \snd{\exLQ}$);
  while others traverse nodes with \emph{finitely} many distinct
  labels (e.g., the one performing first transitions
  $\snd{\exHQ} \cdot \rcv{\exKO} \cdot \snd{\exLQ}$ and then
  repeatedly performing $\snd{\exHQ} \cdot\rcv{\exOK}$).
\end{exa}

We adapt the terminology of~\cite{JancarM99} and 
say that a node $n$ of $\simtree{M_1}{M_2}$ is a \emph{leaf} if it has
no successors.
A leaf $n$ is \emph{successful} iff $\labfun(n) = \simtreepair{p}{q}$,
with $p$ and $q$ final; all other leaves are unsuccessful.
A \emph{branch} (a full path through the tree) is \emph{successful}
iff it is infinite or finishes with a successful leaf; otherwise it is
unsuccessful.
Using this terminology, we relate asynchronous subtyping
(Definition~\ref{def:inclusion}) with simulation trees
(Definition~\ref{def:simtree}) in Theorem~\ref{thm:equivalence}.

\begin{restatable}{thm}{thmequivalence}\label{thm:equivalence}
  Let $M_1=(P,p_0,\delta_1)$ and $M_2=(Q,q_0,\delta_2)$ be two
  communicating machines.
All branches in $\simtree{M_1}{M_2}$ are successful if and only if
  $\minc{M_1}{M_2}$.
\end{restatable}
\begin{proof}
We start from the \emph{if part}.
Consider two communicating machines 
$M_1=(P,p_0,\delta_1)$ and $M_2=(Q,q_0,\delta_2)$ 
such that
$\minc{M_1}{M_2}$. 
By definition of $\minc{M_1}{M_2}$,
we have that there exists an asynchronous subtyping
$\mathcal R$
such that $(p_0,q_0) \in \mathcal R$.
Consider now $\simtree{M_1}{M_2}=
(N, n_0, \simtreetrans{}, \labfun, \Act,
  P\times\ctxtQ)$, having a root
labelled with
$\simpair {p_0} {q_0}$.
We have that also other nodes $n\in N$ are such that
$\mathcal L(n) = \simpair {p} {T}$ implies
$(p,T)\in \mathcal R$. This is easily proved by induction on
the length of the sequence of transitions
$n_0 \simtreetrans{}^+ n$,
observing that the rules for the
construction of the simulation tree
check on $p$ and $T$ 
the same properties checked by the definition of
asynchronous session subtyping, and
generate new transitions to nodes labelled with 
$\simpair {p'} {T'}$ corresponding to the pairs 
$(p',T')$ that are required to be in $\mathcal R$.
This guarantees that, for every $n$ in the simulation tree,
either $\mathcal L(n) = \simpair {p} {q}$
with $p\ntrans$ and $q\ntrans$ (i.e., $p$ and $q$ are 
final) implying that the branch to $n$ is successful,
or there exists $n'$ such that $n \simtreetrans{} n'$.
This guarantees that in $\simtree{M_1}{M_2}$ there exists 
no unsuccessful branch. 

We now move to the \emph{only if part}.
Consider two communicating machines 
$M_1=(P,p_0,\delta_1)$ and $M_2=(Q,q_0,\delta_2)$
and their simulation tree
$\simtree{M_1}{M_2}=
(N, n_0, \simtreetrans{}, \labfun, \Act,
  P\times\ctxtQ)$.
Consider now the relation $\mathcal R \subseteq P\times\ctxtQ$
such that $(p,T) \in \mathcal R$ if and only if there exists
$n \in N$ s.t. $\mathcal L(n) = \simpair {p} {T}$.
With similar arguments as in the above case, we prove
that $\mathcal R$ is an asynchronous subtyping
relation. Hence, given that 
$\mathcal L(n_0) = \simpair {p_0} {q_0}$,
we have $(p_0,q_0) \in \mathcal R$, hence also 
$\minc{M_1}{M_2}$. 
\end{proof}

\subsection{A Simulation Tree-Based Algorithm }\label{sub:algorithm}

A consequence of the undecidability of asynchronous session
subtyping~\cite{LangeY17,BCZ18,BravettiCZ17} is that checking whether
all branches in $\simtree{M_1}{M_2}$ are successful is undecidable.
The problem follows from the presence of infinite branches that cannot
be algorithmically identified.  Our approach is to characterise finite
subtrees (called \emph{witness subtrees}) such that all the branches
that traverse these finite subtrees are guaranteed to be infinite.

The presentation of our algorithm is in three parts. 
In Part (1), we give the definition of the kind of \emph{finite}
subtree (of a simulation tree) we are interested in
(called \emph{candidate} subtrees).
In Part (2), we give an algorithm to extract 
\emph{candidate} subtrees from a simulation tree $\simtree{M_1}{M_2}$.
In Part (3) we show how to check
whether a candidate subtree (which is finite)
is a \emph{witness} of infinite branches
(hence successful) in the simulation tree.

\subsubsection{Part 1. Characterising finite and candidate sub-trees}
We define the candidate subtrees of a simulation tree, which are
finite subtrees accompanied by an ancestor function mapping each
boundary node $n$ to a node located on the path from the root of the
tree to $n$.

\begin{defi}[Finite Subtree]
  A finite subtree $(r,B)$ of a labelled tree
  $S = (N, n_0, \simtreetrans{},\linebreak{} \mathcal L, \Sigma,
  \Gamma)$, with $r$ being the subtree root and $B \subseteq N$ the
  finite set of its leaves (boundary nodes), is the subgraph of $S$ such that:
  \begin{enumerate}
  \item \label{en:subtree-connected}
    $\forall n \!\in\! B.\ r \simtreetrans{}^* n$;
  \item \label{en:subtree-limit}
$\forall n \!\in\! B.\not\exists n' \!\in\! B.\ n
    \simtreetrans{}^+ n'$; and
  \item \label{en:subtree-onpath}
    $\forall n \in N.\ r \simtreetrans{}^* n \implies \exists n' \in B
    .\ n \simtreetrans{}^* n' \lor n' \simtreetrans{}^* n$.
  \end{enumerate}
We use
  $\subtree{S}{r}{B} = \{n \in N \mid \exists n'\in B.\ r
  \simtreetrans{}^* n \simtreetrans{}^*n'\}$ to denote the (finite)
  set of nodes of the finite subtree $(r,B)$. Notice that
  $r \in \subtree{S}{r}{B}$ and $B \subseteq \subtree{S}{r}{B}$.
\end{defi}
 
\noindent 
Condition~\eqref{en:subtree-connected} requires that each boundary
node can be reached from the root of the subtree.
Condition~\eqref{en:subtree-limit} guarantees that the boundary nodes
are not connected, i.e., they are on different paths from the root.
Condition~\eqref{en:subtree-onpath} enforces that each branch of the
tree passing through the root $r$ contains a boundary node.

\begin{defi}[Candidate Subtree]\label{def:candidate}
  Let $M_1=(P,p_0,\delta_1)$ and $M_2=(Q,q_0,\delta_2)$ be two
  communicating machines with $\simtree{M_1}{M_2} = (N,
  n_0, \simtreetrans{}, \labfun, \Act, P\times\ctxtQ)$.

  \noindent
  A \emph{candidate subtree} of $\simtree{M_1}{M_2}$ is a finite
  subtree $(r,B)$ paired with a function
$\anc: B \farrow \subtree{\simtree{M_1}{M_2}}{r}{B} \!\setminus\!B$
  such that, for all $n \in B$, we have:
  $\anc(n) \simtreetrans{}^+ n$ and
there are $p, \ctx A, \ctx A', I, J, \{q_j \mid j\in J\}$ and $\{q_i \mid i\in I\}$ such that
  \[
  \mathcal L(n)=\simpair p \ctx A[q_{i}]^{i\in I} 
\quad      \land  \quad
\mathcal L(\ancestor{n})=\simpair p \ctx A'[q_{j}]^{j\in
    J} 
\quad      \land  \quad
\{q_i \mid i\in I\} \subseteq \{q_j \mid j\in J\}
  \]
\end{defi}
A candidate subtree is a finite subtree accompanied by a total
function on its boundary nodes.
The purpose of function $\anc$ is to map 
each boundary node $n$ to a
``similar'' ancestor $n'$ such that: 
$n'$ is a node (different from $n$)
on the path from the root $r$
to $n$ (recall that we have $r \notin B$) such that the labels of $n'$ and $n$ share the same state $p$ of $M_1$, and the states of
$M_2$ (that populate the holes in the leaves of the input context of
the boundary node) are a subset of those considered for the ancestor.
Given a candidate subtree, we write $\img{\anc}$ for the set
$\{ n \qst \exists n' \in B \dst \anc(n') = n\}$, i.e., $\img{\anc}$
is the set of ancestors in the candidate subtree.

\begin{exa}
  Figure~\ref{fig:simtree} depicts a finite subtree of
  $\simtree{\client}{\clientsup}$.
We can distinguish several distinct candidate subtrees
  in Figure~\ref{fig:simtree}. For instance
one subtree is rooted at $n_0$, and its boundary nodes are
  $\{n_2,n_6,n_{11},n_{14},n_{16}\}$; another subtree is rooted at
  $n_8$ and its boundary nodes are $\{n_{11},n_{14},n_{16}\}$
  (boundary nodes are highlighted with a double border).
In each subtree, the $\anc$ function is represented by the dashed
  edges from its boundary nodes to their respective ancestors.
\end{exa}

 \subsubsection{Part 2. Identifying candidate subtrees} We now describe
how to generate a finite subtree of the simulation tree, from which we
extract candidate subtrees. Since simulation trees are potentially
infinite, we need to identify termination conditions (i.e., conditions
on nodes that become the boundary of the generated finite subtree).

We first need to define the auxiliary function $\extract{\ctx A}{\word}$,
which checks the presence of a sequence of messages $\word$ in an
input context $\ctx A$, and extracts the residual input context.
\[
\extract{\ctx A}{\word} = 
\begin{cases}
  \ctx A & \text{if }  \word = \lempty
  \\
  \extract{\ctx A_i}{\word'} 
  & \text{if } \word = a_i \cdot \word',
  \ctx A = \echoice{a_j:{\mathcal A}_j}{j\in J},
\text{ and } 
  i \in J
  \\
  \perp & \text{otherwise}
\end{cases}
\]

Our termination condition is formalised in
Theorem~\ref{thm:termination} below. This result follows from an
argument based on the finiteness of the states of $M_1$ and of the sets of states from
$M_2$ (which populate the holes of the input contexts in the labels of
the nodes in the simulation tree). We write $\minHeight{\ctx A}$ for the smallest $\height{\ctx A}{i}$,
with $i \in I(\ctx A)$, where $\height{\ctx A}{i}$ is the length of
the path from the root of the input context $\ctx A$ to the
$i$\textsuperscript{th} hole.
\begin{restatable}{thm}{thmtermination}\label{thm:termination}
  Let $M_1=(P,p_0,\delta_1)$ and $M_2=(Q,q_0,\delta_2)$ be two
  communicating machines with
  $\simtree{M_1}{M_2}=
  (N, n_0, \simtreetrans{}, \mathcal L, \Act, P\times\mathcal T_Q)$.

  \noindent
  For each infinite path 
$n_0 \simtreetrans{} n_1 \simtreetrans{} n_2 \cdots \simtreetrans{} n_l
  \simtreetrans{} \cdots$ 
there exist $i < j < k$,
  with 
\[
  \mathcal L(n_i)=\simpair p \ctx A_i[q_h]^{h \in H_i}
  \qquad
  \mathcal L(n_j)=\simpair p \ctx A_j[q'_h]^{h \in H_j}
  \qquad
  \mathcal L(n_k)=\simpair p \ctx A_k[q''_h]^{h \in H_k}
  \]
  such that
$\{q'_h \mid h \in H_j\} \subseteq \{q_h \mid h \in H_i\}$
    and
    $\{q''_h \mid h \in H_k\} \subseteq \{q_h \mid h \in H_i\}$;

    \noindent
    and,
for $n_i \simtreetrans{\acts} n_j$:
    \begin{enumerate}[label=($\roman*$)]
    \item
      $\receives{\acts}=\word_1 \cdot \word_2$ with $\word_1$ s.t.
      $\exists t,z.\
      \extract{\ctx A_i}{\word_1}=[\,]_t \land
      \extract{\ctx A_k}{\word_1}=[\,]_z$,  or
    \item
      $\minHeight{\extract{\ctx A_i}{\receives{\acts} }} \leq
      \minHeight{\extract{\ctx A_k}{\receives{\acts} }}$.
    \end{enumerate}
\end{restatable}
\begin{proof}
  Let $M_1=(P,p_0,\delta_1)$ and $M_2=(Q,q_0,\delta_2)$ be two
  communicating machines
with
  $\simtree{M_1}{M_2}=
  (N, n_0,
\simtreetrans{}, \mathcal L, \Act, P\times\mathcal T_Q)$, and
  let
  $n_0 \simtreetrans{} n_1 \simtreetrans{} n_2 \cdots \simtreetrans{} n_l
  \simtreetrans{} \cdots$
  be an infinite path in the simulation tree. 
For each $n_i$, let $\mathcal S_i$ be
the pair $(p_i,R_i)$, with $p_i \in P$ and $R_i \subseteq Q$, such that 
$\mathcal L(n_i)=\simpair {p_i} \ctx A_i[q_j]^{j \in J_i}$
and
$R_i = \{q_j \mid j \in J_i\}$.
Notice that there are at most $|P| \times 2^{|Q|}$
distinct pairs 
$(p_i,R_i)$, in which $p_i$ is an element taken from the finite set $P$,
and $R_i$ is a subset of the finite set $Q$.
This guarantees the existence of infinite pairs of nodes
$(n_{i_1},n_{i'_1}), (n_{i_2},n_{i'_2}), \dots ,(n_{i_j},n_{i'_j}),\dots$ taken from the above infinite path,
such that, for all $j$:
\begin{itemize}
\item
$\mathcal S_{i_j}=\mathcal S_{i'_j}$ and
\item
$i_j<i'_j<i_{j+1}$ and
\item
$i'_j-i_j \leq |P| \times 2^{|Q|}$.
\end{itemize}
The above follows from the possibility to repeatedly select, 
by following from left to right the infinite sequence
$n_0 \simtreetrans{} n_1 \simtreetrans{} n_2 \cdots \simtreetrans{} n_l
  \simtreetrans{} \cdots$, the first occurring pair 
$(n_k,n_l)$, with $k<l$, such that $\mathcal S_{k}=\mathcal S_{l}$. 
Being the first pair of this type that occurs,
we have that $l-k \leq |P| \times 2^{|Q|}$.

For the above infinite list of pairs $(n_{i_1},n_{i'_1}), (n_{i_2},n_{i'_2}), \dots ,(n_{i_j},n_{i'_j}),\dots$
let $\acts_j$ be such that $n_{i_j} \simtreetrans{\acts_j} n_{i'_j}$.
All these infinitely
many sequences of actions $\acts_j$
have bounded length (smaller than $|P| \times 2^{|Q|}$), hence
infinitely many of them will coincide
(this is because there are only boundedly many distinct actions
that are admitted).
Let $\alpha$ be such a sequence of actions that is considered
for infinitely many paths $n_{i_j} \simtreetrans{\alpha} n_{i'_j}$.
Moreover, being the possible distinct $(p_i,Q_i)$ finite,
there exists one pair $(p,Q)$ such that
infinitely many of these paths 
$n_{i_j} \simtreetrans{\alpha} n_{i'_j}$
will be such that $\mathcal S_{i_j}=\mathcal S_{i'_j}=(p,Q)$.

Summarising, we have proved the existence of
$(n_{v_1},n_{v'_1}), (n_{v_2},n_{v'_2}), \dots ,(n_{v_j},n_{v'_j}),\dots$,
with $v_j<v'_j<v_{j+1}$ for all $j$,
for which there exist $(p,Q)$ and $\alpha$ such that, 
for all $j$, $n_{v_j} \simtreetrans{\alpha} n_{v'_j}$ and
$\mathcal S_{v_j}=\mathcal S_{v'_j}=(p,Q)$.

We now consider $\word=\receives{\alpha}$.
We have that the input actions in $\word$,
executed in each path $n_{v_j} \simtreetrans{\alpha} n_{v'_j}$,
will be matched by the input context $\ctx A_j$
of $\mathcal L(n_{v_j}) = \simpair p \ctx A_j[q_j]^{h \in H}$.
There are two possibilities:
\begin{enumerate}
\item\label{case1}
either $\word$ is included
in a path root-hole of $\ctx A_j$ (hence 
$\extract{\ctx A_j}{\word}$ is defined),
\item\label{case2}
or there exists a path root-hole which corresponds
to a prefix of $\word$ (in this case we have that 
$\word=\word_1 \cdot \word_2$ with 
$\extract{\ctx A_j}{\word_1}=[\,]_{t_{j}}$).
\end{enumerate}
At least one of the two cases occurs infinitely often, i.e., there
exist infinitely many indices $j$, such that for all paths
$n_{v_j} \simtreetrans{\alpha} n_{v'_j}$ item \ref{case1} holds, or
there exist infinitely many indices $j$, such that for all paths
$n_{v_j} \simtreetrans{\alpha} n_{v'_j}$ item \ref{case2} holds.  In
the first case, we have that there exist at least two indices $j_1$
and $j_2$ such that
$\minHeight{\extract{\ctx A_{j_1}}{\word}} \leq
\minHeight{\extract{\ctx A_{j_2}}{\word}}$ (in fact, $\minHeight{}$
returns a non negative value, hence such values cannot infinitely
decrease).  In the second case, we have that there exist at least two
indices $j_1$ and $j_2$ such that
$\extract{\ctx A_{j_1}}{\word_1}=[\,]_{t_{j_1}}$ and
$\extract{\ctx A_{j_2}}{\word_1}=[\,]_{t_{j_2}}$ for the same
$\word_1$ prefix of $\word$ (in fact, $\word$ has only finitely many
prefixes).

We can conclude that the thesis holds
by considering $i={v_{j_1}}$, $j={v'_{j_1}}$ and 
$k={v_{j_2}}$.
\end{proof}

Intuitively, the theorem above says that for each infinite
branch in the simulation tree, we can find special nodes $n_i$, $n_j$
and $n_k$ such that the set of states in $\ctx A_j$ (resp.\
$\ctx A_k$) is included in that of $\ctx A_i$ and the receive actions
in the path from $n_i$ to $n_j$ are such that: either ($i$) only a
precise prefix of such actions will be taken from the receive actions
accumulated in $n_i$ and $n_k$ or ($ii$) all of them will be taken
from the receive actions in which case $n_k$ must have accumulated
more receive actions than $n_i$.
Case ($i$) deals with infinite branches with only finite labels (hence
finite accumulation) while case ($ii$) considers those cases in which
there is unbounded accumulation along the infinite branch.

As an example of this latter case, consider the simulation
tree depicted in Figure \ref{fig:simtree}. Let $n_i = n_8$,
$n_j = n_{12}$ and $n_k = n_{16}$. These nodes are along the
same path, moreover we have 
$\mathcal L(n_i)=\simpair {q_1} \ctx A_i[q_h]^{h \in H_i}$,
$\mathcal L(n_j)=\simpair {q_1} \ctx A_j[q'_h]^{h \in H_j}$,
$\mathcal L(n_k)=\simpair {q_1} \ctx A_k[q''_h]^{h \in H_k}$
with $\{q_h \mid h \in H_i\} = \{q'_h \mid h \in H_j\} = \{q''_h \mid h \in H_k\}
= \{q_2\}$ and 
$0 = \minHeight{\extract{\ctx A_i}{\rcv{\exKO}}} \leq
      \minHeight{\extract{\ctx A_k}{\rcv{\exKO}}} = 2$.
Notice that the path in the simulation tree from $n_8$ to $n_{16}$
can be infinitely repeated with the effect of increasing the height 
of the input context.

\smallbreak 

\noindent Based on Theorem~\ref{thm:termination},
the following {\em algorithm} generates a finite subtree of
$\simtree{M_1}{M_2}$:
\begin{quote}
  Starting from the root, compute the branches\footnote{The order
    nodes are generated is not important (our implementation uses a
    DFS approach, cf.~\S\ref{sec:tool}).} of $\simtree{M_1}{M_2}$
  stopping when one of the following types of node is
encountered: a leaf,
a node $n$ with a label already seen along the path from the root to
  $n$,
or a node $n_k$ (with the corresponding node $n_i$) as those
  described by the above Theorem \ref{thm:termination}.
\end{quote}

\begin{exa}
  Consider the finite subtree in Figure~\ref{fig:simtree}.  It is precisely the finite subtree
  identified as described above: we stop generating the simulation
  tree at nodes $n_2$, $n_6$, $n_{11}$, and $n_{14}$ (because their labels
  have been already seen at the corresponding ancestors $n_0$, $n_4$,
  $n_{8}$, and $n_{12}$) and $n_{16}$ (because of the ancestors $n_8$ and
  $n_{12}$ such that $n_8$, $n_{12}$ and $n_{16}$ correspond to the
  nodes $n_i$, $n_j$ and $n_k$ of Theorem \ref{thm:termination}).
\end{exa}

\noindent When the computed finite subtree contains an unsuccessful
leaf,
we can immediately conclude that the considered communicating machines
are not related. Otherwise, we extract smaller finite
subtrees (from the subtree) that are potential candidates to be subsequently checked.

\begin{quote}
We define the $\anc$ function
as follows: for boundary nodes $n$ with an ancestor $n'$ such that
$\labfun(n) = \labfun(n')$ we define $\anc(n) = n'$; for boundary
nodes $n_k$ with the corresponding node $n_i$ as those described by
Theorem \ref{thm:termination}, we define $\anc(n_k) = n_i$.
The extraction of the finite subtrees is done by characterising their
roots (and taking as boundary their reachable boundary nodes): let
$P = \{ n \in \img{\anc} \qst \exists n' \dst \anc(n') = n \land
\labfun(n)\neq\labfun(n')\}$, the set of such roots is
$R = \{ n \in P \qst \not\!\exists n'\in P.\ n' \simtreetrans{}^+
n\}$.
\end{quote}

\label{rem:divide-simtree}
\noindent Intuitively, to extract subtrees, we restrict our attention
to the set $P$ of ancestors with a label different from their
corresponding boundary node (corresponding to branches that can
generate unbounded accumulation).
We then consider the forest of subtrees
rooted in nodes in $P$ without an ancestor in $P$.
Notice that for successful leaves we do not define 
$\anc$; hence, only extracted subtrees without successful nodes
have a completely defined $\anc$ function.  These are candidate
subtrees that will be checked as described in the next step.

\begin{exa}
  Consider the finite subtree in
  Figure~\ref{fig:simtree}.
Following the strategy above we extract from it the candidate subtree rooted at $n_8$ (white
  nodes), with boundary $\{n_{11},n_{14},n_{16}\}$.  Note that each
  ancestor node above $n_8$ has a label identical to its boundary
  node.
\end{exa}

 \subsubsection{Part 3. Checking whether the candidate subtrees are
witnesses of infinite branches}
The final step of our algorithm consists in verifying a
property on the identified candidate subtrees which guarantees 
that all branches
traversing the root of the candidate subtree are infinite,
hence successful.
A candidate subtree satisfies this property when it 
is also a \emph{witness subtree}, which is the
key notion (Definition~\ref{def:witness})
presented in this third part.

In order for a subtree to be a witness, we require that any 
behaviour in the simulation tree going beyond the subtree 
is the infinite repetition of the behaviour already observed 
in the considered finite subtree.
This infinite repetition is only possible if whatever 
receive actions are 
accumulated in the input context $\ctx A$ (using Rule
\textsf{(OutAcc)}) are eventually executed
by the candidate subtype $M_1$ in Rule \textsf{(InCtx)}.
The compatibility check between the receive actions that can be
accumulated and the receive actions that are eventually executed is
done by first synthesising a finite representation of
the possible (repeated) accumulation of the
candidate supertype $M_2$ and the possible (repeated) receive actions
of the candidate subtype $M_1$.
We then check whether these representations of the input actions are \emph{compatible}, wrt.\ the
$\sqsubseteq$-relation, see Definition~\ref{def:compatibility}.
We define these representations of the input behaviours as a system of
(possibly) mutually recursive equations, which we call a \emph{system
  of input tree equations}.
  
Intuitively, a system of input tree equations represents 
a family of trees, that we use to represent the input behaviour
of types. We need to consider families of trees because types include also
output actions that, in case we are concerned with input actions only, 
can be seen as internal silent actions,
representing nondeterministic choices among alternative 
future inputs (i.e. alternative subtrees).
  
\begin{defi}[Input Tree Equations]\label{def:equations}
  Given a set of variables $\mathcal V$, ranged over by $X$, an input
  tree expression is a term of the grammar
  \[
    E\ \ ::=\ \ X \ \mid\ \echoice{a_i:E_i}{i\in I}\ \mid\  \schoice{E_i}{i \in I}
  \]
The free variables of an input tree expression $E$ are the variables
  which occur in $E$.
Let $T_\mathcal V$ be the set of input tree expressions whose free
  variables are in $\mathcal V$.
  
  A system of input tree equations is
  a tuple $\mathcal G = (\mathcal V,X_0,\DEF)$ consisting of a set of
  variables $\mathcal V$, an initial variable $X_0 \in \mathcal V$,
  and with $\DEF$ consisting of exactly one input tree expression
  $X \eqdef E$ for each $X \in \mathcal V$, with
  $E \in \mathcal T_\mathcal V$.
\end{defi}
Given an input tree expression of the form $\echoice{a_i:E_i}{i\in I}$
or $\schoice{E_i}{i \in I}$, we assume that $I \neq \emptyset$,
$\forall i \neq j \in I \dst a_i \neq a_j$, and that the order of the
sub-terms is irrelevant.
Whenever convenient, we use set-builder notation to construct an input
tree expression, e.g., $\schset{E_i}{i \in I}$.
In an input tree equation, the construct $\echoice{a_i:E_i}{i\in I}$
represents the capability of accumulating (or actually executing)
the receive actions on
each message $a_i$
then behaving as in $E_i$.
The construct $\schoice{E_i}{i \in I}$ represents a \emph{silent
  choice} between the different capabilities $E_i$.
  
  We now define the notion of compatibility between two systems
  of input tree equations.
  Intuitively, two systems of input tree equations are compatible 
  when all the trees of the former have \emph{less} 
  alternatives than the trees of the latter. More precisely, at
  each input choice, the alternative branchings of the
  former are included in those of the latter. 

\begin{defi} [Input Tree Compatibility]\label{def:compatibility}
  Given two systems of input tree equations
  $\mathcal G = (\mathcal V,X_0,\DEF)$ and
  $\mathcal G' = (\mathcal V',X_0',\DEF')$, such that
$\mathcal V \cap \mathcal V' = \emptyset$, we say that
$\mathcal G$ is \emph{compatible} with $\mathcal G'$, written
  $\mathcal G \sqsubseteq \mathcal G'$,
if there exists a compatibility relation 
  $\mathcal R \subseteq \mathcal T_\mathcal V \times \mathcal
  T_\mathcal V'$. That is a relation $\mathcal R$
s.t.  $(X_0,X_0') \in \mathcal R$ and:
  
\begin{enumerate}
  \item \label{en:unfold-l}
    if $(X,E) \in \mathcal R$ then $(E',E) \in \mathcal R$ with $X \eqdef E'$;
  \item \label{en:unfold-r}
    if $(E,X) \in \mathcal R$ then $(E,E') \in \mathcal R$ with $X \mathbin{\smash{\eqdef}} E'$;
  \item \label{en:schoice-l}
    if $( \schoice{E_i}{i \in I}, E) \in \mathcal R$ then 
    $\forall i \in I.\ (E_i,E)\in \mathcal R$;
  \item \label{en:schoice-r}
    if $(E, \schoice{E_i}{i \in I}) \in \mathcal R$ then 
    $\forall i \in I.\ (E,E_i)\in \mathcal R$; 
  \item \label{en:echoice}
    if $(\echoice{a_i:E_i}{i\in I},\echoice{a_j:E'_j}{j\in J}) \in \mathcal R$ then 
    $I \subseteq J$ and $\forall i \in I.\ (E_i,E'_i)\in \mathcal R$.
  \end{enumerate}
  We extend the use of $\sqsubseteq$, defined on input tree equations,
  to terms $E \in T_\mathcal V$ and $E' \in T_\mathcal V'$; namely, we 
  write $E \sqsubseteq E'$ if there exists a compatibility relation $\mathcal R$ 
  s.t. $(E,E') \in \mathcal R$.
\end{defi}
Notice that compatibility is formally
defined following a coinductive approach that performs
the following checks
on $\mathcal G$ and $\mathcal G'$, starting from the
initial pair $(X_0,X_0')$.
The first two items of Definition~\ref{def:compatibility} let
variables be replaced by their respective definitions.
The next two items explore all the successors of silent choices.
The last item guarantees that all the receive actions of the l.h.s.
can be actually matched by receive actions in the r.h.s.
The check of compatibility will be used in Definition \ref{def:witness},
in order to control that the candidate supertype always has
input branchings included in those of the candidate subtype. More precisely, 
we will check that the system of input tree equations, that represents
the possible inputs of the supertype, is compatible with that of the candidate
subtype.

\begin{exa}\label{ex:eqsystem}
  Consider the two systems of input tree equations in
  Figure~\ref{fig:eqsystems}.
We have $\mathcal G \sqsubseteq \mathcal G'$. We enumerate a few
  pairs which must be in the embedding relation:
  \[
      \begin{array}{ll}
        \text{Initial variables:} & (X_0, Y_{n_8})
        \\
        \text{Unfold $X_0$ (Case~\eqref{en:unfold-l} of Def.~\ref{def:compatibility})}
                                  & (\echoice{\msg{ok} : X_{q_2,n_8} , \msg{ko} : X_{q_2,n_8}}{} , Y_{n_8})
        \\
        \text{Unfold $Y_{n_8}$ (Case~\eqref{en:unfold-r} of Def.~\ref{def:compatibility})}
                                  & (\echoice{\msg{ok} : X_{q_2,n_8} , \msg{ko} : X_{q_2,n_8}}{} ,  \schoice{Y_{n_9}}{})
        \\
        \text{Silent-choice (Case~\eqref{en:schoice-r} of Def.~\ref{def:compatibility})}
                                  &(\echoice{\msg{ok} : X_{q_2,n_8} , \msg{ko} : X_{q_2,n_8}}{} ,  Y_{n_9})
        \\
        \text{Unfold $Y_{n_9}$ (Case~\eqref{en:unfold-r} of Def.~\ref{def:compatibility})}
                                  & (\echoice{\msg{ok} : X_{q_2,n_8} , \msg{ko} : X_{q_2,n_8}}{} ,  \echoice{\msg{ok}: Y_{n_8}, \ \msg{ko} :       
                                    Y_{n_{10}}}{})
        \\
        \text{Choice (Case~\eqref{en:unfold-r} of Def.~\ref{def:compatibility})}
                                  & (X_{q_2,n_8} ,  Y_{n_8}) \text{ and } (X_{q_2,n_8},   Y_{n_{10}})
                                    
      \end{array}
    \]
  \end{exa}

\begin{figure}[t]
  \centering
  \begin{tabular}{c|c}
        \begin{tabular}{l@{\; $\eqdef$ \;}l}
      $X_0$ & $\echoice{\msg{ok} : X_{q_2,n_8} , \msg{ko} : X_{q_2,n_8}}{}$
      \\
      $X_{q_2,n_8}$ & $\echoice{\exOK : X_{q_2,n_9} , \ \exKO :
                      X_{q_2,n_9}}{}$
      \\
      $X_{q_2,n_9}$ & $ \schoice{X_{q_2,n_8}, \ X_{q_2,n_{10}} }{}$
      \\
      $X_{q_2,n_{10}} $ & $ \echoice{\exOK : X_{q_2,n_{12}} , \ \exKO :
                          X_{q_2,n_{12}}}{}$
      \\
      $X_{q_2,n_{12}}  $ & $  \echoice{\exOK : X_{q_2,n_{13}} , \ \exKO :
                           X_{q_2,n_{13}}}{}$
      \\
      $X_{q_2,n_{13}}  $ & $ \schoice{X_{q_2,n_{12}}, \ X_{q_2,n_{15}}
                           }{}$
      \\
      $X_{q_2,n_{15}}  $ & $ \echoice{\exOK : X_{q_2,n_{8}} , \ \exKO :
                           X_{q_2,n_{8}}}{}$          
    \end{tabular}
            &
              \begin{tabular}{l@{\; $\eqdef$ \;}l}
                $Y_{n_8} $ & $  \schoice{Y_{n_9}}{}$
\\
                $Y_{n_{9}} $ & $  \echoice{\msg{ok}: Y_{n_8}, \ \msg{ko} :
                               Y_{n_{10}}}{}$ 
\\
                $Y_{n_{10}} $ & $   \schoice{Y_{n_{12}}}{}$
                \\
$Y_{n_{13}} $ & $   \echoice{\msg{ok}: Y_{n_{12}}, \ \msg{ko} :
                                Y_{n_{15}}}{}$
                \\
$Y_{n_{15}} $ & $  \schoice{Y_{n_{8}}}{}$ 
              \end{tabular}
    \\
    \midrule
    \begin{tikzpicture}
      [node distance=0.9cm and 2.7cm
      ,scale=0.95, every node/.style={transform shape}
      ,every edge/.append style={shorten >=0.5pt}
      ]
      \node[echnode] (x28) {$\pvar{2}{8\phantom{8}}$};
\node[schnode,below=of x28] (x29) {$\pvar{2}{9\phantom{8}}$};
      \node[echnode,left=of x29] (x211) {$\pvar{2}{10}$};
      \node[echnode,above=of x211] (x212) {$\pvar{2}{12}$};
      \node[schnode,above=of x212] (x213) {$\pvar{2}{13}$};
      \node[echnode,right=of x213] (x215) {$\pvar{2}{15}$};
\node[echnode,very thick,xshift=1.3cm] (x0)  at (x28-|x211)   {$\pvarroot$};
\draw[->] (x0)  edge [bend left] node [below] {$\rcv{\exKO}$} (x28.north west);
      \draw[->] (x0)  edge [bend right] node [above] {$\rcv{\exOK}$} (x28.south west);
\draw[->] (x28)  edge [bend left] node [right] {$\rcv{\exOK}$} (x29);
      \draw[->] (x28)  edge [bend right] node [left] {$\rcv{\exKO}$} (x29);
\draw[silentedge] (x29)  edge node [left] {} (x211);
      \draw[silentedge] (x29.east) -- ([xshift=0.5cm]x29.east) |- (x28.east);
\draw[->] (x211)  edge [bend left] node [left] {$\rcv{\exKO}$} (x212);
      \draw[->] (x211)  edge [bend right] node [right] {$\rcv{\exOK}$} (x212);
\draw[->] (x212)  edge [bend left] node [left] {$\rcv{\exKO}$} (x213);
      \draw[->] (x212)  edge [bend right] node [right, near end] {$\rcv{\exOK}$} (x213);
\draw[silentedge] (x213)  edge node [left] {} (x215);
      \draw[silentedge] (x213.west) -- ([xshift=-0.5cm]x213.west) |- (x212.west);
\draw[->] (x215)  edge [bend left] node [right] {$\rcv{\exOK}$} (x28);
      \draw[->] (x215)  edge [bend right] node [left] {$\rcv{\exKO}$} (x28);
    \end{tikzpicture}
    \qquad\qquad
    &
      \qquad\qquad
      \begin{tikzpicture}
        [node distance=0.9cm and 2.3cm
        ,scale=0.95, every node/.style={transform shape}
        ,every edge/.append style={shorten >=0.5pt}
        ]
        \node[schnode,very thick] (y8) {$\cvar{8{\phantom{8}}}$};
        \node[echnode,below=of y8] (y9) {$\cvar{9\phantom{8}}$};
        \node[schnode,below=of y9] (y11) {$\cvar{10}$};
        \node[schnode,left=of y11] (y12) {$\cvar{12}$}; 
        \node[echnode,above=of y12] (y13) {$\cvar{13}$}; 
        \node[schnode,left=of y8] (y15) {$\cvar{15}$}; 
\draw[silentedge] (y8)  -- ([xshift=0.5cm]y8.east) |- (y9);
        \draw[silentedge] (y11)  edge node [left] {} (y12);
        \draw[silentedge] (y12) -- ([xshift=-0.5cm]y12.west) |- (y13);
        \draw[silentedge] (y15)  edge node [left] {} (y8);
\draw[->] (y9)  edge node [right] {$\rcv{\exOK}$} (y8);
        \draw[->] (y9)  edge node [right] {$\rcv{\exKO}$} (y11);
\draw[->] (y13)  edge node [right] {$\rcv{\exOK}$} (y12);
        \draw[->] (y13)  edge node [right] {$\rcv{\exKO}$} (y15);
      \end{tikzpicture}
    \\\midrule
    $\mathcal{G}$ & $\mathcal{G'}$
  \end{tabular}

  \caption{Input tree equations
    for $\minc{\client}{\clientsup}$ (Figures~\ref{fig:runex-machines} and~\ref{fig:runex-machines2}) and their graphical representations.
The starting variables are $\pvarroot$ and $\cvar{8}$.
Silent choices are diamond-shaped nodes, other nodes are
    rectangles.}
  \label{fig:eqsystems}
\end{figure}

Before giving the definition of a witness subtree, we introduce a
few auxiliary functions on which it relies.
Given a machine $M= (\States, \state_0, \Trans)$, a state $q \in Q$,
and a word $\word \in \Alphas$, we define $\acctree{q}{\word}$ as
follows:
\[
\acctree{q}{\word} = 
\begin{cases}
q & \text{if } \word = \lempty
  \\
  \ctx A[\acctree{q'_i}{\word'}]^{i\in I}
  & \text{if }
  \word = a \cdot \word',
\ctx A[q_i]^{i \in I}\!=\!\mathsf{inTree}(q),
\forall i \!\in\! I.\ q_i \trans{\snd{a}} q'_i
  \\
  \perp & \text{otherwise}
\end{cases}
\]
Function $\acctree{q}{\word}$ is a key ingredient of the
witness subtree definition as it allows for the construction of the accumulation of
receive actions
(represented as an input tree) that is
generated from a state $q$ mimicking the sequence of send
actions sending the messages in $\word$.
We illustrate the usage of $\acctree{q}{\word}$ in
Example~\ref{ex:minacc} below.

We use the auxiliary function $\closesthole{k}{Q'}{\acts}$ below to
ensure that the effect of performing the transitions from an ancestor
to a boundary node is that of increasing (possibly
non-strictly) the accumulated receive actions.
Here, $k$ represents a known lower bound for the length of the sequences
of receive actions accumulated in an input context $\ctx A$,
i.e., a lower bound for $\minHeight{\ctx A}$.
Assuming that the holes in $\ctx A$ contain the states populating
the set of states $Q'$,
the function returns a lower bound for the length of the
sequences of accumulated receive actions
after the transitions in $\acts$ have been executed.
Formally, given a natural number $k$ ($k \geq 0$), a sequence of action
$\acts \in \Acts$, and a set of states
${\{q_j \mid j\in J\}} \subseteq Q$, we define this function as
follows:

\[
\hspace*{-2mm}
\begin{array}{l@{}}
\closesthole{k}{\{q_j \mid j\in J\}}{\acts} = 
\vspace*{+1mm}\\
\begin{cases}
  k
  & 
  \begin{array}{l@{}}
  \text{if } \acts = \lempty \end{array}
  \vspace*{+1mm}\\
  \closesthole{k-1}{\{q_j \mid j\in J\}}{{\acts}'}
  & 
  \begin{array}{l@{}}
  \text{if } \acts =\ ?a \cdot {\acts}' \land k>0 
  \end{array}
    \vspace*{+1mm}\\
\closesthole{k{+}
    \text{min}_{j \in J} \minHeight{\ctx A_j}
}{\{ q_{i,j}\!\! \mid\! j \in J, i \in I_j \}}{{\acts}'} \!\!\!\!\!\!\!
&
  \begin{array}{l@{}}
  \text{if }
  \acts =\ !a \cdot {\acts}' \ \land \! \vspace*{-1mm}\\ 
  \forall j\! \in\! J \ldotp
\acctree{q_j}{a}\!\!=\!\!\ctx A_j[q_{i,j}]^{i \in I_j}
  \end{array}
\vspace*{+1mm}\\
  \perp & 
  \begin{array}{l@{}}
  \text{otherwise}
  \end{array}
\end{cases}
\end{array}
\]

\begin{exa}\label{ex:minacc}
  Consider the transitions from node $n_7$ to $n_9$ in
  Figure~\ref{fig:simtree}.
There are two send actions $\snd{\exLQ}$ and $\snd{\exHQ}$ that
  cannot be directly fired from state $q_2$ which is a receiving state;
  the effect is to accumulate receive actions. 
Such an accumulation is
  computed by
  $\acctree{q_2}{\exLQ \cdot \exHQ} = \langle \exKO: \langle \exKO:
  q_2 , \exOK: q_2 \rangle, \exOK: \langle \exKO: q_2 , \exOK: q_2
  \rangle \rangle$.
For this sequence of transitions, the effect on the (minimal) length
  of the accumulated receive actions can be computed by
  $\closesthole{0}{\{q_2\}}{\snd{\exLQ} \cdot \snd{\exHQ}}=2$; meaning
  that before executing the sequence of transitions
  $\snd{\exLQ} \cdot \snd{\exHQ}$ state $q_2$ has not accumulated
  receive actions in front, while at the end an input context with
  minimal depth 2 is generated as accumulation.
\end{exa}

We now prove a couple of properties of $\closesthole{n}{Q'}{\acts}$.

\begin{prop}\label{lab:lemmaMarioAndGigio}
If $\closesthole{k}{Q'}{\psi}$ is defined, then the following statements hold:
\begin{enumerate}
\item\label{lemmaMG1}
for each $k'\geq k$
we have that
$k-\closesthole{k}{Q'}{\psi}=k'-\closesthole{k'}{Q'}{\psi}$;
\item\label{lemmaMG2}
if $\psi=\psi' \cdot \psi''$ then
$\closesthole{k}{Q'}{\psi' \cdot \psi''}=\closesthole{\closesthole{k}{Q'}{\psi'}}{Q''}{\psi''}$
with\\ $Q''=\bigcup_{q \in Q'}\{q_h \mid h \in H\ \text{s.t.}\ 
      \acctree{q}{\sends{\acts'}} = \ctx A''[q_{h}]^{h\in H}\}$;
\item\label{lemmaMG3}
if $\closesthole{k}{Q''}{\psi}$ is defined for a set of states $Q''$
s.t. $Q' \subseteq Q''$ then 
$\closesthole{k}{Q''}{\psi} \leq \closesthole{k}{Q'}{\psi}$.
\end{enumerate}
A direct consequence of $(1)$ is that: $\closesthole{k'}{Q'}{\psi}-\closesthole{k}{Q'}{\psi}=k'-k \geq 0$.

\end{prop}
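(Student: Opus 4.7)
The plan is to prove the three statements in order, each by induction on the length of $\psi$ (on $\psi'$ for (2)), peeling off one leading action at a time and using the recursive clauses of $\closesthole{\cdot}{\cdot}{\cdot}$. A common preliminary remark is that definedness of $\closesthole{k}{Q'}{\psi}$ is monotone in $k$ and antitone in $Q'$: enlarging $k$ never breaks the $k>0$ guard in the receive clause, and shrinking $Q'$ never breaks the $\acctree{q_j}{a}$ guards in the send clause. Hence in each item the stated hypothesis on one side guarantees that every quantity appearing on both sides of the claim is defined.

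For (1), induct on $|\psi|$. The base $\psi = \epsilon$ is immediate, both sides being $0$. If $\psi = {?}a \cdot \psi'$, definedness forces $k>0$, hence $k' \geq k > 0$, and both sides reduce to the analogous claim for $\psi'$ with $(k{-}1, k'{-}1)$, which is the IH. If $\psi = {!}a \cdot \psi'$, the definition augments both $k$ and $k'$ by the \emph{same} quantity $m = \min_{j\in J}\minHeight{\ctx A_j}$ and replaces $Q'$ by the \emph{same} new set, because that update depends only on $Q'$ and $a$; so the IH applied to $\psi'$ with $(k{+}m, k'{+}m)$ closes the case. The direct consequence is then a trivial rearrangement.

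For (2), induct on $|\psi'|$. The base $\psi' = \epsilon$ is immediate since $\closesthole{k}{Q'}{\epsilon} = k$ and the set $Q''$ prescribed by the statement coincides with $Q'$. In the inductive step, one unfolds the definition on the leading action of $\psi'$ and applies the IH to its suffix concatenated with $\psi''$; the key observation is that $Q''$ is determined solely by the send actions in $\psi'$, so a leading receive is handled identically on both sides and a leading send triggers precisely the same $\acctree{\cdot}{a}$ computation whether one views it inside $\psi'$ alone or inside $\psi'\cdot\psi''$.

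For (3), induct on $|\psi|$, maintaining the invariant $Q' \subseteq Q''$. The receive case is routine: both sides drop $k$ by one and the sets are unchanged. The send case is where the work concentrates. After one step the LHS becomes $\closesthole{k + m''}{Q''_{\mathit{new}}}{\psi'}$ and the RHS becomes $\closesthole{k + m'}{Q'_{\mathit{new}}}{\psi'}$, where $m' = \min_{q\in Q'}\minHeight{\acctree{q}{a}}$ and $m'' = \min_{q\in Q''}\minHeight{\acctree{q}{a}}$. Since a minimum over a larger set is no larger, $m'' \leq m'$, and by construction $Q'_{\mathit{new}} \subseteq Q''_{\mathit{new}}$. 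The IH then gives $\closesthole{k+m''}{Q''_{\mathit{new}}}{\psi'} \leq \closesthole{k+m''}{Q'_{\mathit{new}}}{\psi'}$, and the direct consequence of (1), applied with the fixed context $(Q'_{\mathit{new}},\psi')$ and starting values $k+m'' \leq k+m'$, yields $\closesthole{k+m''}{Q'_{\mathit{new}}}{\psi'} \leq \closesthole{k+m'}{Q'_{\mathit{new}}}{\psi'}$; chaining the two inequalities closes the case. The main obstacle is precisely this send-step combination: it forces the three items to be proved in the order (1) $\to$ (2) $\to$ (3), and it requires verifying carefully that the set-update in the send branch preserves the inclusion $Q' \subseteq Q''$, which in turn rests on the uniform shape of $\acctree{q}{a}$ across $q$.
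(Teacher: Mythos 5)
Your proposal is correct and follows essentially the same route as the paper's proof: each item is established by induction on the length of the action sequence (on $\psi'$ for item~(2)), with the same case split on a leading receive versus send, and the send case of item~(3) is closed exactly as in the paper by combining the inductive hypothesis on the enlarged sets with the monotonicity-in-$k$ consequence of item~(1). Your preliminary remark on definedness being monotone in $k$ and antitone in the state set is a mild refinement that the paper leaves implicit, but it does not change the argument.
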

\begin{proof}
Item \ref{lemmaMG1} is proved by induction on the length of $\psi$.
If the length of $\psi$ is $0$ then $\psi=\lempty$ and $k-\closesthole{k}{Q'}{\lempty}=
k'-\closesthole{k'}{Q'}{\lempty}=0$. In the inductive case we have two distinct cases:
if $\psi=?a\cdot\psi'$ we have $k > 0$ and
$k-\closesthole{k}{Q'}{?a\cdot\psi'}=k-\closesthole{k-1}{Q'}{\psi'}$ and
$k' > 0$ and
$k'-\closesthole{k'}{Q'}{?a\cdot\psi'}=k'-\closesthole{k'-1}{Q'}{\psi'}$,
and by inductive hypothesis $k-1-\closesthole{k-1}{Q'}{\psi'}=k'-1-\closesthole{k'-1}{Q'}{\psi'}$ hence also  $k-\closesthole{k-1}{Q'}{\psi'}=k'-\closesthole{k'-1}{Q'}{\psi'}$;
if $\psi=!a\cdot\psi'$ we have 
$k-\closesthole{k}{Q'}{!a\cdot\psi'}=k-\closesthole{k+w}{Q''}{\psi'}$ and
$k'-\closesthole{k'}{Q'}{!a\cdot\psi'}=k'-\closesthole{k'+w}{Q''}{\psi'}$ 
for a set of states $Q''$ and a value $w$,
and by inductive hypothesis 
$k+w-\closesthole{k+w}{Q''}{\psi'}=k'+w-\closesthole{k'+w}{Q''}{\psi'}$
hence also
$k-\closesthole{k+w}{Q''}{\psi'}=k'-\closesthole{k'+w}{Q''}{\psi'}$.

Item \ref{lemmaMG2} is proved by induction on the length of $\psi'$.
In the base case, the thesis directly follows from $\closesthole{k}{Q'}{\lempty}=k$
and $\acctree{q}{\lempty} = q$ (hence we have $Q''=Q'$).
In the inductive case we have two distinct cases:
If $\psi'=?a\cdot\psi'''$ we have  ($k >0$ because $\closesthole{k}{Q'}{?a\cdot\psi'''\cdot\psi''}$ is defined) 
$\closesthole{\closesthole{k}{Q'}{?a\cdot\psi'''}}{Q''}{\psi''}=
\closesthole{\closesthole{k-1}{Q'}{\psi'''}}{Q''}{\psi''}$,
but the latter, by applying the inductive hypothesis, coincides with
$\closesthole{k-1}{Q'}{\psi'''\cdot\psi''}=\closesthole{k}{Q'}{?a\cdot\psi'''\cdot\psi''}$.
If $\psi'=!a\cdot\psi'''$ we have
$\closesthole{\closesthole{k}{Q'}{!a\cdot\psi'''}}{Q''}{\psi''}=
\closesthole{\closesthole{k+w}{Q'''}{\psi'''}}{Q''}{\psi''}$,
for the set of states $Q'''$ obtained by allowing the states
in $Q'$ to anticipate $!a$ and a value $w$ that depends on $Q'$ and $a$ (see definition of the $\mathsf{minAcc}$ function);
the latter, by applying the inductive hypothesis, coincides with
$\closesthole{k+w}{Q'''}{\psi'''\cdot\psi''}$
(because $Q''$ is obtained from $Q'''$ by allowing the states in $Q'''$ 
to anticipate the send actions in $\psi'''$), which is equal to $\closesthole{k}{Q'}{!a\cdot\psi'''\cdot\psi''}$
as $Q'''$ and $w$ depend on $Q'$ and $a$ as described above.

Item \ref{lemmaMG3} is proved by induction on the length of $\psi$.
The base case is trivial because 
$\closesthole{k}{Q''}{\lempty} = \closesthole{k}{Q'}{\lempty}=k$.
In the inductive case we have two distinct cases:
If $\psi=?a\cdot\psi'$ we have 
$\closesthole{k}{Q''}{?a\cdot\psi'}=\closesthole{k-1}{Q''}{\psi'}$
with the latter, by inductive hypothesis, that is smaller than or equal to 
$\closesthole{k-1}{Q'}{\psi'}=\closesthole{k}{Q'}{?a\cdot\psi'}$.
If $\psi=!a\cdot\psi'$ we have
$\closesthole{k}{Q''}{!a\cdot\psi'}=\closesthole{k+w}{Q'''}{\psi'}$
for a set of states $Q'''$ obtained from $Q''$ by allowing its states
to anticipate $!a$ and $w$ the corresponding minimal $\mathsf{height}$.
Now, if we allow the states in the smaller (or equal) set $Q'$ to anticipate
$!a$, we obtain a smaller (or equal) set $Q''''$ and a value $w'$ that cannot
be smaller than $w$, hence we can apply the inductive hypothesis 
to obtain the greater or equal value 
$\closesthole{k+w}{Q''''}{\psi'}$,
which is smaller or equal, for item \ref{lemmaMG1} of this Proposition,
than $\closesthole{k+w'}{Q''''}{\psi'}=\closesthole{k}{Q'}{!a\cdot\psi'}$.
\end{proof}

We finally give the definition of witness subtree.

\begin{defi}[Witness Subtree]\label{def:witness}
  Let $M_1=(P,p_0,\delta_1)$ and $M_2=(Q,q_0,\delta_2)$ be two
  communicating machines with
  $\simtree{M_1}{M_2}=(N, n_0, \simtreetrans{},
  \mathcal L, \Act, P\times\mathcal T_Q)$.
A candidate subtree of $\simtree{M_1}{M_2}$ with root
  $r$, boundary $B$, and ancestor function $\anc$,
  is a \emph{witness} if the following holds:
  \begin{enumerate}
\item \label{wt:no-out-loop}
For all $n \in B$, given $\acts$ such that $\ancestor{n} \simtreetrans{\acts} n$,
    we have $| \receives{\acts}| > 0$.

\item For all $n \in \img{\anc}$ and $n' \in \img{\anc} \cup B$
    such that
$n \simtreetrans{\acts} n'$,
$
    \mathcal L(n)=\simpair p \ctx A[q_{i}]^{i\in I}$,
    and
$\mathcal L(n')= \simpair {p'} \ctx A'[q_{j}]^{j\in J}$,
we have that : \begin{enumerate}
    \item \label{wt:set-leaves}
$\forall i \in I \ .\ \{q_h \mid h \in H\ \text{s.t.}\ 
      \acctree{q_i}{\sends{\acts}} = \ctx A''[q_{h}]^{h\in H}\} 
      \subseteq  \{q_j \mid j \in J\}$;

      \item \label{wt:min-height}
if $n' \in B$ then $\closesthole{\minHeight{\ctx A}}{\{q_i\mid i\in I\}}{\acts}
      \geq \minHeight{\ctx A}$.
    \end{enumerate}
  \item  \label{wt:embedding}
    $\mathcal G \sqsubseteq \mathcal G'$ where
    \begin{enumerate}
    \item \label{wt:producer}
      $\mathcal G = (\{X_0\} \cup \{X_{q,n} \mid q \in Q, n \in \subtree{S}{r}{B} \!\setminus\!B \},X_0,\DEF)$ with 
      $\DEF$ defined as follows:
\begin{enumerate}
      \item \label{wt:producer-root}
        $X_0 \eqdef T \{{X_{q,r}}/q \mid q \in  Q\}$, with
        $\mathcal L(r)=\simpair p  T$
      \item \label{wt:producer-node}
        $X_{q,n} \eqdef$\\
        $
        \left \{ \!\!\!
          \begin{array}{ll@{}}
            \schset{ X_{q,tr(n')}}{\exists a.n \simtreetrans{?a}n'} & \text{if}\
                                                                      \exists a.n \simtreetrans{\! ?a \!}
            \\
            \schset{ \ctx A[X_{q'_i,tr(n')}]^{i\in I} \!}
            {\!\exists a.n \simtreetrans{!a}n' \land  
            \intree{q}\!=\!\ctx A[q_i]^{i\in I}\! \land \forall i \!\in\! I.
            q_i \trans{!a} q'_i} \!\!\!\!
                                                                    & \text{otherwise}
          \end{array}
        \right .
        $
\end{enumerate} 
      
    \item \label{wt:consumer}
      $\mathcal G' = (\{Y_{n} \mid n \in \subtree{S}{r}{B} \!\setminus\!B\},Y_r,\DEF')$ with 
      $\DEF'$ defined as follows:
      
      \begin{tabular}{p{8cm}p{4cm}}
        $
        Y_n \eqdef
        \begin{cases}
          \schset{ Y_{tr(n')}}{ n \simtreetrans{!a}n'}
& \text{if } \exists n'. n\simtreetrans{!a} n'
\\
\echset{a:Y_{tr(n')}}{
            n\simtreetrans{?a} n'}
& \text{if }   \exists n'. n\simtreetrans{?a} n'  
        \end{cases}
            $
\end{tabular}
    \end{enumerate}
where $tr(n)=n$, if $n \not\in B$; $tr(n)=\ancestor{n}$, otherwise.
  \end{enumerate}
\end{defi}

\noindent
Condition~\eqref{wt:no-out-loop} requires the existence of a receive
transition between an ancestor and a boundary node. This implies that
if the behaviour beyond the witness subtree is the repetition
of behaviour already observed in the subtree, then 
there cannot be send-only cycles.

\noindent
Condition~\eqref{wt:set-leaves} requires that the transitions from
ancestors to boundary nodes (or to other ancestors) are such that they
include those behaviours that can be computed by the $\mathsf{accTree}$
function. 
We assume that this condition does not hold if
$\acctree{q_i}{\sends{\acts}} = \bot$ for any $i \in I$; hence the
states $q_i$ of $M_2$ in an ancestor are able to mimic all the send
actions performed by $M_1$ along the sequences of transitions in the
witness subtree starting from the considered ancestor.

\noindent
Condition~\eqref{wt:min-height} ensures that 
by repeating transitions from ancestors to 
boundary nodes, the accumulation
of receive actions is, overall, non-decreasing. In other words, the rate at
which accumulation is taking place is higher than the rate at which
the context is reduced by Rule \textsf{(InCtx)}.

\noindent
Condition~\eqref{wt:embedding} checks that
the receive actions that can be accumulated by $M_2$(represented by $\mathcal G$) and 
those that are expected to be actually executed by $M_1$
(represented by $\mathcal G'$) are compatible.
In $\mathcal G$, there is an equation for the root node and for each
pair consisting of a local state in $M_2$ and a node $n$ in the
witness subtree.
The equation for the root node is given in~\eqref{wt:producer-root},
where we simply transform an input context into an input
tree expression.
The other equations are given in~\eqref{wt:producer-node}, where we
use the partial function $\intree{q}$. Each equation
represents what can be accumulated by starting from node
$n$ (focusing on local state $q$).
In $\mathcal G'$, there is an equation for each node $n$ in the
witness subtree, as defined in~\eqref{wt:consumer}
There are two types of equations depending on the type of transitions
outgoing from node $n$. A send transition leads to silent choices,
while receive transitions generate corresponding receive choices.

\begin{exa}
  We have that the candidate subtree rooted at $n_8$ in Figure~\ref{fig:simtree}
  satisfies Definition~\ref{fig:simtree}.
\eqref{wt:no-out-loop} Each path from an ancestor to a boundary node
  includes at least one receive action.
\eqref{wt:set-leaves} For each sequence of transitions from an
  ancestor to a boundary node (or another ancestor) the behaviour of
  the states of $M_2$, as computed by the $\mathsf{accTree}$ function,
  has already been observed.
\eqref{wt:min-height} For each sequence of transitions from an
  ancestor to a boundary node, the rate at which receive actions
  are accumulated is higher than or equal to the rate at which
  they are removed from the accumulation.
\eqref{wt:embedding} The systems of input tree equations
  $\mathcal G$ \eqref{wt:producer} and $\mathcal G'$
  \eqref{wt:consumer} are given in Figure~\ref{fig:eqsystems}, and are
  compatible, see Example~\ref{ex:eqsystem}.

  We now describe how $\mathcal G$ and $\mathcal G'$ 
  (Figure~\ref{fig:eqsystems}) are
  constructed from the witness tree rooted at $n_8$ in
  Figure~\ref{fig:simtree}.
For $\mathcal G$ we have the following equations:
  \begin{itemize}
  \item
    $X_0 \eqdef \echoice{\msg{ok} : X_{q_2,n_8} , \msg{ko} :
      X_{q_2,n_8}}{}$ since the root of the witness tree is $n_8$ and
    its label is
    $q_1 \nincsymbol \echoice{\msg{ok} : q_2, \ \msg{ko} : q_2}{} $.
In Figure~\ref{fig:eqsystems}, we depict this equation as a pair
    of transitions from the node labelled by $X_0$ to the node labelled
    by $X_{q_2,n_8}$
\item
    $X_{q_2,n_8} \eqdef \echoice{\exOK : X_{q_2,n_9} , \ \exKO :
      X_{q_2,n_9}}{}$ since $n_8$ has a unique outgoing \emph{send}
    transition to $n_9$, i.e., $n'$ in Case~\eqref{wt:producer-node}
    of Definition~\ref{def:witness}, and
    $\intree{q_2} = \echoice{\exOK : q_1 , \ \exKO : q_1}{}$ with
    $q_1 \trans{\snd{\exHQ}} q_2$ and $q_1 \trans{\snd{\exLQ}} q_2$ in $\clientsup$
\item
    $X_{q_2,n_9} \eqdef \schoice{X_{q_2,n_8}, \ X_{q_2,n_{10}} }{}$
    since $n_9$ has two \emph{receive} transitions: one to $n_{11}$ (a
    boundary node whose ancestor is $n_8$, i.e., $tr(n_{11})=n_8$) and
    one to $n_{10}$ (which is not boundary node, i.e.,
    $tr(n_{10})=n_{10}$)
\item
    $X_{q_2,n_{10}} \eqdef \echoice{\exOK : X_{q_2,n_{12}} , \ \exKO :
      X_{q_2,n_{12}}}{}$ since $n_{10}$ has a unique outgoing
    \emph{send} transition to $n_{12}$, and
    $\intree{q_2} = \echoice{\exOK : q_1 , \ \exKO : q_1}{}$
\item
    $X_{q_2,n_{12}} \eqdef \echoice{\exOK : X_{q_2,n_{13}} , \ \exKO :
      X_{q_2,n_{13}}}{}$ since $n_{12}$ has a unique outgoing
    \emph{send} transition to $n_{13}$, and
    $\intree{q_2} = \echoice{\exOK : q_1 , \ \exKO : q_1}{}$
\item
    $X_{q_2,n_{13}} \eqdef \schoice{X_{q_2,n_{12}}, \ X_{q_2,n_{15}}
    }{}$ since $n_9$ has two \emph{receive} transitions: one to
    $n_{14}$ (a boundary node whose ancestor is $n_{12}$) and one to
    $n_{15}$ (which is not boundary node)
\item
    $X_{q_2,n_{15}} \eqdef \echoice{\exOK : X_{q_2,n_{8}} , \ \exKO :
      X_{q_2,n_{8}}}{}$ since
    $\intree{q_2} = \echoice{\exOK : q_1 , \ \exKO : q_1}{}$ and
    $n_{15}$ has a unique outgoing \emph{send} transition to $n_{16}$,
    which is a boundary node ($n_{16} \in B$) whose ancestor is $n_8$.
\end{itemize}
  We omit the other equations, e.g., $X_{q_1, n_8}$ as they are
  not reachable from $X_0$.

  For $\mathcal G'$ we have the following equations:
  \begin{itemize}
  \item $Y_{n_8} \eqdef \schoice{Y_{n_9}}{}$ since $n_8$ has a unique
    \emph{send} transition to $n_9$, i.e., $n'$ in Case~\eqref{wt:consumer} of
    Definition~\ref{def:witness}, and $n_9$ is not a boundary node
\item
    $Y_{n_{9}} \eqdef \echoice{\msg{ok}: Y_{n_8}, \ \msg{ko} :
      Y_{n_{10}}}{}$ since $n_{9}$ has two \emph{receive} transitions:
    one to $n_{10}$ which is not a boundary node, and one to $n_{11}$
    which is a boundary node whose ancestor is $n_8$
\item $Y_{n_{10}} \eqdef  \schoice{Y_{n_{12}}}{}$ since $n_{10}$ has a unique
    \emph{send} transition to $n_{12}$ which is not a boundary node
  \item $Y_{n_{12}} \eqdef \schoice{Y_{n_{13}}}{}$ since $n_{12}$ has a unique
    \emph{send} transition to $n_{13}$ which is not a boundary node
\item $Y_{n_{13}} \eqdef  \echoice{\msg{ok}: Y_{n_{12}}, \ \msg{ko} :
      Y_{n_{15}}}{}$ since $n_{13}$ has two \emph{receive} transitions: one to
    $n_{15}$ which is not a boundary node, and one to $n_{14}$ which is a
    boundary node whose ancestor is $n_{12}$
\item $Y_{n_{15}} \eqdef \schoice{Y_{n_{8}}}{}$ since $n_{15}$ has a
    unique \emph{send} transition to $n_{16}$, and $n_{16}$ is a
    boundary node whose ancestor is $n_8$.
  \end{itemize}
\end{exa}

We now prove the main property of the $\closesthole{k}{Q'}{\psi}$
function, i.e., given information $k$ and $Q'$ extracted from an 
ancestor $n$ in a witness subtree, such a function correctly computes 
a lower bound of the length of the input accumulation in a 
node $n'$ reachable from $n$ by executing the 
sequence of actions $\psi$.

\begin{prop}\label{lab:lemmaLowerBound}
Consider a witness subtree with ancestor function $\anc$;
given two nodes of the tree, $n \in \img{\anc}$ and 
$n'$ s.t. $n \simtreetrans{\acts} n'$, with
$\mathcal L(n)=\simpair p \ctx A[q_{i}]^{i\in I}$ and
$\mathcal L(n')=\simpair {p'} \ctx A'[q_{j}]^{j\in J}$,
we have that
$\closesthole{\minHeight{\ctx A}}{\{q_i \mid i \in I\}}{\psi} \leq 
\minHeight{\ctx A'}$.
\end{prop}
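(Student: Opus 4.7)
I would prove Proposition~\ref{lab:lemmaLowerBound} by induction on the length of $\psi$, essentially showing that each step of the simulation tree makes the actual $\minHeight{}$ evolve by at least as much as $\closesthole{}{}{}$ predicts, so composing steps preserves the lower bound.

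The base case $|\psi|=0$ is immediate: $n=n'$, hence $\ctx A=\ctx A'$, and by definition $\closesthole{\minHeight{\ctx A}}{\{q_i\mid i\in I\}}{\lempty}=\minHeight{\ctx A}=\minHeight{\ctx A'}$. For the inductive step I would write $\psi=\alpha\cdot\psi'$, let $n''$ be the intermediate node with $n\simtreetrans{\alpha} n''\simtreetrans{\psi'}n'$, and set $\mathcal L(n'')=\simpair{p''}{\ctx A''[q_k]^{k\in K}}$. The aim is to establish one-step bounds that I can then compose via Proposition~\ref{lab:lemmaMarioAndGigio}(\ref{lemmaMG2}).

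For the one-step bound I would split on the nature of $\alpha$. If $\alpha=?a$ then only Rule \textsf{(InCtx)} can fire (the context is non-empty along the shortest branch since we consume from its root), so $\ctx A''$ is a proper subtree of $\ctx A$ and hence $\minHeight{\ctx A''}\geq\minHeight{\ctx A}-1$; moreover $\{q_k\mid k\in K\}\subseteq\{q_i\mid i\in I\}$ and $\sends{\alpha}=\lempty$, so the set $Q''$ produced by Proposition~\ref{lab:lemmaMarioAndGigio}(\ref{lemmaMG2}) equals $\{q_i\mid i\in I\}$ and contains $\{q_k\mid k\in K\}$. If instead $\alpha=!a$, then Rule \textsf{(OutAcc)} (or \textsf{(Out)}, a degenerate instance with $\intree{q_j}=q_j$) fires, replacing each hole occupied by $q_j$ with $\intree{q_j}[q'_{j,h}]^{h\in H_j}=\ctx A_j[q'_{j,h}]^{h\in H_j}$; thus $\minHeight{\ctx A''}\geq\minHeight{\ctx A}+\min_{j\in J}\minHeight{\ctx A_j}$, which is exactly the increment prescribed by the $!a$-clause of $\closesthole{}{}{}$; and the new hole states are precisely $\{q'_{j,h}\mid j\in J,h\in H_j\}$, which coincides with the $Q''$ computed from $\acctree{q_j}{a}$ in Proposition~\ref{lab:lemmaMarioAndGigio}(\ref{lemmaMG2}).

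To finish the inductive case I would invoke Proposition~\ref{lab:lemmaMarioAndGigio}(\ref{lemmaMG2}) to split the $\closesthole$ computation as $\closesthole{\closesthole{\minHeight{\ctx A}}{\{q_i\}}{\alpha}}{Q''}{\psi'}$; apply the one-step bound $\closesthole{\minHeight{\ctx A}}{\{q_i\}}{\alpha}\leq\minHeight{\ctx A''}$ together with the consequence of Proposition~\ref{lab:lemmaMarioAndGigio}(\ref{lemmaMG1}) (monotonicity in the first argument) to raise the initial parameter to $\minHeight{\ctx A''}$; use Proposition~\ref{lab:lemmaMarioAndGigio}(\ref{lemmaMG3}) with $\{q_k\mid k\in K\}\subseteq Q''$ to replace $Q''$ by $\{q_k\mid k\in K\}$ (enlarging the value); and finally apply the induction hypothesis to the path $n''\simtreetrans{\psi'}n'$ to conclude $\leq\minHeight{\ctx A'}$. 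Well-definedness of all $\acctree{\cdot}{\cdot}$ calls encountered along the way is guaranteed by Condition~\eqref{wt:set-leaves} of the witness subtree, applied to the prefixes of $\psi$ that terminate at nodes in $\img{\anc}\cup B$, and extended to arbitrary intermediate $n''$ by noting that $\intree{q_j}$ is defined whenever no send-only loop is reachable (a property preserved from the ancestor $n$).

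The main obstacle I anticipate is keeping the bookkeeping of the hole-state sets straight in the $!a$ case, since the sets change nontrivially under accumulation and one must carefully match the $Q''$ of Proposition~\ref{lab:lemmaMarioAndGigio}(\ref{lemmaMG2}) with the actual hole states at the intermediate node in order to apply items (\ref{lemmaMG1}) and (\ref{lemmaMG3}) in the right direction; the inequalities are tight so mixing them up in the wrong direction would break the argument.
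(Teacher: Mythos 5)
Your overall strategy (induction on $|\acts|$, using Proposition~\ref{lab:lemmaMarioAndGigio} to compose per-step bounds, with the correct one-step analysis of the $?a$ and $!a$ cases and the right directions for items (\ref{lemmaMG1}) and (\ref{lemmaMG3})) matches the substance of the paper's argument, but your induction is set up in a way that does not quite close. You peel the \emph{first} action, $\acts=\act\cdot\acts'$, and then ``apply the induction hypothesis to the path $n''\simtreetrans{\acts'}n'$''. The statement you are inducting on, however, requires its source node to lie in $\img{\anc}$, and the intermediate node $n''$ is in general \emph{not} an ancestor; so the IH, as literally stated, is not applicable at that point. The ancestor hypothesis is not decorative: it is what lets you invoke Conditions~(\ref{wt:set-leaves}) and~(\ref{wt:min-height}) of Definition~\ref{def:witness} to guarantee that $\acctree{\cdot}{\cdot}$ and $\closesthole{\cdot}{\cdot}{\cdot}$ are defined along the path, and your own well-definedness paragraph leans on exactly those conditions. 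To repair the proof you must strengthen the induction statement, e.g.\ to all nodes $n$ lying on a path from an ancestor to a boundary node (or, equivalently, assume definedness of the relevant $\closesthole{}{}{}$ expression as a hypothesis), and then argue that definedness at intermediate nodes is inherited from the true ancestor via Proposition~\ref{lab:lemmaMarioAndGigio}(\ref{lemmaMG2}); as written, the key inductive step fails.

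It is worth contrasting this with how the paper sidesteps the issue: it peels the \emph{last} action, $\acts=\acts'\cdot\act$, so the source of every application of the IH remains the fixed ancestor $n\in\img{\anc}$, and it strengthens the invariant to carry, alongside the $\minHeight{}$ inequality, the containment of the hole states of the intermediate node in $\bigcup_{i\in I}\{q_h \mid \acctree{q_i}{\sends{\acts'}}=\ctx{A'''}[q_h]^{h\in H}\}$ (which you instead obtain per step). Your left-to-right decomposition makes the one-step bookkeeping cleaner — your bounds $\minHeight{\ctx A''}\geq\minHeight{\ctx A}-1$ for $?a$ and $\minHeight{\ctx A''}\geq\minHeight{\ctx A}+\min_{j}\minHeight{\ctx A_j}$ for $!a$, and the identification of the new hole states with the $Q''$ of Proposition~\ref{lab:lemmaMarioAndGigio}(\ref{lemmaMG2}), are all correct — but it only becomes a proof once the statement is generalized away from $\img{\anc}$ as described above.
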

\begin{proof}
We prove a more general result proceeding by induction on the length of 
$\acts$, i.e., that
$\closesthole{\minHeight{\ctx A}}{\{q_i \mid i \in I\}}{\psi} \leq 
\minHeight{\ctx A'}$ and $\{q_j \mid j \in J\} \subseteq 
\bigcup_{i \in I}\{q_h \mid 
h \in H\ \text{s.t.}\ \acctree{q_i}{\sends{\acts}}=\ctx{A'''}[q_{h}]^{h\in H}\}$. 

The base case is trivial because, by definition,
$\closesthole{\minHeight{\ctx A}}{\{q_i \mid i \in I\}}{\lempty} =
\minHeight{\ctx A}$ and having $n=n'$ then $\minHeight{\ctx A}=
\minHeight{\ctx A'}$. Moreover,
$\bigcup_{i \in I}\{q_h \mid 
h \in H\ \text{s.t.}\ \acctree{q_i}{\lempty}=\ctx{A'''}[q_{h}]^{h\in H}\}
= \{q_i \mid i \in I\}$ and having $n=n'$ then $\{q_i \mid i \in I\} = 
\{q_j \mid j \in J\}$.

In the inductive case we have either $\psi=\psi' \cdot ?a$ or
$\psi=\psi' \cdot !a$. In both cases we observe that,
by definition of witness subtree,
$\closesthole{\minHeight{\ctx A}}{\{q_i \mid i \in I\}}{\psi}$
is defined as it is defined for a longer sequence of
transitions from $n$ to a boundary node (traversing $n'$).

We first consider $\psi=\psi' \cdot ?a$. 
Let $n''$ be the node
reached after the sequence of transitions $\psi'$, and let
$\mathcal L(n'')=\simpair {p''} \ctx A''[q_{w}]^{w\in W}$.
By inductive hypothesis
we have that $\closesthole{\minHeight{\ctx A}}{\{q_i \mid i \in I\}}{\psi'} \leq
\minHeight{\ctx A''}$ and,
letting $Q''=\bigcup_{i \in I}\{q_h \mid 
h \in H\ \text{s.t.}\ \acctree{q_i}{\sends{\psi'}}=\ctx{A'''}[q_{h}]^{h\in H}\}$,
we also have $\{q_w \mid w \in W\} \subseteq Q''$.
By Proposition \ref{lab:lemmaMarioAndGigio}, item \ref{lemmaMG2},
we have that the following holds:
$\closesthole{\minHeight{\ctx A}}{\{q_i \mid i \in I\}}{\psi'\cdot ?a} =
\closesthole{\closesthole{\minHeight{\ctx A}}{\{q_i \mid i \in I\}}{\psi'}}{Q''}{?a}$.
By definition of $\mathsf{minAcc}$, we also have that 
$\closesthole{\closesthole{\minHeight{\ctx A}}{\{q_i \mid i \in I\}}{\psi'}}{Q''}{?a}=\closesthole{\minHeight{\ctx A}}{\{q_i \mid i \in I\}}{\psi'}-1$.
As a direct consequence of the inductive hypothesis we have 
$\closesthole{\minHeight{\ctx A}}{\{q_i \mid i \in I\}}{\psi'}-1 \leq
\minHeight{\ctx A''}-1$, but we have 
that $\minHeight{\ctx A''}-1 \leq \minHeight{\ctx A'}$,
because the effect of an input transition on the input context is simply that of 
consuming one initial input branching. 
We conclude this case by observing that $\{q_j \mid j \in J\} \subseteq 
\{q_w \mid w \in W\}$ because, as observed above, 
the effect of an input transition on the input context is simply that of 
consuming one initial input branching, without changing the states populating
the leaves of the input tree. On the other hand, the set of
states obtained from the states $q_i$ by anticipating the outputs
in $\psi'\cdot?a$ coincides with the above set $Q''$ because 
only send actions are considered, and $\sends{\psi'}=\sends{\psi'\cdot?a}$.
By inductive hypothesis, we have $\{q_w \mid w \in W\} \subseteq Q''$.

We now consider $\psi=\psi' \cdot !a$. Let $n''$ be the node
reached after the sequence of transitions $\psi'$, and let
$\mathcal L(n'')=\simpair {p''} \ctx A''[q_{w}]^{w\in W}$.
By inductive hypothesis
we have that
\[
  \closesthole{\minHeight{\ctx A}}{\{q_i \mid i \in I\}}{\psi'} \leq
  \minHeight{\ctx A''}
\]
and,
letting
\[Q''=\bigcup_{i \in I}\{q_h \mid 
  h \in H\ \text{s.t.}\ \acctree{q_i}{\psi'}=\ctx{A'''}[q_{h}]^{h\in H}\}
\]
we also have $\{q_w \mid w \in W\} \subseteq Q''$. 
By Proposition \ref{lab:lemmaMarioAndGigio}, item \ref{lemmaMG2},
\[
  \begin{array}{l}
    \closesthole{\minHeight{\ctx A}}{\{q_i \mid i \in I\}}{\psi'\cdot !a}
    \\
    = \closesthole{\closesthole{\minHeight{\ctx A}}{\{q_i \mid i \in I\}}{\psi'}}{Q''}{!a}
  \end{array}
\]
moreover, by definition of $\mathsf{minAcc}$,
we have 
\[
  \begin{array}{l}
    \closesthole{\closesthole{\minHeight{\ctx A}}{\{q_i \mid i \in I\}}{\psi'}}{Q''}{!a}
    \\ =
    \closesthole{\minHeight{\ctx A}}{\{q_i \mid i \in I\}}{\psi'}+z
  \end{array}
\]
with $z$ the minimal depth of the holes in the input tree that are
accumulated by the states in $Q''$ when they anticipate $!a$.
Having $n'' \simtreetrans{!a} n'$, we have that the minimal depth
of the input tree in $n'$, i.e.\ $\minHeight{\ctx A'}$, will increase that of $n''$, i.e.\ $\minHeight{\ctx A''}$, depending 
on new accumulation generated by the anticipation of $!a$, hence
the increase, i.e.\  $\minHeight{\ctx A'}-\minHeight{\ctx A''}$, will be greater than or equal to the minimal depth of the 
holes in the input tree that are accumulated by the states in 
$\{q_w \mid w \in W\}$ when they anticipate $!a$.
Being $\{q_w \mid w \in W\} \subseteq Q''$, we have that 
such an increase will be also greater than or equal to $z$, i.e. $\minHeight{\ctx A'}-\minHeight{\ctx A''} \geq z$.
As a direct consequence of the inductive hypothesis we have 
\[
  \closesthole{\minHeight{\ctx A}}{\{q_i \mid i \in I\}}{\psi'}+z \leq
  \minHeight{\ctx A''}+z \leq
  \minHeight{\ctx A'}
\]
We now consider $Q'''=\bigcup_{i \in I}\{q_h \mid 
h \in H\ \text{s.t.}\ \acctree{q_i}{\psi'\cdot !a}=\ctx{A'''}[q_{h}]^{h\in H}\}$;
we have that the states in $Q'''$ are generated by the states
in $Q''$ when they anticipate the output $!a$. 
The same holds also for $\{q_j \mid j \in J\}$, i.e.,  
the states $q_j$ are generated by the states
in $\{q_w \mid w \in W\}$ when they anticipate the output $!a$.
Having $\{q_w \mid w \in W\} \subseteq Q''$, we also have
$\{q_j \mid j \in J\}\subseteq Q'''$.
\end{proof}

We conclude by proving our main result; given a simulation
tree with a witness
subtree with root $r$, all the branches  in the simulation
tree traversing $r$ are infinite (hence successful).

\begin{restatable}{thm}{thmsoundess}\label{thm:soundess}
  Let $M_1=(P,p_0,\delta_1)$ and $M_2=(Q,q_0,\delta_2)$ be two
  communicating machines
with $\simtree{M_1}{M_2}=(N, n_0, \simtreetrans{}, 
  \mathcal L, \Act, P\times\mathcal T_Q)$.
  If $\simtree{M_1}{M_2}$ has a witness 
  subtree with root $r$ then
  for every node $n \in N$ such that $r \simtreetrans{}^* n$
  there exists $n'$ such that $n \simtreetrans{} n'$.
\end{restatable}
\begin{proof}
Let $B$ be the leaves of the witness subtree rooted in
$r$ (i.e. the witness subtree is $\subtree{S}{r}{B}$). 
If there exists $l \in B$ such that $n \simtreetrans{}^+ l$
the thesis trivially holds. For all other nodes $n$ such that
$r \simtreetrans{}^* n$, there exists $l \in B$ such that 
$l \simtreetrans{}^* n$. 

We now prove by induction on the length of $l \simtreetrans{}^* n$,
with $\mathcal L(n)=\simpair p \ctx A[q_{i}]^{i\in I}$,
that there exist $m,m' \in \subtree{S}{r}{B}\setminus B$,
s.t. 
$m \in \img{\anc}$,
$m \simtreetrans{\acts} m'$,
$\mathcal L(m)=\simpair {p'} \ctx A'[q_{j}]^{j\in J}$,
$\mathcal L(m')=\simpair p \ctx A''[q_{k}]^{k\in K}$
such that:
\begin{itemize}
\item
$\{q_i \mid i\in I\} \subseteq 
\bigcup_{j \in J}\{q_h \mid 
h \in H\ \text{s.t.}\ \acctree{q_j}{\sends{\acts}}=\ctx{A'''}[q_{h}]^{h\in H}\}
$;
\item
$\ctx A[X_{q_{i},{m'}}]^{i\in I} \sqsubseteq Y_{m'}$;
\item
$\minHeight{\ctx A} \geq 
\closesthole{\minHeight{\ctx A'}}{\{q_j\mid j \in J\}}{\acts}$.
\end{itemize} 
\bigskip

The base case is when $n \in B$. In this case, let $m,m' = 
\ancestor{n}$.

The first item follows from the definition
of candidate subtree according to which
$\{q_i \mid i\in I\} \subseteq \{q_j \mid j\in J\}$.

The second item follows from the following
reasoning: we consider $X_0 \sqsubseteq Y_r$
and apply on such pair the following transformations
of the l.h.s.\ $X_0$ and r.h.s.\ $Y_r$.
We consider
the sequence of transitions from 
$r$ to $n$ and proceed as follows.
For each receive transition $o \simtreetrans{?a} o'$ 
we modify the r.h.s.\ by considering $Y_{tr(o')}$ and
the l.h.s.\ by consuming the initial
message $a$ and by replacing each 
variable $X_{q,o}$ (for any $q$) with the variable $X_{q,tr(o')}$ that, inside their corresponding definitions, is present because of the transition $o \simtreetrans{?a} o'$.
For each send transition $o \simtreetrans{!a} o'$
we modify the r.h.s.\ by considering $Y_{tr(o')}$ and
the l.h.s.\ by replacing
each variable with the term that, inside their corresponding definitions, is present because of the transition $o \simtreetrans{!a} o'$.
Since $tr(n)=m'$,
we 
obtain $\ctx A[X_{q_{i},{m'}}]^{i\in I} \sqsubseteq Y_{m'}$.
Notice that the relation $\sqsubseteq$ actually holds because
in the modification of the initial terms $X_0$ and $Y_r$
s.t. $X_0 \sqsubseteq Y_r$ we follow the simulation game formalized in 
the Definition \ref{def:compatibility} of input tree compatibility:
in the case of input transitions $o \simtreetrans{?a} o'$
we consume an initial $a$ in both terms and resolve some silent 
choice in the l.h.s; 
in the case of output transitions $o \simtreetrans{!a} o'$
we resolve the initial silent choice in the r.h.s. while
in the l.h.s. we replace variables with their definition
and resolve the initial silent choice in such definitions.

The third item coincides with proving that 
$\minHeight{\ctx A} \geq \minHeight{\ctx A'}$
because, having $m=m'$, the sequence $\acts$ is empty in the expression
$\closesthole{\minHeight{\ctx A'}}{\{q_j\mid j \in J\}}{\acts}$.
By definition of 
witness subtree, we have that 
$\closesthole{\minHeight{\ctx A'}}{\{q_j\mid j \in J\}}{\acts'}
      \geq \minHeight{\ctx A'}$
for every sequence of transitions $\acts'$ from $m$ to a
boundary node, hence also to $n$. By Proposition \ref{lab:lemmaLowerBound},
if we consider the sequence of transitions $\acts'$ from $\anc(n)$ to $n$,
we have that      
$\minHeight{\ctx A} \geq 
\closesthole{\minHeight{\ctx A'}}{\{q_j\mid j \in J\}}{\acts'}$, 
from which we conclude $\minHeight{\ctx A} \geq \minHeight{\ctx A'}$. 
\bigskip

We now move to the inductive case.
Suppose, by inductive hypothesis, that the above three properties
hold for $n$ s.t. $l \simtreetrans{}^+ n$, and consider $n \simtreetrans{} n'$.
We separate the analysis in two parts, the case in which 
an output action $n \simtreetrans{!a} n'$ is executed, 
and the opposite case in which $n \simtreetrans{?a} n'$.
We have to show that in both cases there exist two 
nodes $m_1,m_2 \in \subtree{S}{r}{B}\setminus B$ 
such that the three properties, defined for $n,m,m'$,
hold also for $n',m_1,m_2$, respectively.
\medskip

We now consider $n \simtreetrans{!a} n'$.
In this case we have that $p \trans{!a} p'$,
hence also $m' \simtreetrans{!a} m''$.

We first consider the case in which $m'' \not \in B$:
in this case we take $m_1=m$ and $m_2=m''$.

The first item holds because; 
by inductive hypothesis we have 
$\{q_i \mid i\in I\} \subseteq \bigcup_{j \in J}\{q_h \mid 
h \in H\ \text{s.t.}\ \acctree{q_j}{\sends{\acts}}=\ctx{A'''}[q_{h}]^{h\in H}\}$;
by Definition \ref{def:witness} of witness subtree, item
\ref{wt:set-leaves}, we have that all the above states $q_h$
can anticipate the output action $!a$ because $\acctree{q_j}{\sends{\acts'}}$
is defined for a sequence of actions $\acts'$, from $m$ to a boundary 
node, that contains $\sends{\acts}\cdot !a$ as a prefix; and the states in 
$\{q_i \mid i\in I\}$ are
modified by the transition $!a$ in the same way as the same
states that are present also in the superset
$\bigcup_{j \in J}\{q_h \mid 
h \in H\ \text{s.t.}\ \acctree{q_j}{\sends{\acts}}=\ctx{A'''}[q_{h}]^{h\in H}\}$ change considering the longer 
sequence $\sends{\acts}\cdot !a$
instead of $\sends{\acts}$ only.

The second item holds because;
by inductive hypothesis we have $\ctx A[X_{q_{i},{m'}}]^{i\in I} \sqsubseteq Y_{m'}$;
the accumulated input tree in $n'$
is obtained by replacing each of the variables in $\ctx A[X_{q_{i},{m'}}]^{i\in I}$ with the term that, inside their corresponding definitions, is present because $m' \simtreetrans{!a} m''$ and because, as observed above, each state $q_{i}$ can anticipate 
the output action $!a$;
the l.h.s. term obtained in this way (by simply replacing variables
with their definition and resolving initial silent choices) continue to 
be in $\sqsubseteq$ relation with $Y_{m'}$ hence also with $Y_{m''}$
which is present in the definition of $Y_{m'}$ because $m' \simtreetrans{!a} m''$.
 
The third item holds because; 
if we take 
$k=\closesthole{\minHeight{\ctx A'}}{\{q_j\mid j \in J\}}{\acts}$,
by inductive hypothesis we have 
$\minHeight{\ctx A} \geq k$;
by Proposition \ref{lab:lemmaMarioAndGigio}, item \ref{lemmaMG2},
we have that
$\closesthole{\minHeight{\ctx A'}}{\{q_j\mid j \in J\}}{\acts\cdot !a}=
\closesthole{k}{Q}{!a}$
with
$Q=\bigcup_{j \in J}\{q_h \mid 
h \in H\ \text{s.t.}\ \acctree{q_j}{\sends{\acts}}=\ctx{A'''}[q_{h}]^{h\in H}\}$
where $J$ is the set of indices of the holes in the input context in the 
label of node $m$;
by inductive hypothesis (first item) we have that $\{q_i \mid i\in I\} \subseteq Q$
where $I$ is the set of indices of the holes in the input context in the 
label of node $n$;
by definition of the $\mathsf{minAcc}$ function the increment
$\closesthole{k}{Q}{!a} - k$
cannot be strictly greater than the increment of 
$\mathsf{minHeight}$ when the transition $!a$ is executed from $n$ to $n'$, 
because $\closesthole{k}{Q}{!a}$ considers the minimal 
accumulation generated by the states $Q$ when anticipating $!a$
and, having $\{q_i \mid i\in I\} \subseteq Q$, such a minimal
accumulation cannot be greater than the accumulation generated
by the states $q_i$ present in the leaves of the input tree 
of $n$. From the inductive hypothesis $\minHeight{\ctx A} \geq k$ we, thus, have that
$\mathsf{minHeight}$ in $n'$ is greater or equal to $\closesthole{k}{Q}{!a}$.

We now consider the case in which $m'' \in B$.
We have two distinct cases:
\begin{enumerate}
\item
$\ancestor{m''} \simtreetrans{}^* m$\\
In this case we take $m_1=m_2=\ancestor{m''}$.
The first item holds because of the same 
arguments considered in the corresponding case for $m''\not\in B$
plus the observation that
$\bigcup_{j \in J}\{q_h \mid 
h \in H\ \text{s.t.}\ \acctree{q_j}{\sends{\acts}\cdot !a}=\ctx{A'''}[q_{h}]^{h\in H}\}$ is a subset of the states in the holes 
of the input context in $m''$ (definition of witness
subtree), which is a subset of the states in the holes
of the input context in $\ancestor{m''}$ (definition
of candidate subtree).
The second item holds for the same argument considered in the case
$m'' \not\in B$
(simply replacing $Y_{m''}$ with $Y_{\ancestor{m''}}$).
The third item holds 
for the following reasons.
By applying the same 
arguments considered in the corresponding case for $m''\not\in B$
we obtain that the new $\mathsf{minHeight}$ in $n'$ is greater
or equal than
$\closesthole{\minHeight{\ctx A'}}{\{q_j\mid j \in J\}}{\acts\cdot !a}$,
where $J$ is the set of indices of the holes in the input context in the 
label of node $m$; hence proving the third item reduces to prove 
that
$\closesthole{\minHeight{\ctx A'}}{\{q_j\mid j \in J\}}{\acts\cdot !a}
\geq
\closesthole{\minHeight{\ctx A_1}}{Q_w}{\acts'}$, with
$\mathcal L(m_1)=\simpair {p_1} \ctx A_1[q_{w}]^{w\in W}$,
$Q_w=\{q_w \mid w \in W\}$ and $\acts'$ corresponding to the sequence
of transitions from $m_1=\anc(m'')$ to $m''$ that traverses $m$, hence
$\acts'=\acts''\cdot \acts\, \cdot \, !a$ (for some $\acts''$). This
is because
$\closesthole{\minHeight{\ctx A_1}}{Q_w}{\acts'} \geq \minHeight{\ctx
  A_1}$ by definition of witness subtree.  By Proposition
\ref{lab:lemmaMarioAndGigio}, item \ref{lemmaMG2}, we have
$\closesthole{\minHeight{\ctx A_1}}{Q_w}{\acts''\cdot \acts \cdot
  !a}=\closesthole{\closesthole{\minHeight{\ctx
      A_1}}{Q_w}{\acts''}}{Q''}{\acts \cdot !a}$ with
$Q''=\bigcup_{q \in Q_w}\{q_h \mid h \in H\ \text{s.t.}\
\acctree{q}{\sends{\acts''}} = \ctx A_1[q_{h}]^{h\in H}\}$; given that
the states $\{q_j\mid j \in J\}$ are generated starting from the
states in $Q_w$ by anticipation of the send actions in the sequence
$\acts''$ we have that $\{q_j\mid j \in J\} \subseteq Q''$;
by Proposition \ref{lab:lemmaMarioAndGigio}, item \ref{lemmaMG3},
we have that 
$\closesthole{\closesthole{\minHeight{\ctx A_1}}{Q_w}{\acts''}}{Q''}{\acts \cdot !a}
\leq$\\
 $\closesthole{\closesthole{\minHeight{\ctx A_1}}{Q_w}{\acts''}}{\{q_j\mid j \in J\}}{\acts \cdot !a}$;
by Proposition \ref{lab:lemmaLowerBound} we have that 
$\closesthole{\minHeight{\ctx A_1}}{Q_w}{\acts''} \leq \minHeight{\ctx A'}$
and as a consequence
of Proposition \ref{lab:lemmaMarioAndGigio}, item \ref{lemmaMG1},
we have that
$\closesthole{\closesthole{\minHeight{\ctx A_1}}{Q_w}{\acts''}}{Q''}{\acts \cdot !a}
\leq
\closesthole{\minHeight{\ctx A'}}{Q''}{\acts \cdot !a}$; finally by Proposition \ref{lab:lemmaMarioAndGigio}, item \ref{lemmaMG3}, we have \\
$\closesthole{\minHeight{\ctx A'}}{Q''}{\acts \cdot !a} \leq
\closesthole{\minHeight{\ctx A'}}{\{q_j\mid j \in J\}}{\acts\cdot !a}$.

\item
$m \simtreetrans{}^+ \ancestor{m''}$\\
In this case we take $m_1=m$ and $m_2=\ancestor{m''}$.
The first item holds because of the same 
arguments considered in the corresponding case 
for $m''\not\in B$
plus the observation (as done in the previous case) that
$\bigcup_{j \in J}\{q_h \mid 
h \in H\ \text{s.t.}\ \acctree{q_j}{\sends{\acts}\cdot !a}=\ctx{A'''}[q_{h}]^{h\in H}\}$ is a subset of the states in the holes 
of the input context in $m''$ (definition of witness
subtree); which is a subset of the states in the holes
of the input context in $\ancestor{m''}$ (definition
of candidate subtree);
which is subset of 
$\bigcup_{j \in J}\{q_h \mid 
h \in H\ \text{s.t.}\ \acctree{q_j}{\sends{\acts''}}=\ctx{A'''}[q_{h}]^{h\in H}\}$ 
with $\acts''$ s.t. $m_1=m \simtreetrans{\acts''} \ancestor{m''}=m_2$.
Notice that the latter subset inclusion holds because the
states in the holes
of the input context in $\ancestor{m''}=m_2$
are generated starting from the
states in $Q_j$ by anticipation of the send actions in the
sequence $\acts''$.
The second item holds for the same arguments considered in the case
$m'' \not\in B$
(simply replacing $Y_{m''}$ with $Y_{\ancestor{m''}}$).
We proceed by 
contraposition to show that the third item also holds. Given
$\mathcal L(m)=\simpair {p'} \ctx A'[q_{j}]^{j\in J}$
and $m \simtreetrans{\acts''} \anc(m'')$, we assume by contraposition
that $\mathsf{minHeight}$ applied
to $n'$ is strictly smaller than 
$\closesthole{\minHeight{\ctx A'}}{\{q_j \mid j\in J\}}{\acts''}$.
In the following we let 
$x=\closesthole{\minHeight{\ctx A'}}{\{q_j \mid j\in J\}}{\acts''}$.
By application of the same arguments as above (case
$m'' \not\in B$, third item), we have that $\mathsf{minHeight}$ applied
to $n'$ should be greater than or equal to 
$\closesthole{\minHeight{\ctx A'}}{\{q_j \mid j\in J\}}{\acts\cdot !a}$,
hence also $x > \closesthole{\minHeight{\ctx A'}}{\{q_j \mid j\in J\}}{\acts\cdot !a}$.
But being $m$ 
above $\anc(m'')$, we have that $\acts''$ is a 
prefix of $\acts$; 
then, by Proposition \ref{lab:lemmaMarioAndGigio}, item \ref{lemmaMG2},
we have 
$\closesthole{\minHeight{\ctx A'}}{\{q_j \mid j\in J\}}{\acts\cdot !a} = 
\closesthole{x}{Q''}{\acts''' \cdot !a}$
with $\acts = \acts'' \cdot \acts'''$ and 
$Q''=\bigcup_{q \in Q_j}\{q_h \mid h \in H\ \text{s.t.}\ 
      \acctree{q}{\sends{\acts'''}} = \ctx A''[q_{h}]^{h\in H}\}$
      where $Q_j = \{q_j \mid j\in J\}$.
So far, we have proved that $x-\closesthole{x}{Q''}{\acts''' \cdot !a}>0$.
We now observe that, by Proposition \ref{lab:lemmaLowerBound},
$x$ is smaller than or equal to $\mathsf{minHeight}$ applied to 
$\anc(m'')$, i.e. assuming $\mathcal L(\anc(m''))=\simpair {p_w} \ctx A_2[q_{w}]^{w\in W}$
and $x'=\minHeight{\ctx A_2}$, we have $x \leq x'$; by Proposition \ref{lab:lemmaMarioAndGigio},
item \ref{lemmaMG2}, we have that also
$x'-\closesthole{x'}{Q''}{\acts''' \cdot !a}>0$,
hence $x' > \closesthole{x'}{Q''}{\acts''' \cdot !a}$.
By definition of witness subtree, given that $m\in \img{\anc}$
and $m \simtreetrans{\acts''} \anc(m'')$, $Q''$ is a (non-strict) subset of
the states $\{q_w \mid w\in W\}$, hence by Proposition
\ref{lab:lemmaMarioAndGigio}, item \ref{lemmaMG3}, we obtain
$\closesthole{x'}{Q''}{\acts''' \cdot !a} \geq
\closesthole{x'}{\{q_w \mid w\in W\}}{\acts''' \cdot !a}$.
By combination of the last two inequations we obtain
$\minHeight{\ctx A_2} > 
\closesthole{\minHeight{\ctx A_2}}{\{q_w \mid w\in W\}}{\acts''' \cdot !a}$
that contradicts the
definition of witness subtree (item 2b).
\end{enumerate}
\medskip

We now consider $n \simtreetrans{?a} n'$.
In this case we have that $p \trans{?a} p'$.
We have that $\ctx A$ cannot be a single hole,
otherwise $\minHeight{\ctx A}=0$, that implies
$\closesthole{\minHeight{\ctx A'}}{\{q_j\mid j \in J\}}{\acts}=0$,
that implies that 
there exists a sequence of transitions
$\acts'$, extending $\acts$ and leading to a boundary node, 
such that $\closesthole{\minHeight{\ctx A'}}{\{q_j\mid j \in J\}}{\acts'}$ is 
undefined, contrary to what definition of witness subtree says.
Hence $\ctx A[X_{q_{i},{m'}}]^{i\in I}$ contains
initially an $a$, that must be mimicked in the
simulation game by $Y_{m'}$. This implies that 
also $m' \simtreetrans{?a} m''$.
We first consider the case in which $m'' \not \in B$:
in this case we take $m_1=m$ and $m_2=m''$.
The first item trivially holds because the set on the left
cannot grow while the set on the right remains unchanged.
The second item trivially holds because 
by inductive hypothesis we have 
$\ctx A[X_{q_{i},{m'}}]^{i\in I} \sqsubseteq Y_{m'}$;
we modify the l.h.s. by consuming the initial inputs,
taking the continuation of $a$, and replacing the
remaining variables $X_{q_{i},{m'}}$
with $X_{q_{i},{m''}}$; as r.h.s. we take $Y_{m''}$.
The relation $\sqsubseteq$ continue to hold as we
follow on step $a$ of the simulation game formalized in 
the Definition \ref{def:compatibility} of input tree compatibility,
and we resolve some input choices in the l.h.s.
The last item holds because the r.h.s. of the inequality reduce by one,
while the l.h.s. cannot reduce by more than one.
We now consider the case in which $m'' \in B$.
There are two distinct cases:
$\ancestor{m''} \simtreetrans{}^* m$
or 
$m \simtreetrans{}^+ \ancestor{m''}$.
These two cases are treated as already done above
for the case $n \simtreetrans{!a} n'$, 
subcase in which $m' \simtreetrans{!a} m''$
and $m'' \in B$.
\bigskip

We can finally prove the thesis
considering $\mathcal L(n)=\simpair p \ctx A[q_{i}]^{i\in I}$.

If $p$ is sending, then $m'$ can perform all 
send actions that $p$ can do. Given any of such send actions $!a$,
by definition of witness subtree we have that
$\acctree{q_j}{\sends{\acts'}}$
is defined for a sequence of actions $\acts'$, from $m$ to a boundary 
node, that contains $\acts\cdot !a$ as a prefix;
hence we have that all the states
$
\bigcup_{j \in J}\{q_h \mid 
h \in H\ \text{s.t.}\ \acctree{q_j}{\sends{\acts}}=\ctx{A'''}[q_{h}]^{h\in H}\}
$
can anticipate $!a$.
Given that
$\{q_i \mid i\in I\} \subseteq 
\bigcup_{j \in J}\{q_h \mid 
h \in H\ \text{s.t.}\ \acctree{q_j}{\sends{\acts}}=\ctx{A'''}[q_{h}]^{h\in H}\}$
we also have that all the states $q_i$ can anticipate $!a$.
The possibility to perform the transition 
$n \simtreetrans{!a} n'_a$
also requires that
$p$ has no infinite loop of send actions, i.e., $\neg\loopio{!}{p}$.
Assume by contraposition that $p$ has such an infinite loop
of send actions.
This means that there exists an infinite sequence of output
transitions in the
witness subtree that starts from the node $m'$ (which
is such that 
$\mathcal L(m')=\simpair p \ctx A''[q_{k}]^{k\in K}$)
reaches a boundary node, and then continues from the
ancestor of such boundary node to another boundary node,
and so on.
Eventually, an ancestor of a reached boundary node 
will be in between the last traversed ancestor and 
such boundary node
(otherwise, we infinitely move strictly upward
in the finite witness subtree,
going from boundary nodes
to ancestors that are always strictly above the last traversed
ancestor).
This contradicts the definition of witness subtree stating
that in all paths from an ancestor $\ancestor{o}$, 
to a corresponding boundary node $o$, there is at least one
receiving transition.

If $p$ is receiving, then $\ctx A$ cannot be a single hole
(see the reasoning above for the case $n \simtreetrans{?a} n'$).
Let $\ctx A=\echoice{a_i:{\mathcal A}_i}{i\in I}$.
Having $\ctx A[X_{q_{i},{m'}}]^{i\in I} \sqsubseteq Y_{m'}$,
we have that (by definition of $Y_{m'}$ and $\sqsubseteq$), 
for every $i \in I$, there exists a transition 
$m' \simtreetrans{?a_i} m'_i$ hence also $p \trans{?a_i} p_i$.
So we can conclude that we have also 
$n \simtreetrans{?a_i} n_i$, for every $i \in I$.
\end{proof}

Hence, we can conclude that if the candidate subtrees of
$\simtree{M_1}{M_2}$ identified following the strategy explained in
Part (2) are also witness subtrees, then we have $\minc{M_1}{M_2}$.

\begin{rem}
  When our algorithm finds a successful leaf, a previously seen label,
  or a witness subtree in each branch then the machines are in the
  subtyping relation.
If an unsuccessful leaf is found (while generating the initial
  finite subtree as described in Part~(2)),
then the machines are \emph{not} in the
  subtyping relation.
In all other cases, the algorithm is unable to give a decisive
  verdict (i.e., the result is \emph{unknown}). 
There are two possible causes for an unknown result: either ($i$) it
  is impossible to extract a forest of candidate subtrees (i.e., there
  are successful leaves below some ancestor) or ($ii$) at least one
  candidate subtree is not a witness (see
  Example~\ref{ex:notcaputred}).
\end{rem}

\begin{figure}[t]
\centering
  \begin{tikzpicture}
    [node distance=0.9cm and 1cm
    ,scale=0.8, every node/.style={transform shape}
    ]
    \node[cnode,notexplo,very thick] (n0) {$\pairconf{q_1}{q_1}$};
\node[cnode,notexplo,below left=of n0,yshift=0.3cm] (n1) {$\pairconf{q_2}{q_2}$};
    \node[cnode,notexplo,left=of n1] (na2) {$\pairconf{q_3}{q_1}$};
    \node[cnode,below=of na2] (na3) {\treeconf{$q_4$}{\btreedone}};
    \node[cnode,right=of na3] (na4) {\treeconf{$q_1$}{\btreetwo}};
\node[cnode,below=of na4] (na7) {\treeconf{$q_2$}{\btreedone}};
    \node[cnode,left=of na7] (na8) {\treeconf{$q_3$}{\btreetwo}};
\node[cnode,below=of na8] (na9) {\treeconf{$q_4$}{\btreedthree}};
\node[cnode,right=of na4] (na5) {$\pairconf{q_5}{q_3}$};
    \node[cnode,right=of na5,double] (na6) {$\pairconf{q_5}{q_3}$};
\node[cnode,right=of na9] (na10) {\treeconf{$q_1$}{\btreedfour}};
    \node[cnode,right=of na10] (na11) {$\pairconf{q_5}{q_3}$};
    \node[cnode,right=of na11,double] (na12) {$\pairconf{q_5}{q_3}$};
    \node[cnode,below=of na10] (na13) {\treeconf{$q_2$}{\btreedthree}};
    \node[cnode, left=of na13] (na14) {\treeconf{$q_3$}{\btreedfour}};
    \node[cnode, below=of na14,double] (na15) {\treeconf{$q_4$}{\btreedfive}};
\node[cnode,notexplo,below right=of n0,yshift=0.3cm] (n2) {$\pairconf{q_5}{q_3}$};
    \node[cnode,notexplo,right=of n2,double] (nb2) {$\pairconf{q_5}{q_3}$};
\draw[lhookedsilentedge] (n0)  -| node [above, near start] {$\rcv{a}$} (n1);
    \draw[hookedsilentedge] (n0) -| node [above, near start] {$\rcv{b}$} (n2);
    \draw[hookedsilentedge] (n2)  edge node [above] {$\snd{x}$} (nb2);
    \draw[lhookedsilentedge] (n1)  edge node [above] {$\snd{x}$} (na2);
    \draw[hookedsilentedge] (na2)  edge node [left] {$\snd{x}$} (na3);
    \draw[hookedsilentedge] (na3)  edge node [above] {$\snd{x}$} (na4);
    \draw[hookedsilentedge] (na4)  edge node [above] {$\rcv{b}$} (na5);
    \draw[hookedsilentedge] (na5)  edge node [above] {$\snd{x}$} (na6);        
    \draw[hookedsilentedge] (na4)  edge node [left] {$\rcv{a}$} (na7);
    \draw[lhookedsilentedge] (na7) edge node [above] {$\snd{x}$} (na8);
    \draw[hookedsilentedge] (na8)  edge node [left] {$\snd{x}$} (na9);
    \draw[hookedsilentedge] (na9)  edge node [above] {$\snd{x}$} (na10);
    \draw[hookedsilentedge] (na10)  edge node [above] {$\rcv{b}$} (na11);
    \draw[hookedsilentedge] (na11)  edge node [above] {$\snd{x}$} (na12);
    \draw[hookedsilentedge] (na10)  edge node [left] {$\rcv{a}$} (na13);
    \draw[lhookedsilentedge] (na13) edge node [above] {$\snd{x}$} (na14);
    \draw[hookedsilentedge] (na14)  edge node [left] {$\snd{x}$} (na15);    
    \begin{scope}[every edge/.append style={shorten <=3pt,shorten >=4pt}]
\path[ancestor] (nb2.south) edge [bend left] node {} (n2.south);
    \path[ancestor] (na6.south) edge [bend left] node {} (na5.south);
    \path[ancestor] (na12.south) edge [bend left] node {} (na11.south);
    \end{scope}
    \path[ancestor] (na15.west) edge [bend left=60,shorten <=3pt] node {} (na3.west);
\node[nlabel] at (n0.south west) {$n_{0}$};
  \node[nlabel] at (n1.south west) {$n_{1}$};
  \node[nlabel] at (n2.south west) {$n_{2}$};
  \node[nlabel] at (nb2.south west) {$n_{3}$};
  \node[nlabel] at (na2.south west) {$n_{4}$};
  \node[nlabel] at (na3.south west) {$n_{5}$};
  \node[nlabel] at (na4.south west) {$n_{6}$};
  \node[nlabel] at (na5.south west) {$n_{7}$};
  \node[nlabel] at (na6.south west) {$n_{8}$};
  \node[nlabel] at (na7.south west) {$n_{9}$};
  \node[nlabel] at (na8.south west) {$n_{10}$};
  \node[nlabel] at (na9.south west) {$n_{11}$};
  \node[nlabel] at (na10.south west) {$n_{12}$};
  \node[nlabel] at (na11.south west) {$n_{13}$};
  \node[nlabel] at (na12.south west) {$n_{14}$};
  \node[nlabel] at (na13.south west) {$n_{15}$};
  \node[nlabel] at (na14.south west) {$n_{16}$};
  \node[nlabel] at (na15.south west) {$n_{17}$};
\end{tikzpicture}
\caption{Simulation tree of Example~\ref{ex:notcaputred}.
}\label{fig:bad-simtree}
\end{figure}
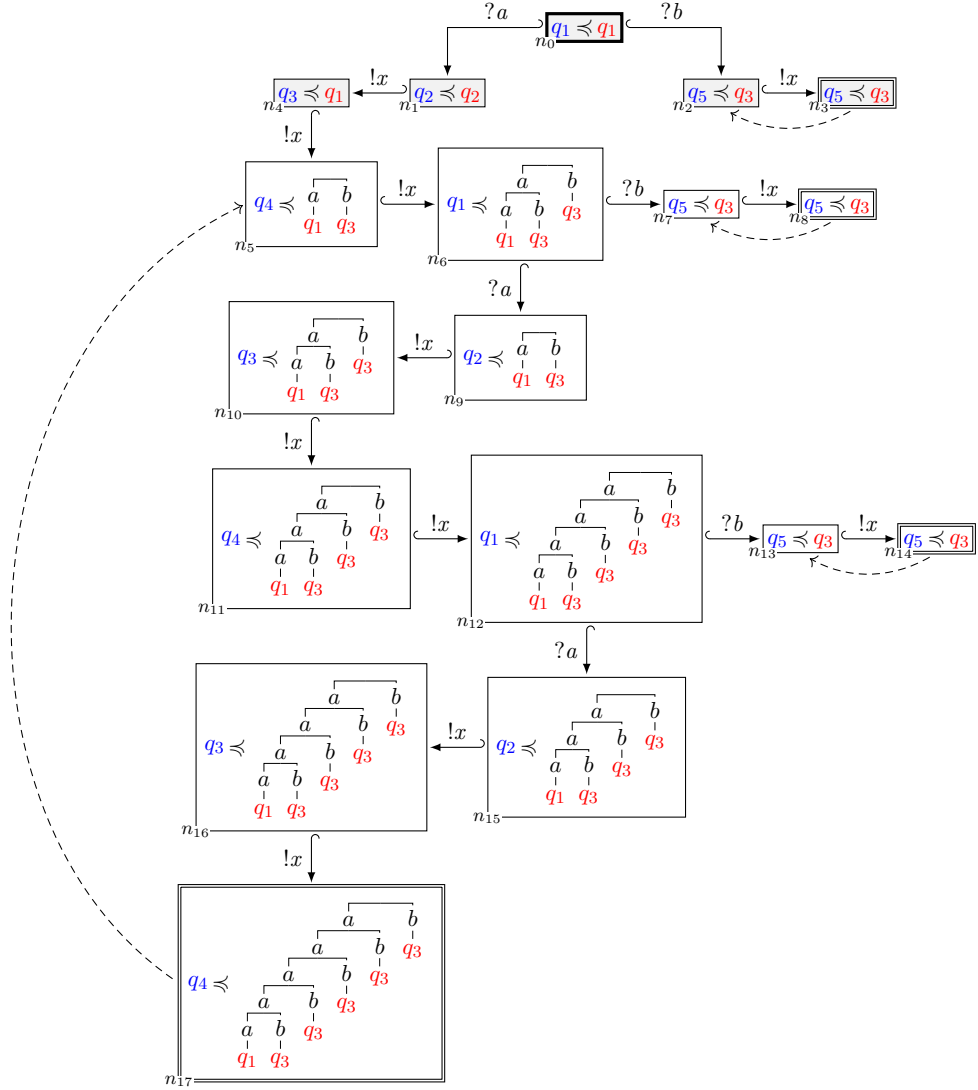

\begin{exa}\label{ex:notcaputred}
  Consider the machines $M_1$ and $M_2$ below:
  \begin{center}
  \begin{tabular}{l@{\quad}r@{\qquad\qquad}l@{\quad}r}
    $M_1$: & 
             \begin{tikzpicture}[mycfsm, node distance = 0.4cm and 1cm
               ,scale=0.95, every node/.style={transform shape}]
               \node[state, initial, initial where=above] (s1) {$q_1$};
               \node[state, left=of s1] (s2) {$q_2$};
               \node[state, right=of s1] (s5) {$q_5$};
               \node[state, below=of s1] (s4) {$q_4$};
               \node[state, below=of s2] (s3) {$q_3$};
\path 
               (s1) edge node [above] {$\rcv{a}$} (s2)
               (s2) edge [bend right] node [left] {$\snd{x}$} (s3)
               (s1) edge node {$\rcv{b}$} (s5)
               (s3) edge node [above] {$\snd{x}$} (s4)
               (s4) edge [bend right] node [right] {$\snd{x}$} (s1)
               (s5) edge [loop right] node {$\snd{x}$} (s5)
               ;
             \end{tikzpicture}
    &
    $M_2$: &
             \begin{tikzpicture}[mycfsm, node distance = 0.8cm and 1cm
               ,scale=0.95, every node/.style={transform shape}]
               \node[state, initial, initial where=above] (s1) {$q_1$};
               \node[state, left=of s1] (s2) {$q_2$};
               \node[state, right=of s1] (s3) {$q_3$};
\path 
               (s1) edge [bend right] node [above] {$\rcv{a}$} (s2)
               (s2) edge [bend right] node [below] {$\snd{x}$} (s1)
               (s1) edge node {$\rcv{b}$} (s3)
               (s3) edge [loop right] node {$\snd{x}$} (s3)
               ;
             \end{tikzpicture}
  \end{tabular}
\end{center}
 The simulation tree $\simtree{M_1}{M_2}$, whose initial part is
  given in Figure~\ref{fig:bad-simtree}, contains infinitely many
  nodes with labels of the form:
$ \simpair{q_1} {\echoice{a: \echoice{ a: \echoice{a:
          \echoice{\cdots}{} , b: q_3 }{}, b: q_3}{} {} , b: q_3 }{}}
  $ (e.g., $n_6$ and $n_{12}$ in Figure~\ref{fig:bad-simtree}).
Each of these nodes has two successors, one where $?a$ is fired (the
  machines stay in their larger loops), and one where $?b$ is fired (the
  machines move to their self loops).
The machines can always enter this send-only cycle, e.g., between
  $n_2$ and $n_3$ or between $n_{13}$ and $n_{14}$.
Because of these \emph{send only} paths between ancestors (e.g.,
  $n_2$) and leaves (e.g., $n_3$), Condition~\eqref{wt:no-out-loop} of
  Definition~\ref{def:witness} never applies on the infinite branches
  of $\simtree{M_1}{M_2}$, hence no witness subtrees can be found.
Note however that our approach successfully identifies a candidate
  subtree, i.e., the white nodes in Figure~\ref{fig:bad-simtree}.
\end{exa}

\section{Implementation and evaluation}\label{sec:tool}
To evaluate the applicability and cost of our algorithm, we have
produced a faithful implementation of it, which is freely available on
GitHub~\cite{asyncsubtool}.

\subsubsection*{Implementation}
The tool is implemented in Haskell and it mostly follows the structure
of \S~\ref{sec:algorithm}.
(1) It takes two machines $M_1$ and $M_2$ as input for which it builds
a simulation tree following Definition~\ref{def:simtree} in a
depth-first search manner, while recording the nodes visited in
different branches to avoid re-computing several times the same
subtrees.
The function terminates whenever it expands a node whose label has
been seen along the path from the root; or whenever it expands a node
which has two ancestors that validate the termination condition from
Theorem~\ref{thm:termination}.
The resulting tree is then passed onto the next function.
(2) The next function divides the finite tree into several (finite)
subtrees following the strategy outlined on
page~\pageref{rem:divide-simtree}.
(3) A third function analyses each subtree to verify that they
validate conditions~\eqref{wt:no-out-loop}-\eqref{wt:min-height} of
Definition~\ref{def:witness}.
(4) Finally, for those subtrees that validate the property checked in
(3), the tool builds their systems of input tree equations and checks
whether they validate the compatibility condition from
Definition~\ref{def:compatibility}.

In function (1), if the tool finds a node for which none of the rules of
Definition~\ref{def:simtree} apply, then it says that the two types
are \emph{not} related.
If each subtree identified in (2) corresponds to branches that loop or
that lead to a witness tree, then the tool says that the input types
are in the subtyping relation.
In all other cases, the result is still \emph{unknown}, hence
the tool checks for $\dual{M_2} \nincsymbol \dual{M_1}$ (relying on a
previous result showing that
${M_1} \nincsymbol {M_2} \iff \dual{M_2} \nincsymbol \dual{M_1}$
\cite{LangeY17, BravettiCZ17}).
Once this pass terminates, the tool returns \emph{true} or
\emph{false}, accordingly, otherwise the result is \emph{unknown}.

For debugging and illustration purposes, the tool can optionally
generate graphical representations of the simulation and candidate
trees, as well as the systems of input tree equations.

\subsubsection*{Evaluation}
We have run our tool on 174 tests which were either taken from the
literature on asynchronous subtyping~\cite{LangeY17, LMCSasync}, or
handcrafted to test the limits of our approach.
All of these tests terminate under a second.
Out of these tests, 92 are \emph{negative} (the types are not in the
subtyping relation) and our tool gives the expected result (``false'')
for all of them.
The other 82 tests are \emph{positive} (the types are in the subtyping
relation) and our tool gives the expected result (``true'') for all
but 8 tests, for which it returns ``unknown''.
All of these 8 examples feature complex accumulation patterns, that our
theory cannot recognise.
Example~\ref{ex:notcaputred} gives a pair of machines for which our
tool returns ``unknown'' for both ${M_1} \nincsymbol {M_2}$ and
$\dual{M_2} \nincsymbol \dual{M_1}$.

To assess the cost of our approach in terms of computation time and
memory consumption, we have automatically generated a series of pairs
of communicating machines that are successfully identified by our
algorithm to be in the asynchronous subtyping relation.
Our benchmarks consists in applying our algorithm to check that
$M_1 \nincsymbol M_2$ holds, with $M_1$ and $M_2$ as specified below,
where $n, m \in \naturals_{>0}$ are the parameters of our experiments.
\[
  \begin{array}{ccccc}
    M_1:
    \begin{tikzpicture}[mycfsm, node distance = 0.5cm and 1.3cm
      ,scale=0.99, every node/.style={transform shape}]
      \node[state, initial, initial where=left] (s0) {};
      \node[state, right =of s0] (s1) {};
      \node[state, right =of s1] (s2) {};
      \node[state, right =of s2] (s3) {};
\draw[>=,densely dotted]   (s1) -- (s2) ;
\path 
      (s0) edge node [above] {$\snd{a_0}$} (s1)
      (s2) edge node [above] {$\snd{a_n}$} (s3)
      (s3) edge [bend left=40] node [below] {$\{ \rcv{b_0}, \ldots, \rcv{b_m} \}$} (s0)
      ;           
    \end{tikzpicture}
    & \qquad\qquad  & M_2:
                      \begin{tikzpicture}[mycfsm, node distance = 0.5cm and 1.3cm
                        ,scale=0.99, every node/.style={transform shape}]
                        \node[state, initial, initial where=left] (s0) {};
                        \node[state, right =of s0] (s1) {};
\path 
                        (s0) edge [bend left=70] node [above] {$\{ \snd{a_0}, \ldots,  \snd{a_n} \}$} (s1)
                        (s1) edge [bend left=70] node [below] {$\{ \rcv{b_0}, \ldots, \rcv{b_m}\} $} (s0)
                        ;
                      \end{tikzpicture}
  \end{array}
\]
Machine $M_1$ {sends} a sequence of $n$ message $a_i$, after which it
expects to {receive} a message from the alphabet
$\{b_0, \ldots, b_m\}$, then returns to its initial state.
Machine $M_2$ can choose to send any message in
$\{a_0, \ldots, a_n\}$, then waits for a message in
$\{b_0, \ldots, b_m\}$ before returning to its initial state.
Observe that for any $n$ and $m$ we have that $M_1 \nincsymbol M_2$
holds.
The shape of these machines allows us to assess how our approach fares
in two interesting cases: when the sequence of message accumulation
grows (i.e., $n$ grows) and when the number of possible branches grows
(i.e., $m$ grows).
Accordingly, we ran two series of benchmarks. 
The plots in Figure~\ref{fig:plots} gives the time taken for our tool
to terminate and the maximum amount of memory used during its
execution (left and right $y$ axis, respectively) with respect to the
parameter $n$ (left-hand side plot) or $m$ (right-hand side plot).
The top plots use linear scales for both axes, while the bottom plots
show the same data but using a logarithm scale for the $y$ axis.

Observe that the left-hand side plot depicts a much steeper curve for
computational time than the one of the right.
Indeed, the depth of the finite subtree that needs to be computed and
analysed increases with $n$ (the depth of the finite subtree is $2n+5$
when $m=1$).
Accordingly, the depth of the input contexts that need to be recorded
increases similarly ($2n+1$). Each input context node has two children
in this case, i.e.,
$\echoice{b_0: \echoice{\ldots}{} , b_1: \echoice{\ldots}{} }{}$.

In contrast, when $m$ increases the depth of the simulation tree is
bounded at 11. Consequently, the sizes of the finite subtrees are
stable (depth of $7$ when $n=1$) but the number of (identical)
candidate subtrees that need to be analysed increases, i.e., the tool
produces $m{+}1$ trees when $n{=}1$.
In this case the maximum depth of input contexts is also stable (the
maximum depth is $3$) but their widths increase with $m$, i.e., we
have input context of the form:
$\echoice{b_0: \echoice{\ldots}{} , \ldots ,  b_m: \echoice{\ldots}{} }{}$.
These observations suggest that our algorithm is better suited to deal
with session types that feature few anticipation steps (smaller $n$),
but performs relatively well with types that contain many branches
(larger $m$).

The left-hand side plots show that the memory consumption follows a
similar exponential growth to the computational time,
unsurprisingly. For instance, our tool needs ~2GB to check a pair
machines where $n=10$ and $m=1$, and ~8 GB when $n=11$ and $m=1$.
The right-hand side plots show a much smaller memory footprint when
$m$ increases, this is explained by the fact that the depth of the
simulation tree is bounded, only the input context of its nodes are
growing in width. The memory in this case is more reasonable, e.g.,
our tool needs less than 11MB to check a pair of machines where $n=1$
and $m=19$.
We suspect the several jumps in the memory usage curve are due to the GHC
runtime requesting new arenas of memory from the operating system.

All the benchmarks in this paper were run on an 8-core Intel i7-7700
machine with 16GB RAM running a 64-bit Linux.
The time was measured by taking the difference between the system
clock before and after our tool was invoked.
The memory usage refers to the \emph{maximum resident set size} as
reported by the \texttt{/usr/bin/time -v} command.
Each test was ran 5 times, the plots report the average time (resp.\
memory) measurements.
All our test data and infrastructure are available on our GitHub
repository~\cite{asyncsubtool}.

\begin{figure*}[t]
  \centering
\begin{tikzpicture}
    [node distance = 0cm and 0cm]
\node (plota) {\includegraphics[width=0.49\textwidth]{./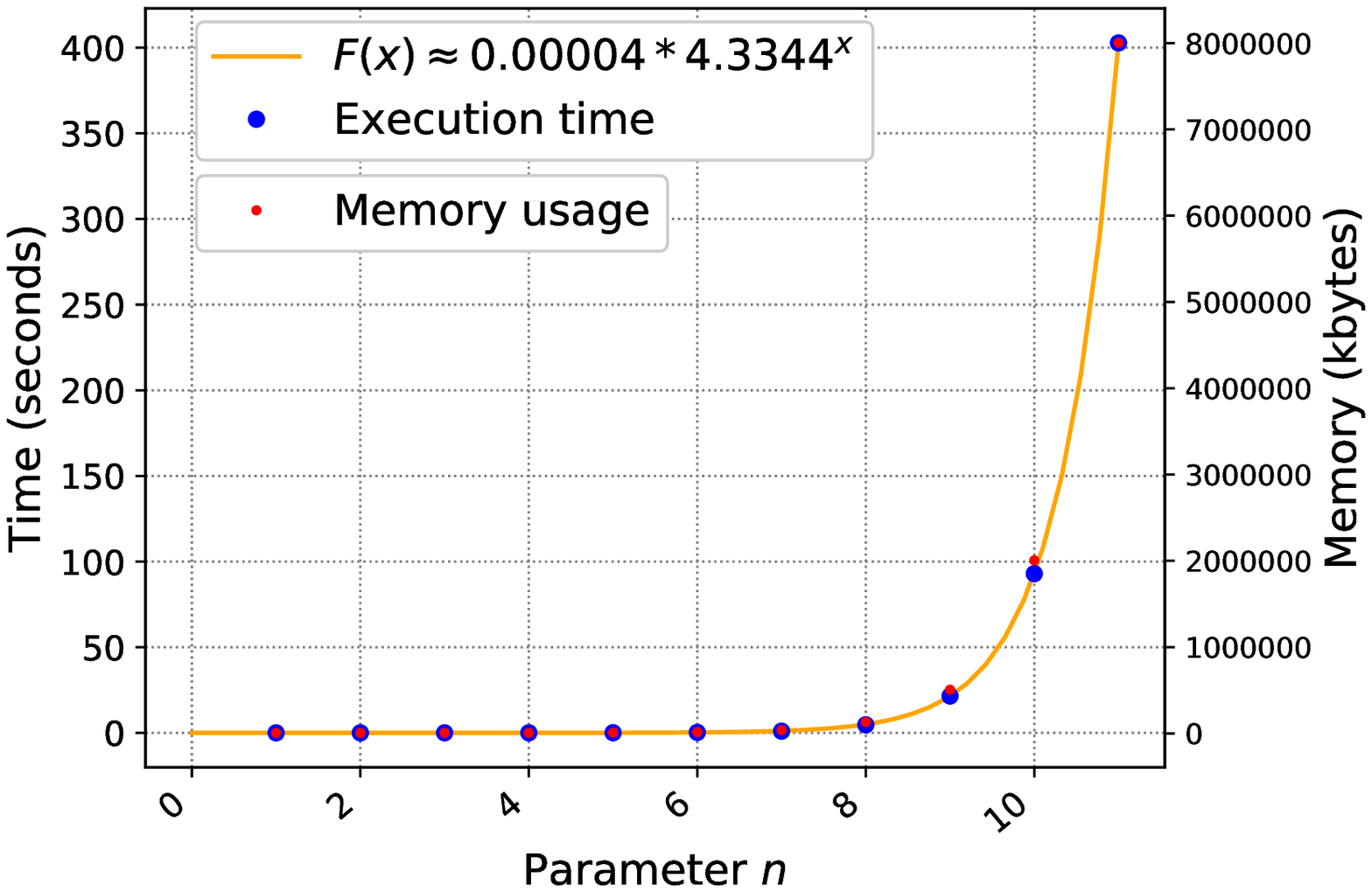}};
\node[right=of plota] (plotb) {\includegraphics[width=0.49\textwidth]{./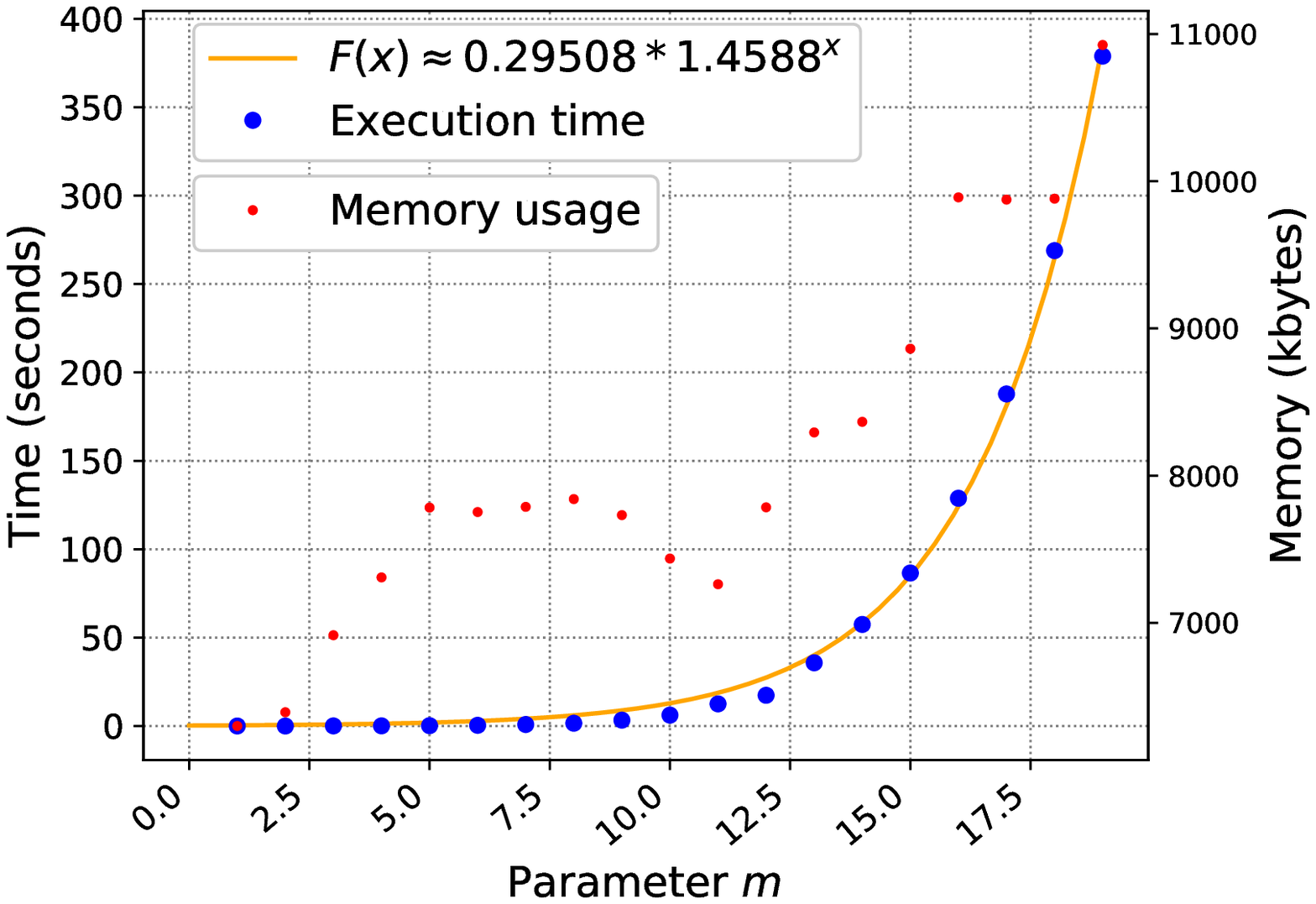}};
\node[below=of plota] (plotalog) {\includegraphics[width=0.49\textwidth]{./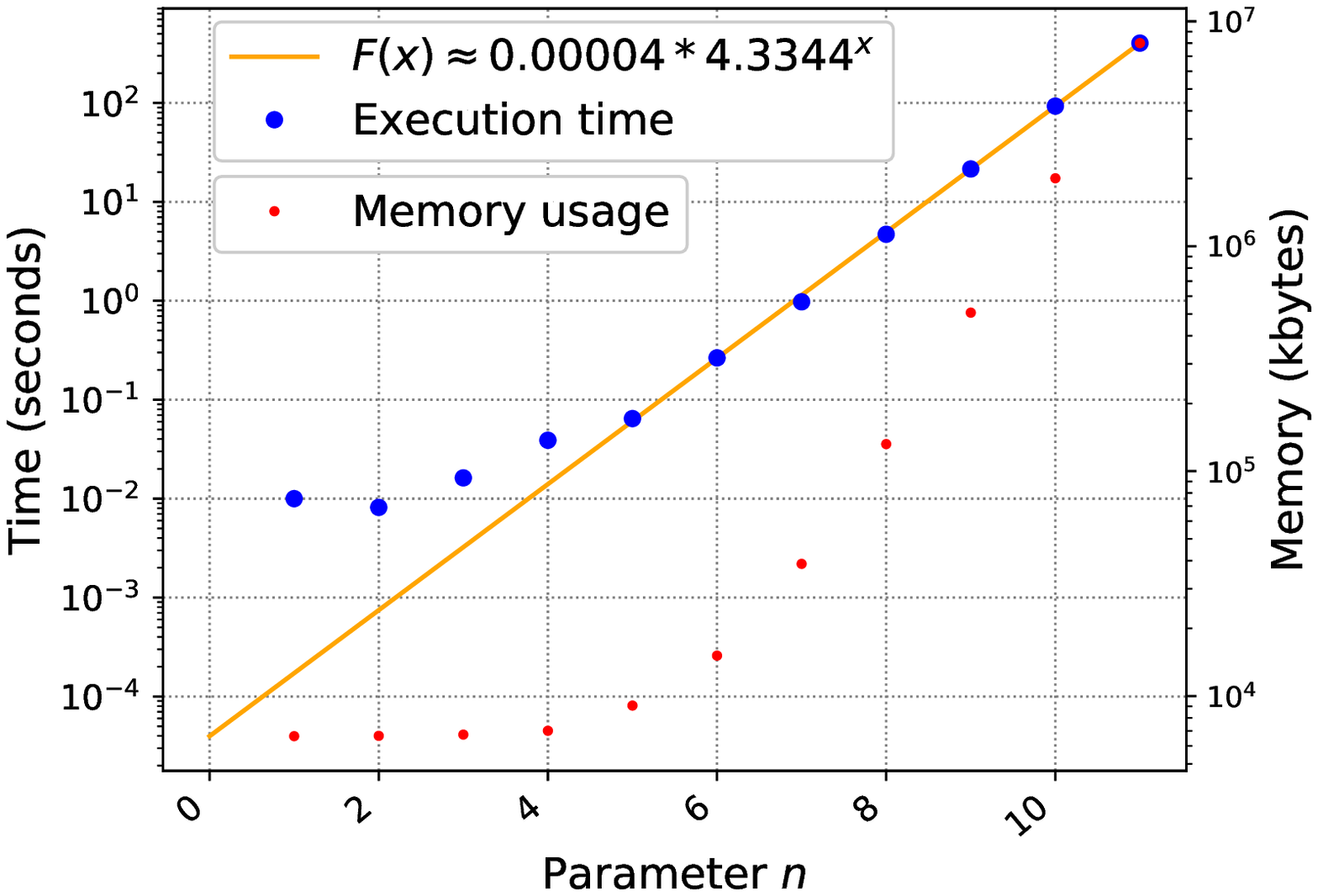}};
    \node[right=of plotalog] (plotblog) {\includegraphics[width=0.49\textwidth]{./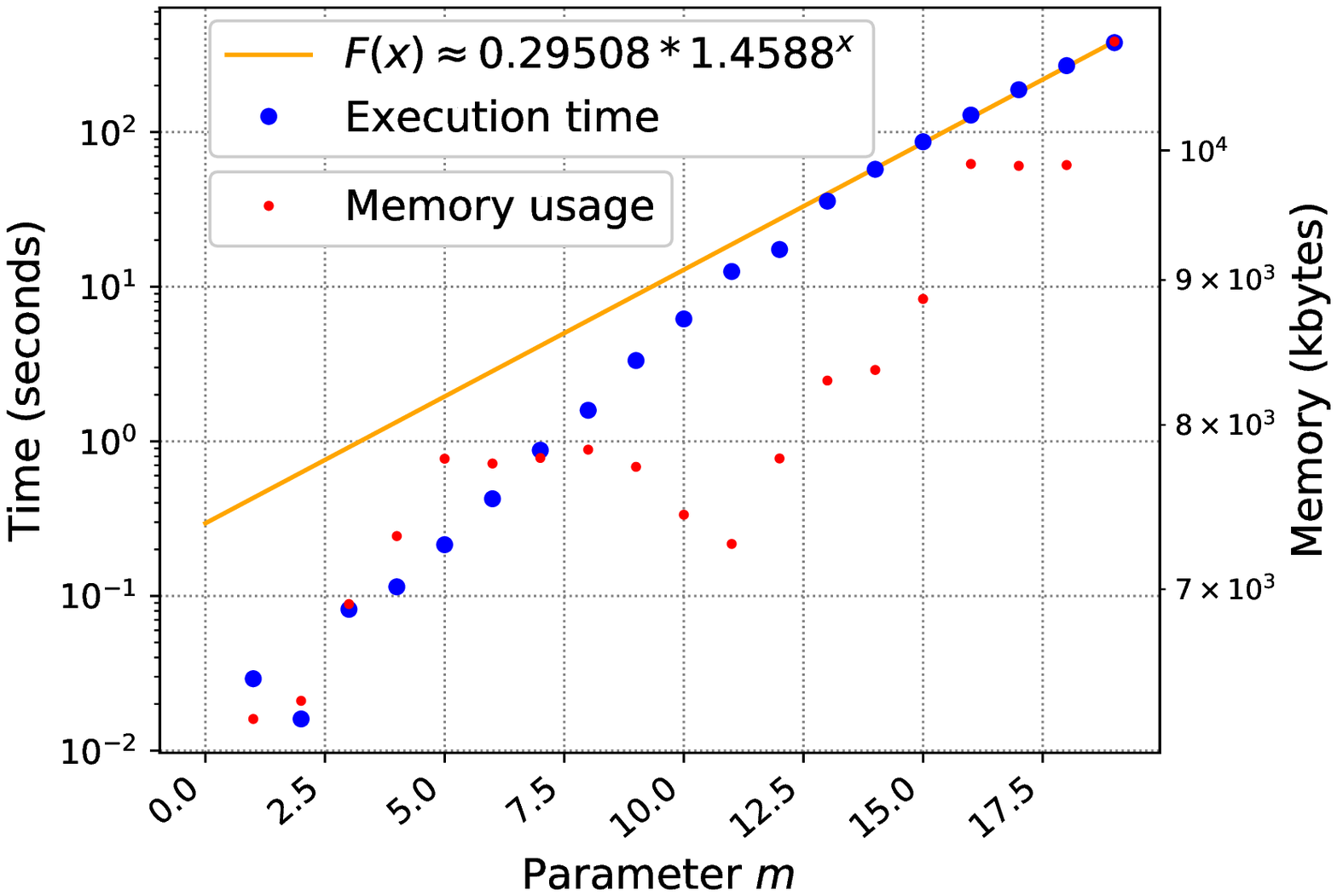}};
    
  \end{tikzpicture}
\caption{Benchmarks: $m{=}1$ and increasing $n$ (left) and $n{=}1$ and increasing $m$ (right).
Top and bottom plot show the same data, but 
    the top plots use linear scales for all axes, the bottom plots use
    logarithmic scales for the vertical axes.
  }
  \label{fig:plots}
\end{figure*}

\section{Related Work}
\label{sec:related}
Gay and Hole~\cite{GH99,GH05} were the first to introduce subtyping
for session types. Their definition, called {\em synchronous
  subtyping}, focuses on the possibility for a subtype to have different
  sets of labels in selections and branchings. In that paper, input selection
  is covariant (the subtype can have less inputs) while output branching
  is contravariant (the subtype can have more outputs). In our formulation
  of subtyping we have the opposite (branchings are covariant and
  selections are contravariant)
  because we follow a process-oriented
  interpretation of session types, while Gay and Hole~\cite{GH99,GH05}
  followed a channel-oriented interpretation.

Later,
Mostrous et al.~\cite{MostrousESOP09} extended such notion to {\em
  asynchronous subtyping}, by allowing for delayed inputs.
Chen et al.~\cite{CDY2014,LMCSasync} subsequently provided an
alternative definition which prohibits {\em orphan messages} and which
is the definition we adopted in this work. Recently, asynchronous
subtyping was shown to be undecidable by reducing it to an equivalent
problem for Turing machines~\cite{LangeY17} and queue
machines~\cite{BravettiCZ17}.

Our previous work~\cite{LangeY17,BravettiCZ17,BCZ18} investigated
different restrictions to achieve decidability: in all of our previous
approaches, these restrictions are either ($i$) setting bounds on the
number of pending messages in the FIFO channels, or ($ii$) restricting
the syntax of communicating machines and session types.
Lange and Yoshida~\cite[\S~4]{LangeY17} identified two subclasses of
(two-party) communicating machines for which the asynchronous
subtyping relation is decidable via syntactical restrictions:
\emph{alternating machines} and \emph{non-branching machines}.
Alternating machines were introduced by Gouda et al.~\cite{GoudaMY84}
and require that each sending transition is followed by a receiving
transition.
A consequence of this restriction is that each FIFO queue may contain
at most one pending message, i.e., it enforces a form of $1$-bounded
asynchrony.
Non-branching machines enforce a syntactical restriction such that
each state has at most one outgoing transition, i.e., given
$M = (Q, q_0, \delta)$, \ $\forall q\in Q \ : \ |\delta(q)| \leq 1$.
Bravetti et al.~\cite{BravettiCZ17,BCZ18} investigate other decidable
fragments of asynchronous subtyping.
In contrast with the present work and those by Lange and Yoshida, they
take a direct syntactical approach, i.e., they work directly on the
syntax of (binary) session types rather than communicating machines.
Chronologically, their first article~\cite{BravettiCZ17}
proves the undecidability of asynchronous session subtyping
in a restricted setting in which subtypes are non-branching (see
definition above) in all output selections and supertypes are
non-branching in all their input branchings. Then, a decidability 
result is proved for a fragment in which they additionally impose 
that the subtype is also non-branching in input branchings 
(or that the supertype is also non-branching in output selections).
Later,
in~\cite{BCZ18}, the same authors consider more fragments, namely
$k$-bounded asynchronous subtyping (bound on the size of
input-anticipations), and two syntactical restrictions that
imposes non-branching only on outputs (resp. inputs).
More formally, following the automata notation, they restrict
to machines $M$ s.t.\
$M = (Q, q_0, \delta)$, \ $\forall q\in Q' \ : \ |\delta(q)| \leq 1$
with $Q'$ coinciding with the set of sending (resp. receiving)
states of $Q$.
All such fragments
are shown to be decidable.

In~\cite{BLZ21}, Bravetti et al.\ propose a \emph{fair} variant of
asynchronous session subtyping.
This fair subtyping handles candidate subtypes that may
simultaneously send each other a finite but unspecified amount of
messages before removing them from their respective buffers.
Such types are not supported by the relation studied here, notably due
the finiteness of input contexts $\mathcal{A}$ and the
$\neg\loopio{!}{\statea}$ condition in Definition~\ref{def:inclusion}
(\ref{it:def-inc-alpha-async}).
This fair subtyping is shown to be undecidable, but a sound
algorithm and its implementation are given in~\cite{BLZ21}.

The relationship between communicating machines and {\em binary} 
asynchronous session types
has been studied in~\cite{BravettiZ21}, where a correspondence result
between asynchronous session subtyping and asynchronous
machine refinement is established.
On the other hand, the relationship between communicating machines and 
{\em multiparty} asynchronous session types has been studied 
in~\cite{DenielouY12,DY13}.
Communicating machines are Turing-powerful, hence their
properties are generally undecidable~\cite{cfsm83}. Many variations have been
introduced in order to recover decidability, e.g., using (existential
or universal) bounds~\cite{GenestKM07}, restricting to different types
of topologies~\cite{TorreMP08,PengP92}, or using bag or lossy channels
instead of FIFO queues~\cite{ClementeHS14, CeceFI96,AbdullaJ93,
  AbdullaBJ98}.
In this context, existentially bounded communicating
machines~\cite{GenestKM07} are one of the most interesting sub-classes
because they include infinite state systems.
However, deciding whether communicating machines are existentially
bounded is generally undecidable.
Lange and Yoshida~\cite{LY2019} proposed a (decidable) property that
soundly characterises existential boundedness on communicating
machines corresponding to session types.
This property, called \emph{$k$-multiparty compatibility} ($k$\MC),
also guarantees that the machines validate the \emph{safety} property
of session types~\cite{DY13,LTY15}, i.e., all messages that are sent
are eventually received and no machine can get permanently stuck
waiting for a message.
This notion of safety is closely related to asynchronous session
subtyping for two-party communicating machines, i.e., we have that
$M_1 \nincsymbol \dual{M_2}$ implies that the system $M_1 \mid M_2$ is
safe~\cite{LangeY17,LMCSasync}.
Because the present work is restricted to two-party systems, our
algorithm cannot be used to verify the safety of multiparty protocols,
e.g., the protocol modelling the double-buffering
algorithm~\cite{MostrousESOP09} is $2$-multiparty compatible but
cannot be verified with our subtyping algorithm because it involves
three parties.
This algorithm is used in multicore systems~\cite{Sancho08} and can be
type-checked up-to asynchronous subtyping~\cite{MostrousESOP09}.
An extension of our work to support multiparty protocols is being
considered, see \S~\ref{sec:conclusions}.
We note that because the $k$\MC\ property of~\cite{LY2019} is based on
a bounded analysis, it cannot guarantee the safety of systems that
exhibit an intrinsically unbounded behaviour, like machines $\client$
and $\server$ in Figures~\ref{fig:runex-machines}
and~\ref{fig:runex-machines2}.

\section{Conclusions and Future Work}
\label{sec:conclusions}
We have proposed a sound algorithm for checking asynchronous session
subtyping, 
showing that it is still possible to decide whether
two types are related for many nontrivial examples.
Our algorithm is
based on a (potentially infinite) tree representation of the
coinductive definition of asynchronous subtyping; it
checks for the
presence of finite witnesses of infinite successful subtrees.
We have provided an implementation and applied it to
examples that cannot be recognised by previous approaches.

Although the (worst-case) complexity of our algorithm is rather high
(the termination condition expects to encounter a set of states
already encountered, of which there may be exponentially many), our
implementation shows that it actually terminates under a second for
machines of size comparable to typical communication protocols used in
real programs, e.g., Go programs feature between three and four
communication primitives per channel and whose branching construct
feature two branches, on average~\cite{dilley19}.

As future work, we plan to enrich our algorithm to recognise subtypes
featuring more complex accumulation patterns, e.g.,
Example~\ref{ex:notcaputred}.
Moreover, due to the tight correspondence
with safety of communicating machines~\cite{LangeY17}, we plan to
investigate the possibility of using our approach to characterise a
novel decidable subclass of communicating machines.
It is an interesting open question to extend our algorithm   
to multiparty communications, 
as multiparty session types 
allow more permutations of actions inside a single CFSM and 
can type more practical use cases which involve several participants. 
Recently \emph{precise} multiparty asynchronous 
subtyping (in the sense of \cite{CDY2014,LMCSasync,GJPSY2019}) for the 
asynchronous multiparty session $\pi$-calculus \cite{HYC08,HondaYC16}
was proposed in \cite{GPPSY21}. In another direction of future work we will consider
an algorithm for checking subtyping which
is sound, but not complete with respect to~\cite{GPPSY21}.  Finally, a significant further extension could be to also encompass pre-emption mechanisms, see e.g.~\cite{BravettiZ09,Bravetti21}, which are often used in communication protocols.

\section*{Acknowledgement}
The work is partially funded by 
H2020-MSCA-RISE Project 778233 
(BEHAPI); and 
EPSRC EP/K034413/1, EP/K011715/1,
   EP/L00058X/1, EP/N027833/1, EP/N028201/1, EP/T014709/1,
   EP/V000462/1, EP/T006544/1 and NCSC VeTSS.

\bibliographystyle{alpha} \bibliography{async}

\end{document}